\def\BState{\State\hskip-\ALG@thistlm}
\newcommand{\drightarrow}{\overset{d}{\rightarrow}}
\newcommand{\supp}[2][]{\text{supp}_{#1}(#2)}
\newcommand{\supps}[1]{\text{supp}_{\Theta}(#1)}
\newcommand{\CM}{\text{CM}(\sigma)}
\newcommand{\GCM}{\text{GM}(\sigma)}
\newcommand{\SLID}[1][\sigma]{\text{SGM}(#1)}
\newcommand{\SLIDS}[1][\sigma]{\text{SLID}(#1)}
\newcommand{\inte}[1]{\text{int}\left(#1\right)}
\newcommand{\E}[2][]{\mathbb{E}_{#1}\left[#2\right]}
\renewcommand{\P}[2][]{\mathbb{P}_{#1}\left[#2\right]}
\newcommand{\Econd}[3][]{\mathbb{E}_{#1}\left[#2\middle\rvert #3\right]}
\newcommand{\Pcond}[3][]{\mathbb{P}_{#1}\left[#2\middle\rvert #3\right]}
\newcommand\blfootnote[1]{%
  \begingroup
  \renewcommand\thefootnote{}\footnote{#1}%
  \addtocounter{footnote}{-1}%
  \endgroup
}
\newtheorem{lem}{Lemma}
\newtheorem{thm}{Theorem}
\newtheorem{prop}{Proposition}
\newtheorem{cor}{Corollary}
\newenvironment{manualtheorem}[1]{%
  \manualtheoreminner
}{\endmanualtheoreminner}
\newtheorem*{fact}{Observation}
\newtheoremstyle{named}{}{}{\itshape}{}{\bfseries}{.}{.5em}{\thmnote{#3}}
\theoremstyle{named}
\newtheorem*{axiom}{Assumption}
\theoremstyle{definition}
\newtheorem{defn}{Definition}
\newtheorem{ex}{Example}
\newcounter{parentnumber}
\DeclareSymbolFont{CMletters}{OML}{cmm}{m}{it}
\DeclareMathSymbol{v}{\mathord}{CMletters}{`v}
\DeclareMathOperator*{\argmax}{arg\,max}
\DeclareMathOperator*{\argmin}{arg\,min}
\title{Robust Communication Between Parties with \emph{Nearly} Independent Preferences
\blfootnote{I would like to thank Dilip Abreu for feedback and guidance on this project, I am grateful to Elliot Lipnowski, Joyee Deb, Basil Williams, Doron Ravid, Erik Madsen, and Debraj Ray for helpful comments.}}
\date{First Version: August 16, 2022\\
Updated: \today
}
\author{Alistair Barton}
\begin{document}
\maketitle
\begin{abstract}
    We study finite-state communication games in which the sender's preference is perturbed by random private idiosyncrasies. Persuasion is generically impossible within the class of statistically independent sender/receiver preferences --- contrary to prior research establishing persuasive equilibria when the sender's preference is precisely transparent.
    
    Nevertheless, robust persuasion may occur when the sender's preference is only slightly state-dependent/idiosyncratic. This requires approximating an `acyclic' equilibrium of the transparent preference game, generically implying that this equilibrium is also `connected' --- a generalization of partial-pooling equilibria. It is then necessary and sufficient that the sender's preference satisfy a monotonicity condition relative to the approximated equilibrium.
    
    If the sender's preference further satisfies a `semi-local' version of increasing differences, then this analysis extends to sender preferences that rank pure actions (but not mixed actions) according to a state-independent order.
    
    We apply these techniques to study (1) how ethical considerations, such as empathy for the receiver, may improve or impede communication, and (2) when the sender may benefit from burning money as a persuasion tactic.  
\end{abstract}
Classic models of communication rely on the essential property that the sender's preference is strongly affected by their information. In cheap talk games, knowledge of the state typically leads to significantly different preferences over the receiver's action; in signalling games the state affects the sender's preference over the signal sent. 

A recent pattern of literature has focussed on the intriguing possibility of communication when the sender's preference is unaffected by their information (\cite{CH10,LR20}). However such models strictly rely on the sender's preference being \emph{transparent}, ie. both \emph{public} and state-independent. If the assumption of public preferences is relaxed, to study the broader set of independent preferences --- with sender idiosyncrasy --- we observe that communication is generically impossible, fundamentally because it relies on weakly dominated strategies. If we relax the assumption of state-independence it is not obvious whether communication will be preserved (and, if so, which communication schemes will survive). It may seem intuitive that aligning preferences or making the sender lying-averse would preserve communication; we will see that this is not always the case.

Our aim in this paper is to study how robust communication may occur in the space between the classic models, with their large state-dependence, and these recent models with no state-dependence, and, in particular, the role played by (limited) state-dependence in the sender's preference in shaping this communication. Our results characterize equilibria featuring the robustness of the classic models while only requiring slight state-dependence in the sender's preference.

In our model, the sender's preference $u_S$ is composed of a transparent utility $u^0$ and a state-dependent, possibly idiosyncratic, perturbation $V$:
\begin{equation}\label{eq:intro}
    u_S(a,m\rvert \theta,\omega)=u^0(a,m)+\epsilon V(a,m\rvert \theta,\omega)
\end{equation}
where $a$ is the receiver's action, $m$ is the sender's message\footnote{In cheap talk cases, message dependence may be omitted from the utility.}, $\theta$ is the state, and $\omega$ is the realized idiosycrasy. We search for equilibria that are robust to slight variations in the perturbation, constraining our model to finite states --- allowing a tractable definition of robustness --- and finite actions --- providing `regularity' in player utilities.

Upper-hemicontinuity of equilibria as $\epsilon\rightarrow 0$ ensures that the equilibria of our model will approximate equilibria of the transparent preference model (which we term \emph{candidate equilibria}). Such an equilibrium is robust to idiosyncrasy if \emph{for any} $V$, we obtain approximating equilibria as $\epsilon\rightarrow 0$. No informative candidate equilibrium satisfies this robustness check. We relax this criteria by constraining $V$ to be supported within some open set $\mathscr{O}$. If an equilibrium is approximated by such perturbations, we say it is \emph{$\mathscr{O}$-stable}.

$\mathscr{O}$-stability describes how robust equilibria may occur when the sender's idiosyncrasy is overwhelmed by some state dependent modification to their utility. This modification may describe the sender's ethical concerns about misleading the receiver, or some second-order material/technical cost to the sender (a well-known example is the Beer-Quiche signalling model of \cite{IK87}).

The main result of this paper is a precise characterization of $\mathscr{O}$-stability. A key element of our analysis is \emph{communication graphs}, these are bipartite graphs that describe the sender's strategy. The vertices of this graph are the states and equilibrium messages, with an edge connecting states with each message sent from that state with positive probability. Such a graph is a coarse representation of the equilibrium, as it omits the probability with which each message is sent, nevertheless its data about sender-indifferences are sufficient to study $\mathscr{O}$-stability.

Our analysis derives its structure from the two types of indifferences that are inherent to candidate equilibria. In the first place, the sender must be indifferent between messages to communicate. These indifferences generally cannot be obtained when the receiver only plays pure actions, so the receiver must be induced to randomize after most messages. This significantly constrains the equilibrium posterior beliefs of the receiver, and consequently the equilibrium strategies of the sender. In Section \ref{sec:receiver} we show that under weak (generic) conditions, this constrains the communication graph to be connected and/or contain a cycle.

Secondly, the indifferences given by the communication graph must be preserved in the approximating equilibrium. Under $\mathscr{O}$-stability, it is impossible for a cyclic set of indifferences to be robust to an open set of perturbations --- $\mathscr{O}$-stability requires an acyclic communication graph.

Combining these analyses shows that, under generic conditions, an equilibrium is $\mathscr{O}$-stable, for \emph{some} open set $\mathscr{O}$, iff its communication graph is acyclic and connected, ie. a tree graph.

This tree structure is essential for the next stage of our analysis which identifies precisely the maximal set of modifications $\mathscr{O}$ that makes an equilibrium $\mathscr{O}$-stable. This is precisely the set of perturbations satisfying a condition that we call \emph{graph monotonicity}.\footnote{Graph monotonicity bears some semblance to Mechanism Design's `cyclic monotonicity'. We omit the `cyclic' adjective to avoid confusion with the acyclicity of communication graphs.}

After characterizing $\mathscr{O}$-stability, which describes how communication can occur with slightly state-dependent preferences, Section \ref{sec:bigeps} studies whether this analysis can be extended to larger degrees of state-dependence. In general the answer is no, however we find a stronger condition than graph monotonicity that ensures an equilibrium can be preserved so long as the sender's ordinal ranking of pure actions remains state independent. This stronger condition is an analog to `single-crossing', weakened to apply to non-ordered settings.

We apply this technology to several example perturbations in Section \ref{sec:apps}. We first study ethical perturbations, relating to empathy and lying aversion. These describe situations where the sender's primary concern is their state-independent material payoff, however they defer to ethical considerations as a `tie-breaking' mechanism. Both empathy and lying aversion stabilize acyclic equilibria in two-state environments, but this does not always generalize to environments with more states.

The stabilizing influence of empathy is tied to whether the receiver and the candidate equilibrium order states in the same way. Specifically, if the receiver's problem satisfies a single crossing condition, providing an order on states and actions, then empathy stabilizes a candidate equilibrium if lower states recommend lower actions in that candidate equilibrium. A natural example of a salesperson selling vertically differentiated products illustrates how empathy may \emph{inhibit} communication when these conditions are not met. In this case persuasion requires the receiver to risk a costly error, but the empathetic sender is unwilling to induce this error, and only a pareto inferior babbling equilibrium survives.

Lying aversion, on the other hand, is in tension with the nature of candidate equilibria, which generally require obfuscating the state. Lies can be incentivized by a higher material payoff, however care must be taken that this does not interfere with the incentives of senders in other states. This is only possible for simple geometries of communication graphs --- specifically either there is only one state where the sender lies, or only one message is ever a lie.

It is possible to stabilize more candidate equilibria with modifications that are flexible with the truth. Our last example demonstrates this with a modification that allows the sender to make vague statements about the state, stabilizing all acyclic candidate equilibria. We interpret this example as communicating through a low-credibility disclosure technology --- it corresponds to perturbing the transparent preference model towards a model of verifiable disclosure.

We use the lying aversion model to explore how communicating through burning money may allow the sender to obtain higher utilities. This contrasts with candidate equilibria where burning money cannot improve equilibrium sender utility.
\subsubsection*{Related Literature}
This work primarily lies at the intersection of cheap talk communication models and Harsanyi purification literature. We provide a short genealogy of our model to situate our work:

\paragraph{Cheap Talk} The original cheap talk model (\cite{CS82}) specifically studies the case where the state-/action-space is $\Theta=\mathcal{A}=[0,1]$, the sender's preference is state-dependent, and the receiver has a unique preferred action for every belief. The existence of informative equilibria is a question of the parametric `alignment' between the sender and receiver preferences --- loosely resembling our notion of empathy. This requires state-dependence in the sender's preference: if the sender has a strict, state-independent preference (e.g. higher actions are better), then no information can be communicated in equilibrium --- if two messages led to different actions, the sender would always want to deviate to send the message corresponding to the preferred action. 

\cite{CH10} are the first (to my knowledge) to study the intriguing possibility of communication without state-dependence. They consider models with multi-dimensional state-/action-spaces $\Theta=\mathcal{A}\subseteq \mathbb{R}^n$ ($n>1$) where the receiver has quadratic preferences, and show that there exists an informative equilibrium for \emph{any} transparent sender preference. Their argument uses hyperplanes to partition the state space into sets whose induced actions the sender is indifferent between. This relies on the multi-dimensional space, as these partitions are obtained by continuously rotating the dividing hyperplane --- although a similar argument can establish persuasion in one dimensional state-spaces if the sender's preference exhibits quasiconvexity. The sender having a public preference is essential --- \cite{DK21} show that these equilibria vanish under slight uncertainty in the sender's preference, as Harsanyi's purification fails to apply (discussed more below). Our fragility theorem is a modest extension of this result in finite-action environments.

The model with a public sender preference is extended to a general setting by \cite{LR20}. They adopt a belief-based approach (as pioneered by \cite{KG11}), showing that communication can be understood in terms of the quasiconcave and quasiconvex envelopes of the sender's indirect utility over receiver beliefs. This shows that equilibria with communication exist iff there exists some Blackwell experiment that guarantees the sender a better/worse \emph{ex post} payoff than babbling. This result applies to models with finite state- and action-spaces, which is the environment we focus on.\footnote{We neglect the continuous state-space case in this paper. The topology of the belief space for continuous states adds complications that call for distinct techniques.} 

Independent of the current paper, \cite{SGGK23} consider a similar question of how to robustly approximate equilibria in binary state models. Specifically, they ask the question of which candidate equilibria can be approximated by Harsanyi-stable equilibria by introducing state dependence. We are able to study general finite-state models, due to a subtle difference in our notions of stability: their model adds state-dependence first, then adds some arbitrarily small idiosyncrasy --- in order of limits, this takes idiosyncrasy to zero, then state-dependence to zero --- while we entangle state-dependence and idiosyncrasy in the same perturbation $V$ (see eq. \ref{eq:intro}) --- thus requiring idiosyncrasy and state-dependence to vanish at the same rate. Our equilibria are thus robust to slight changes in state-dependence, which is not guaranteed for the former equilibria (for a more detailed comparison of the concepts, see Appendix \ref{app:weakstab}). That benefit aside, the key advantage of our refinement is that it allows tractable analysis of many-state models.


\paragraph{Purification} In addition to the aforementioned works, our work is also intimately related with the results of \cite{H73}, which studies mixed-strategy Nash equilibria in finite normal form games. The main result shows that (generically) such equilibria can be interpreted, not as players randomizing, but as players choosing pure strategies based on small idiosyncrasies in their preference --- so-called `purification' of mixed equilibria. This idiosyncrasy results in almost every player having a strict best response to the (seemingly) mixed strategies of other players. To other players, ignorant of the realized idiosyncrasies, the resulting actions appear mixed.

Harsanyi shows that adding idiosyncrasies is capable of approximating any mixed strategy equilibria in finite games for generic utility functions. But this result may fail for specific player preferences. A necessary condition is that if a mixed action $p$ is played in equilibrium, then each pure action $a\in\supp{p}$ should be a strict best response to some local variation in other players' actions (constrained to their equilibrium best responses).\footnote{Let $\mathcal{B}^i\subseteq \mathcal{A}^i$ be set of equilibrium best responses for player $i$. A more general requirement is that best response sets --- profiles of mixed actions $\text{BR}(A)\subseteq \bigtimes(\Delta\mathcal{B}^i)$ to which the subset of pure actions $A\subseteq\bigcup\mathcal{B}^i$ are best responses --- should not be degenerate, in that $\text{codim}(\text{BR}(A))= |A|-N(A)$ where $N(A)$ is the number of players whose actions feature in $A$.}

Transparent preferences may fail the preconditions to Harsanyi's theorem in two ways:
\begin{enumerate}
    \item In cheap talk, payoffs are invariant to message permutation, so if two messages induce the same pure action, then neither can be made a strict best response over the other.
    \item With state-independent preferences, a state-dependent strategy can never be a strict best response; its payoff can always be replicated by a state-independent strategy.
\end{enumerate}
In the first situation, a message-dependent perturbation may lead to either of these messages being sent much more (and perhaps from different states) than the other. However, this is a minor problem, as it can be avoided by eliminating duplicate messages in equilibrium. Indeed, whenever different messages lead to different mixed actions, it is possible to manipulate these actions to generate an arbitrary preference over messages in the sender population. By varying these actions, purification may be obtained.

The second problem occurs for any receiver strategy and hence is more intractable. It occurs because senders do not care about the state an action is chosen in, but only the probability an action is chosen. Thus, for \emph{any} receiver strategy, the sender is indifferent across strategies that induce the same marginal distribution over messages, regardless of how this distribution correlates with the state. Consequently, any non-constant (ie. informative) mapping from states to messages achieves the same utility as an appropriate randomization over constant (thus uninformative) mappings. As a result, there's no receiver strategy that can make a sender strictly prefer an informative strategy.

By adding even slight state dependence, we create the possibility for an informative strategy to be the unique best response to some set of receiver strategies, solving the second problem and allowing purification to potentially apply. However the persistence of the first problem means that Harsanyi's theorem cannot be applied out-of-the-box as some equilibria may remain fragile. This forces us to develop distinct techniques to show purification applies in our setting. 



\section{Model}
We consider a sender-receiver game, where the sender begins by observing the state, then chooses to send a message to the receiver. Upon reception, the receiver chooses an action, determining the utility obtained by both parties. 

The state space is formally a two dimensional probability space $(\Theta\times\Omega,2^\Theta\otimes\mathcal{F}_\omega,\mathbb{P}=\mu\otimes\mathbb{P}_\omega)$. The finite set $\Theta$ contains the receiver-relevant state distributed according to the prior $\mu$ --- we will use `state' to exclusively refer to these objects --- while $\Omega$ refers to sender `idiosyncrasy' types. These idiosyncrasies ar central to our notion of stability, as we seek equilibria that are robust to slight idiosyncratic variations in the sender's preference. Our framework assumes that states and idiosyncrasies are independent\footnote{This is WLOG, by redefining idiosyncrasies, we can reinterpret state-dependent idiosyncrasies as state-independent. See footnote \ref{fn:indep} for details.}. We assume that $(\Omega,\mathcal{F}_\omega,\mathbb{P}_\omega)$ is a rich probability space. 

The sender chooses a message from the message space $M$. We impose little structure on $M$, only equipping it with the discrete topology.

The receiver's action is chosen from the set of mixed actions $\Delta \mathcal{A}$ where the underlying pure action set $\mathcal{A}$ is finite. \footnote{In rare situations infinite actions may lead to pathological utility functions which require multiple technical assumptions to rule out. Our work can be applied to most `realistic' infinite action models with finite states.} 
We represent general actions in $\Delta \mathcal{A}$ by $p$ to emphasize their possibly mixed nature, reserving $a$ for actions that are restricted to be pure.

The sender and receiver have respective utility functions
\begin{align*}
    u_S:&(\Delta \mathcal{A}\times  M)\times (\Theta\times \Omega)\rightarrow \mathbb{R}\\
    u_R:&\Delta \mathcal{A}\times \Theta\rightarrow \mathbb{R}
\end{align*}
satisfying the von Neumann-Morgenstern axioms. We will say a sender preference is non-idiosyncratic if it is invariant over idiosyncrasies $\omega\in\Omega$, paralleling the notion of a state-independence preference being invariant over states $\theta\in\Theta$. We will find it convenient to adopt a compact designation for an action-message pair: $\pi\equiv (p,m)\in \Delta \mathcal{A}\times M$ (or $\alpha\equiv (a,m)\in \mathcal{A}\times M$ when the action is constrained to be pure). We sometimes refer to such $\pi$ as a message or an action to emphasize the relevant component.

We assume that the sender's utility only has a weak dependence on $\Theta\times \Omega$, admitting the following decomposition into a \emph{transparent} (ie. public, state-independent) component $u^0$, and a \emph{perturbation} $V$:
\begin{equation*}
    u_S(\pi\rvert \theta,\omega)=u^0(\pi)+\epsilon V(\pi\rvert \theta,\omega)
\end{equation*}
where $\epsilon$ controls how far the preference is from being transparent. Since state-dependence and idiosyncrasy factor through the perturbation $ V$, they are a second order effect, reflecting either slight adjustments to the sender's narrow material preferences, or slight ethical considerations.

Throughout this paper we will consider multiple varieties of perturbation, varying both whether it is idiosyncratic and state-dependent. In general we will use the letter `u' to represent state-independent components of the sender's preference and `v' for state-dependent components; capital letters will indicate a component is idiosyncratic, emphasizing its interpretation as a random variable. When a perturbation is not idiosyncratic or its idiosyncrasy $\omega$ is fixed, it is an element of $\mathbb{R}^{\mathcal{A}\times M\times\Theta}$ and we call it a \emph{modification} --- with perturbations it may be necessary to consider the population of idiosyncrasies, whereas modifications can be analyzed at the individual level.

While our motivation for this paper lies in studying cheap talk models, we allow the sender's utility to also feature message dependence. In this case, we are left with a signalling model with weak state dependence. Depending on where the message-dependence occurs, we may interpret the model in different ways. If the message-dependence is transparent, factoring through $u$, we interpret it as a money-burning model of communication\footnote{Money burning greatly increases the scope for communication, see Appendix \ref{app:burn} or \cite{ASB00}. This allows our results to be applied to settings such as the mediated communication environment of \cite{LY23}.}. If the message-dependence factors through the perturbation $V$, this may be interpreted as either a deontological moral cost to lying, or as a minor technical cost to the message space.

Our interest is in Perfect Bayesian Equilibria of this sender-receiver game. Formally, this is composed of a sender strategy $\mathscr{M}:\Theta\times \Omega \rightarrow \Delta M$, a receiver posterior posterior belief function $\nu:M\rightarrow \Delta\Theta$\footnote{For simplicity, we omit the belief over $\Omega$ as sender idiosyncrasy is irrelevant to the receiver}, and a receiver action $\mathscr{P}:M\rightarrow \Delta\mathcal{A}$ such that
\begin{subequations}
\begin{align}\label{eq:sIC}
    \mathscr{M}(\theta,\omega)\in&\argmax_{m\in \Delta M}u_S(\mathscr{P}(m),m\rvert \theta,\omega)\\
    \nu(\theta\rvert m)=&\Pcond{\theta}{ \mathscr{M}(\theta,\omega)=m}\qquad \text{if }m\in\supp{\mathscr{M}}
    \\\label{eq:rIC}
    \mathscr{P}(m)\in&\argmax_{p\in \Delta \mathcal{A}}\int u_R(p\rvert \theta)\,d\nu(\theta).
\end{align}
\end{subequations}
We denote the set of these equilibria by $\Sigma(u_S,u_R)$ if $u_R$ is allowed to vary (in Section \ref{sec:gen} we do this to find the generic structure), and $\Sigma(u_S)$ when we fix $u_R$.

As $\epsilon\rightarrow 0$, these equilibria will converge to equilibria of the transparent model $\sigma\in\Sigma(u^0,u_R)$, which we refer to as \emph{candidate equilibria}. We use this term as by analyzing these (fragile) candidate equilibria, we can determine if a nearby robust equilibrium exists when the sender has preference $u_S$. 

We are particularly interested in `non-trivial' equilibria, where the sender successfully persuades the receiver with positive probability:
\begin{defn}
    An equilibrium $\sigma$ is \emph{persuasive} if there is positive probability that the receiver chooses an action that is not a best response to their prior belief:
    $$
    \P{\mathscr{P}(\mathscr{M}(\theta,\omega))\in \argmax_{p\in\Delta\mathcal{A}}\int u_R(p\rvert \theta)\,d\mu(\theta)}<1.
    $$
\end{defn}

We maintain the following assumption throughout the paper:
\begin{axiom}[Assumption (S)]
For any pair of distinct pure actions $a\neq a'$, we have $u^0(a,m)\neq u^0(a',m')$ for any messages $m,m'\in M$.
\end{axiom}
Within cheap talk this is merely the assumption that $u$ is injective on the set of pure actions $\mathcal{A}$.

The content of this assumption is that the sender is never indifferent between messages that lead to different pure actions --- consequently mixed actions are necessary to obtain persuasive equilibria.

\subsection{Candidate Equilibria}\label{sec:candid}
We follow the belief-based approach of \cite{LR20} to recast the problem of finding perfect bayesian candidate equilibria. In this case of transparent sender preferences $u^0$, eq. \ref{eq:sIC} becomes 
\begin{equation*}
    \mathscr{M}(\theta)\in\argmax_{m\in \Delta M}u^0(\mathscr{P}(m),m)
\end{equation*}
and thus all messages sent by the sender's strategy must result in the same expected utility $\overline{u}$. We can then characterize equilibria by this set of messages $M_0$, a mapping $\nu:M_0\rightarrow \Delta \Theta$ from messages to the induced posterior beliefs, and action strategies $\mathscr{P}:\nu(M_0)\rightarrow  \Delta \mathcal{A}$.

To satisfy receiver-incentive compatibility (eq. \ref{eq:rIC}), given a belief $\nu\in\mathcal{B}:=\nu(M_0)$ the receiver must choose a best response $p$. This induces an indirect utility correspondence $u^\ast:\Delta \Theta\times M\Rightarrow \mathbb{R}$ for the sender over the receiver's belief:
\begin{equation}
    u^\ast(\nu,m):=\left\{u^0(p,m);p \in\argmax_{p'\in\Delta \mathcal{A}} \int u_R(p'\rvert \theta)\,d\nu(\theta)\right\}.
\end{equation}
Note that this will indeed be multi-valued for some belief $\nu$, unless the receiver possesses a dominant action (in which case the model is trivial).

Sender incentive compatibility (eq. \ref{eq:sIC}) requires that senders must be indifferent between messages they do send. This can be done in a way that delivers utility $\overline{u}$ to the sender whenever
\begin{subequations}
\begin{equation}\label{eq:IC}
    \overline{u}\in\bigcap_{m\in M_0}u^\ast(\nu(m),m).
\end{equation}
Where $\overline{u}$ is bounded by off-path temptations:
\begin{equation}\label{eq:offpath}
    \overline{u}\ge \max_{m\in M}\min_{\nu\in\Delta \Theta}\{u^\ast(\nu,m)\}.
\end{equation}

Meanwhile, the receiver's Bayesianism impose a \emph{plausibility constraint} on the induced beliefs given the prior belief $\mu$:
\begin{equation}\label{plaus}
    \mu\in\text{co}(\mathcal{B}),
\end{equation}
where $\text{co}(\cdot)$ is the convex hull operation on sets.

\end{subequations}

Thus an equilibrium consists of splitting the prior belief $\mu$ into a set of posterior beliefs $\mathcal{B}$ that all attain the same sender-utility. In Appendix \ref{app:burn} we illustrate how the notion of candidate equilibria is extremely powerful in money-burning models, allowing persuasion generically in non-trivial models. 

In the cheap talk case, the sender's preference is message-independent, so eq. \ref{eq:offpath} is trivial. Thus, as \cite{LR20} observe, the solution can be characterized by a set of beliefs $\mathcal{B}\subseteq \Delta \Theta$ and a sender utility $\overline{u}$ satisfying
\begin{align*}
    \mu\in&\text{co}(\mathcal{B})&\text{and}&&
    \overline{u}\in& \bigcap_{\nu\in\mathcal{B}}u^\ast(\nu).
\end{align*}
The following example from their paper illustrates how their model provides persuasive (and pareto improving) equilibrium whenever the sender finds the default action of the receiver (ie. their action under the prior belief) \emph{ex post} inferior to full revelation\footnote{More generally, persuasive cheap talk equilibria exist whenever $|M|\ge|\Theta|$ and the receiver's best response to $\mu$ is \emph{ex post} inferior to the receiver's actions after some Blackwell experiment.}:


\begin{ex}[Binary-state salesperson] \label{ex:bi}
Suppose there is a salesperson trying to convince a consumer to make a purchase between two products $A$ and $B$. There are two possible states of the world: in state $\theta=\theta_a$ product $A$ is good and product $B$ is bad, while in state $\theta=\theta_b$ product $B$ is good while product $A$ is bad. The consumer believes each state is equally likely: their prior belief is $\mu=\frac{1}{2}(s_a\oplus s_b)$. For simplicity, we refer to a belief $\nu\in\Delta\{\theta_a,\theta_b\}$ by the probability $\nu(\theta_b)$ it puts on state $\theta_b$.

The salesperson knows which product is of good quality, but only cares about their commision, which is higher for product $B$ than for product $A$. Based on the salesperson's recommendation, the consumer can purchase either product, or walk away without making a purchase --- we label their action by the product purchased $\mathcal{A}=\{A,B,\emptyset\}$, where $\emptyset$ means no purchase. This leads to the following utility tables for the receiver (consumer) and sender (salesperson) respectively:
\begin{center}
    \begin{tabular}{r|c|c|c}
         $_\theta\smallsetminus ^\alpha$&$A$&$\emptyset$&$B$  \\\hline
         $u_R(\,\cdot\,\rvert \theta_a)$& 1&3/4&0\\
         $u_R(\,\cdot\,\rvert \theta_b)$&0&3/4&1
    \end{tabular}\hspace{2cm}
    \begin{tabular}{r|c|c|c}
         &$A$&$\emptyset$&$B$  \\\hline
         $u^0$& 1/2&0&1
    \end{tabular}
\end{center}

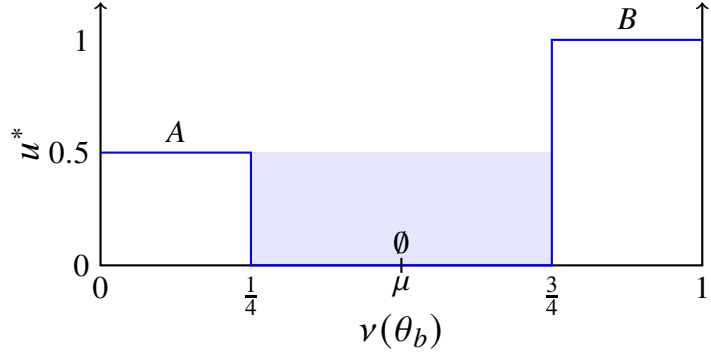
\begin{figure}
    \centering
    \begin{tikzpicture}
    \fill[blue!10!white] (2,1.5) -- (6,1.5) -- (6,0) -- (2,0) -- cycle;
    \draw[thick,<->] (0,3.5) -- (0,0) -- (8,0) -- (8,3.5);
    \draw[thick, blue] (0,1.5) -- (2,1.5) -- (2,0) -- (6,0) -- (6,3) -- (8,3);
    \draw[thick] (4,0.1) -- (4,-0.1);
    \node[anchor = east] at (0,0) {$0$};
    \node[anchor = east] at (0,1.5) {$0.5$};
    \node[anchor = east,rotate=90] at (-1,2) {\large$u^\ast$};
    \node[anchor = east] at (0,3) {$1$};
    \node[anchor = north] at (4,0) {$\mu$};
    \node[anchor = north] at (4,-0.5) {{\large$\nu(\theta_b)$}};
    \node[anchor = south] at (7,3) {$B$};
    \node[anchor = south] at (1,1.5) {$A$};
    \node[anchor = south] at (4,0) {$\emptyset$};
    \node[anchor = north] at (0,0) {$0$};
    \node[anchor = north] at (2,0) {$\tfrac{1}{4}$};
    \node[anchor = north] at (6,0) {$\tfrac{3}{4}$};
    \node[anchor = north] at (8,0) {$1$};
    \end{tikzpicture}
    \caption{The sender's indirect utility in Example \ref{ex:bi}. The shaded region is the range of utilities attainable by cheap talk for a given prior.}
    \label{fig:LRex}
\end{figure}

The sender's indirect utility for this problem is illustrated in Fig. \ref{fig:LRex}. Notably, full-revelation \emph{ex post} dominates babbling. The sender-utilities that are attainable for a given prior are given by the blue shaded region above the prior. The sender can obtain utility $\overline{u}$ only if $\overline{u}\in[0,1/2]$, since these are the only utilities supported on both sides of the prior belief.

This can be done by inducing the beliefs $\nu_0=1/4$ and $\nu_1=3/4$, convincing the receiver to choose $\mathscr{P}(\nu_0)=2\overline{u}A\oplus(1-2\overline{u})\emptyset$ and $\mathscr{P}(\nu_1)=\overline{u}B\oplus(1-\overline{u})\emptyset$ respectively. These posteriors correspond to the sender playing the mixed strategy $\mathscr{M}(\theta_a)=\frac{3}{4}m_0\oplus \frac{1}{4}m_1$ and $\mathscr{M}(\theta_b)=\frac{1}{4}m_0\oplus \frac{3}{4}m_1$.

The utility $\overline{u}=1/2$ can also be obtained by inducing posteriors $\nu_0\le 1/4$ and $\nu_1=3/4$, convincing the receiver to play $\mathscr{P}(\nu_0)=A$ and $\mathscr{P}(\nu_1)=\frac{1}{2}(B\oplus\emptyset)$ respectively. This corresponds to the sender playing the mixed strategy $\mathscr{M}(\theta_a)=\frac{2(1-\nu_0)}{3-4\nu_0} m_0\oplus\frac{1-2\nu_0}{3-4\nu_0} m_1$ and $\mathscr{M}(\theta_b)=\frac{2\nu_0}{3-4\nu_0}m_0\oplus\frac{3-6\nu_0}{3-4\nu_0} m_1$. The case $\nu_0=0$ deserves particular attention: not only does it correspond to partial revelation, but it is in some sense the `simplest', and furthermore both receiver- and sender-optimal.
\end{ex}
Note that there are a continuum of equilibria in this example. Under assumption (S), such models either induce a continuum of equilibria or no persuasive equilibrium --- as the intersection in eq. \ref{eq:IC} is either always empty or can be made to contain an interval. Over the next few sections we will establish a robustness criteria that eliminates all but a finite number of these candidate equilibria.

\section{Communication Graphs}\label{sec:receiver}\label{sec:gen}
Our analysis characterizes equilibria by the on-path subgames they induce. Formally, given an equilibrium $\sigma$ composed of strategies $(\mathscr{M}_\sigma,\mathscr{P}_\sigma)$, define $\sigma(\theta):=\{(p,m);m\in\supp{\mathscr{M}_\sigma(\theta)},p\in\mathscr{P}_\sigma(m)\}$ to be the equilibrium paths induced by the event that the state $s$ is realized --- ie. message-action pairs that occur when the state is $\theta$.\footnote{For cheap talk games (ie. when preferences are constrained to be message independent), one should consolidate messages that result in the same choice of (possibly mixed) action.

Note that this structure omits the probability with which a subgame $\sigma(\theta)$ occurs. We will find that this probability is identified by a communication graph for generic acyclic equilibria.}

\begin{defn}
    The \emph{communication graph} $G(\sigma)$ associated with an equilibrium $\sigma$ is the undirected bipartite graph whose nodes are $\Theta\sqcup\sigma(\Theta)$ and whose edges are
    $\big\{\{\theta,\pi\};\pi\in \sigma(\theta),\theta\in \Theta\big\}.$
    
    A communication graph (and the equilibrium it represents) is 
\begin{itemize}
    \item \emph{connected} if it contains a path between any two nodes,
    \item \emph{acyclic} if it contains at most one path between any two nodes,
    \item a \emph{tree} if it is both acyclic and connected,
    \item a \emph{forest} if it is acyclic but not connected.
\end{itemize}
\end{defn}

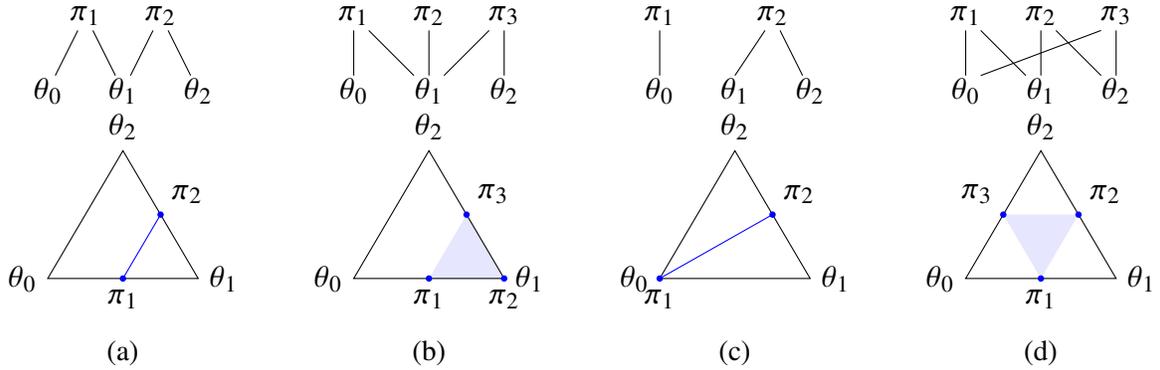
\begin{figure}
    \centering
    \begin{subfigure}{0.24\textwidth}\centering
    \begin{tikzpicture}
        \node at (3,0) {$\theta_2$};
        \node at (2,0) {$\theta_1$};
    \node at (1,0) {$\theta_0$};
    \node at (1.5,1) {$\pi_1$};
    \node at (2.5,1) {$\pi_2$};
    \draw (1.1,0.2) -- (1.4,0.8);
    \draw (2.1,0.2) -- (2.4,0.8);
    \draw (1.9,0.2) -- (1.6,0.8);
    \draw (2.9,0.2) -- (2.6,0.8);
    \draw (1,-2.5) -- (2,-0.8) -- (3,-2.5) -- cycle;
    \node[anchor = east] at (1,-2.5) {$\theta_0$};
    \node[anchor = south] at (2,-0.8) {$\theta_2$};
    \node[anchor = west] at (3,-2.5) {$\theta_1$};
    \draw[blue] (2,-2.5) -- (2.5,-1.65);
    \filldraw [blue] (2,-2.5) circle (1 pt) node[anchor = north] {\color{black} $\pi_1$};
    \filldraw [blue] (2.5,-1.65) circle (1 pt) node[anchor = south west] {\color{black} $\pi_2$};
    \end{tikzpicture}
    \caption{}
    \label{fig:graphtreea}
    \end{subfigure}
    \begin{subfigure}{0.24\textwidth}\centering
    \begin{tikzpicture}
        \node at (7,0) {$\theta_2$};
        \node at (6,0) {$\theta_1$};
    \node at (5,0) {$\theta_0$};
    \node at (5,1) {$\pi_1$};
    \node at (6,1) {$\pi_2$};
    \node at (7,1) {$\pi_3$};
    \draw (5,0.2) -- (5,0.8);
    \draw (6,0.2) -- (6,0.8);
    \draw (7,0.2) -- (7,0.8);
    \draw (6.2,0.2) -- (6.8,0.8);
    \draw (5.8,0.2) -- (5.2,0.8);
    \fill[blue!10!white] (6,-2.5) -- (6.5,-1.65) -- (7,-2.5) -- cycle;
    \draw (5,-2.5) -- (6,-0.8) -- (7,-2.5) -- cycle;
    \node[anchor = east] at (5,-2.5) {$\theta_0$};
    \node[anchor = south] at (6,-0.8) {$\theta_2$};
    \node[anchor = west] at (7,-2.5) {$\theta_1$};
    \filldraw [blue] (6,-2.5) circle (1 pt) node[anchor = north] {\color{black} $\pi_1$};
    \filldraw [blue] (6.5,-1.65) circle (1 pt) node[anchor = south west]  {\color{black} $\pi_3$};
    \filldraw [blue] (7,-2.5) circle (1 pt) node[anchor = north] {\color{black} $\pi_2$};
    \end{tikzpicture}
    \caption{}
    \label{fig:graphtreeb}
    \end{subfigure}
    \begin{subfigure}{0.24\textwidth}\centering
    \begin{tikzpicture}
    \node at (11,0) {$\theta_2$};
    \node at (10,0) {$\theta_1$};
    \node at (9,0) {$\theta_0$};
    \node at (9,1) {$\pi_1$};
    \node at (10.5,1) {$\pi_2$};
    \draw (9,0.2) -- (9,0.8);
    \draw (10,0.2) -- (10.4,0.8);
    \draw (10.9,0.2) -- (10.6,0.8);
    \draw[blue] (9,-2.5) -- (10.5,-1.65);
    \draw (9,-2.5) -- (10,-0.8) -- (11,-2.5) -- cycle;
    \node[anchor = east] at (9,-2.5) {$\theta_0$};
    \node[anchor = south] at (10,-0.8) {$\theta_2$};
    \node[anchor = west] at (11,-2.5) {$\theta_1$};
    \filldraw [blue] (9,-2.5) circle (1 pt) node[anchor = north] {\color{black} $\pi_1$};
    \filldraw [blue] (10.5,-1.65) circle (1 pt) node[anchor = south west] {\color{black} $\pi_2$};
    \end{tikzpicture}
    \caption{}
    \label{fig:graphforest}
    \end{subfigure}
    \begin{subfigure}{0.24\textwidth}\centering
    \begin{tikzpicture}
    \node at (15,0) {$\theta_2$};
    \node at (14,0) {$\theta_1$};
    \node at (13,0) {$\theta_0$};
    \node at (13,1) {$\pi_1$};
    \node at (14,1) {$\pi_2$};
    \node at (15,1) {$\pi_3$};
    \draw (14,0.2) -- (14,0.8);
    \draw (15,0.2) -- (15,0.8);
    \draw (13,0.2) -- (13,0.8);
    \draw (13.2,0.8) -- (13.8,0.2);
    \draw (14.2,0.8) -- (14.8,0.2);
    \draw (13.2,0.2) -- (14.8,0.8);
    \draw (13,-2.5) -- (14,-0.8) -- (15,-2.5) -- cycle;
    \node[anchor = east] at (13,-2.5) {$\theta_0$};
    \node[anchor = south] at (14,-0.8) {$\theta_2$};
    \node[anchor = west] at (15,-2.5) {$\theta_1$};
    \fill[blue!10!white] (14,-2.5) -- (13.5,-1.65) -- (14.5,-1.65) -- cycle;
    \filldraw [blue] (14,-2.5) circle (1 pt) node[anchor = north] {\color{black} $\pi_1$};
    \filldraw [blue] (14.5,-1.65) circle (1 pt) node[anchor = south west] {\color{black} $\pi_2$};
    \filldraw [blue] (13.5,-1.65) circle (1 pt) node[anchor = south east]  {\color{black} $\pi_3$};
    \end{tikzpicture}
    \caption{}
    \label{fig:graphcyc}
    \end{subfigure}
    \caption{Example communication graphs (top) and corresponding induced beliefs (bottom) in a three-state game. The nodes of the graphs correspond to states $\{\theta_i\}$ and message-action pairs $\{\pi_i\}$ corresponding to the beliefs illustrated below. The line/shaded regions within the simplices correspond to the priors that render the given posteriors plausible --- ie. the convex hull of these posteriors. (a), (b) are trees, (c) is a forest; only (d) contains a cycle.}
    \label{fig:graphs}
\end{figure}
Note that communication graphs are a coarse representation of equilibria, as they do not specify the specific posterior beliefs induced, merely what states the posterior belief designates as \emph{plausible} (ie. within its support). However this, along with the receiver's strategy $\mathscr{P}$, is sufficient to study sender-incentive compatibility in our setting.

Some example communication graphs, and the corresponding beliefs, are illustrated in Figure \ref{fig:graphs}.

Acyclic equilibria can loosely be thought as a class of `simple' equilibria, not requiring the receiver's mixed actions to maintain a complex network of sender indifferences. For our purposes, acyclicity implies that the message structure can be partially ordered relative to some `root' node, allowing an inductive analysis of equilibria.

Non-connected equilibria can be thought of as a class of `separating' equilibria. For such equilibria there is a partition $\Theta=\Theta_1\sqcup \Theta_2$ that is communicated with certainty, ie. for all posteriors $\nu\in\mathcal{B}_\sigma$, either $\supp{\nu}\subseteq \Theta_1$ or $\supp{\nu}\subseteq \Theta_2$. In this sense non-connectedness is related to the notion of a separating equilibria in signalling games --- in contrast with the (partial) pooling of connected equilibria.

\label{sec:gentop}

Taking these properties to the extreme, we arrive at pure strategy equilibria, where $|\sigma(\theta)|=1$ for all states.\footnote{See \cite{GS07} for a notable analysis of pure equilibria in finite-state/-action models.} These are acyclic (indeed there is no path between any two messages), and, aside from babbling, non-connected. Due to the requirement, imposed by Assumption (S), that receiver's randomize after all but one message, such equilibria are rare:
\begin{fact}
Persuasive pure-strategy candidate equilibria exist only if there is a partition of states $\Theta=\Theta_0\sqcup \dots \sqcup \Theta_n$ such that the conditioned prior beliefs $$
\mu_{\Theta_i}:=\left(\tfrac{\mu(\{\theta\}\cap \Theta_i)}{\mu(\Theta_i)}\right)_{\theta\in \Theta}
$$ induce mixed actions for $i\ge1$. 
\end{fact}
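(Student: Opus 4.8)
The plan is to read the partition straight off the sender's (pure) strategy — but grouping states by the receiver \emph{action} they induce rather than by the message sent — and then to combine Assumption~(S) with the fact, recorded in Section~\ref{sec:candid}, that all on-path messages yield the sender the same utility $\overline u$ in a candidate equilibrium.

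Concretely, I would first fix notation: a pure-strategy candidate equilibrium has sender strategy a function $\mathscr{M}\colon\Theta\to M$, so $\sigma(\theta)=\{(\mathscr{P}(\mathscr{M}(\theta)),\mathscr{M}(\theta))\}$. Declare $\theta\sim\theta'$ when $\mathscr{P}(\mathscr{M}(\theta))=\mathscr{P}(\mathscr{M}(\theta'))$ as elements of $\Delta\mathcal{A}$, write $\Theta=\Theta_0\sqcup\dots\sqcup\Theta_n$ for the equivalence classes, and let $p_i$ denote the (pairwise distinct) action induced throughout $\Theta_i$. The first substantive step is to verify that $\mu_{\Theta_i}$ genuinely \emph{induces} $p_i$, i.e.\ $p_i\in\argmax_p\int u_R(p\mid\theta)\,d\mu_{\Theta_i}(\theta)$. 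This follows because $\Theta_i$ is the disjoint union of the preimages $\mathscr{M}^{-1}(m)$ of the on-path messages $m$ with $\mathscr{P}(m)=p_i$; Bayes' rule gives $\nu(m)=\mu_{\mathscr{M}^{-1}(m)}$ and receiver optimality makes $p_i$ a best response to each such $\nu(m)$; since the best-response region of a fixed pure action is convex (an intersection of half-spaces), a mixed action is a best response exactly when every pure action in its support is, and $\mu_{\Theta_i}$ is a convex combination of the $\nu(m)$, so $p_i$ is a best response to $\mu_{\Theta_i}$ as well.

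Next I would show at most one of the $p_i$ is pure. Since all on-path messages give the sender utility $\overline u$, for representatives $m_i$ of block $i$ and $m_j$ of block $j$ with $i\ne j$ we have $u^0(p_i,m_i)=u^0(p_j,m_j)=\overline u$, while $p_i\ne p_j$ by construction; were both pure, Assumption~(S) would force $u^0(p_i,m_i)\ne u^0(p_j,m_j)$ — a contradiction. So at most one block, which I relabel $\Theta_0$, can induce a pure action, and every $\Theta_i$ with $i\ge 1$ induces a mixed $p_i$, which by the previous step is exactly ``$\mu_{\Theta_i}$ induces mixed actions''. Finally I would dispatch the degenerate case: if there were only one block then $\mu_{\Theta_0}=\mu$ and the receiver, who plays $p_0=\mathscr{P}(m)$ after every on-path message $m$, always plays a best response to the prior — contradicting persuasiveness. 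Hence there are at least two blocks, and since at most one is pure we get $n\ge 1$.

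The only real obstacle is the ``$\mu_{\Theta_i}$ induces $p_i$'' step, i.e.\ passing from ``$p_i$ is a best response after each separate on-path message coming from $\Theta_i$'' to ``$p_i$ is a best response to the pooled belief $\mu_{\Theta_i}$'', which rests on convexity of the best-response region of a fixed pure action together with the characterization of mixed best responses by tied supports — elementary, but it is the one place the argument does more than bookkeeping. A secondary point worth being careful about is using the action-induced partition rather than the raw message partition, since two distinct on-path messages can in principle induce the same pure action; coarsening to the action partition (harmless, and in the cheap-talk case just the usual consolidation of duplicate messages) is exactly what makes the ``at most one pure block'' count valid.
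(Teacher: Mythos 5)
Your proof is correct and follows essentially the route the paper intends: the paper states this as an unproved Observation whose implicit argument is exactly that in a pure-strategy equilibrium each posterior is the prior conditioned on a cell of the sender's strategy, and Assumption (S) forces the receiver to mix after all but (at most) one message. Your only addition --- coarsening the message partition to the action partition and using convexity of pure-action best-response regions to pass to the pooled conditional priors --- is a correct and sensible refinement that handles duplicate messages inducing the same pure action.
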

For example, Figure \ref{fig:graphforest} can only be an equilibrium if the receiver is indifferent between two actions at the precise belief induced by $\pi_2$, corresponding to a projection of the prior $\mu$ onto $\Delta\{\theta_1,\theta_2\}$. This requires very specific receiver utilities; whenever this indifference occurs, a slight perturbation of receiver utilities will make such an equilibrium impossible. 

This may be attributed to the set of pure strategies only being capable of inducing a finite set of possible posterior distributions $\mathcal{B} \subset \Delta \Theta$ which are unlikely to coincide with the (usually) measure-0 set of posteriors inducing randomization. This observation is generalized in the following theorem:

\begin{thm}[The generic case]\label{thm:gen}
For generic $u_R\in\mathbb{R}^{\mathcal{A}\times \Theta}$, the set of candidate equilibria $\Sigma(u^0,u_R)$ satisfies the following properties:
\begin{enumerate}[label=(\textbf{\alph*})]
    \item there are no forest equilibria,
    \item all acyclic equilibria involve precisely one pure action.
    \item all mixed actions in acyclic equilibria are binary (ie. $p$ with $|\supp{p}|=2$),
    \item there are finitely many acyclic equilibria (up to information equivalence\footnote{Two equilibria are informationally equivalent if they induce the same distribution of posterior beliefs in the receiver.}), each uniquely identified by its communication graph.
\end{enumerate}
\end{thm}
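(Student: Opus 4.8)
The plan is to treat candidate equilibria as solutions to a finite system of equations/inequalities parameterized by $u_R$, and to show that the "bad" configurations (forests, multiple pure actions, non-binary mixed actions) force $u_R$ into a lower-dimensional subvariety, while the "good" configurations (trees with one pure action and binary mixed actions) are rigid enough that only finitely many survive for each fixed generic $u_R$. The underlying observation, already flagged in the excerpt, is that a communication graph encodes a system of indifference conditions on the receiver: every message at which the receiver mixes over a set $S_\pi\subseteq\mathcal{A}$ of actions contributes $|S_\pi|-1$ equalities $\int u_R(a\mid\theta)\,d\nu(\pi) = \int u_R(a'\mid\theta)\,d\nu(\pi)$ on the posterior $\nu(\pi)\in\Delta\Theta$. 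I would first set up a careful dimension count: a connected component of the graph with $k$ message-nodes and $j$ state-nodes has, by the plausibility/Bayes constraints (eq.~\ref{plaus} together with $\nu(\pi)=$ conditional posterior), a space of consistent sender strategies of dimension equal to (number of edges) $- j$ (the edges are the free message-weights per state, minus the $j$ normalizations), while the posteriors it induces are pinned down by those weights; acyclicity of the component means (number of edges) $= k + j - 1$, so the strategy space has dimension $k-1$.

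**Step (a): no forests.** A forest with $\geq 2$ components partitions $\Theta$ into $\Theta_1\sqcup\Theta_2$ each communicated with certainty (as noted after the tree definitions). By Assumption~(S) at most one message in the whole equilibrium can induce a pure action, so at least one component, say the one over $\Theta_2$, has all of its messages inducing proper mixed actions, hence at least one indifference equation per message. Counting: that component over $\Theta_2$ contributes at least $k_2$ independent polynomial equations in $u_R$ (one per message; the posteriors are fixed rational functions of the finitely-many combinatorial "types" once we fix which simplex face and which action-support each message uses), but its strategy space has only dimension $k_2 - 1$ — so one equation is genuinely a constraint on $u_R$ rather than on the strategy. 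Ranging over the finitely many combinatorial types of forest, the union of these constraint-loci is a finite union of proper algebraic subsets of $\mathbb{R}^{\mathcal{A}\times\Theta}$, hence measure zero; generic $u_R$ avoids it. The same counting argument, run on a connected graph, shows a tree component is exactly critically constrained and a component containing a cycle is over-determined in its strategy space (more edges than $k+j-1$) but correspondingly has an extra indifference, so it is not excluded — this is why only forests and not all cyclic graphs are ruled out here.

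**Steps (b), (c), (d).** For (b): in an acyclic (hence, by (a), connected — a tree) equilibrium, Assumption~(S) already forbids two messages inducing the same pure action, and a second message inducing a \emph{different} pure action would put the sender's utility $\overline u$ simultaneously at two distinct values $u^0(a),u^0(a')$, impossible; so at most one pure action, and a dimension/consistency check (a tree must have a leaf among its message-nodes, and a leaf-message is sent only from one state, so it fully reveals that state — iterating, some message must induce a pure action unless the tree is a single edge) shows precisely one. For (c): if some message mixed over $\geq 3$ actions, that message alone contributes $\geq 2$ indifference equations while a leaf-deletion argument shows the "local" strategy freedom it can absorb is only $1$; again this pushes $u_R$ into a proper subvariety, excluded generically. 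For (d): with the graph a tree, one pure action, and all mixed actions binary, the indifference equations exactly determine each posterior $\nu(\pi)$ as a function of $u_R$ (binary mixing = one linear equation on $\nu(\pi)\in\Delta\Theta$, and the tree structure propagates these), and then the plausibility constraint $\mu\in\mathrm{co}(\mathcal B)$ together with Bayes consistency pins the sender's strategy uniquely; since there are only finitely many tree graphs on $\Theta\sqcup(\text{messages})$ up to relabeling, there are finitely many equilibria up to information equivalence, each identified by its graph.

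**The main obstacle** I anticipate is making the dimension/genericity argument fully rigorous: one must (i) reduce to finitely many "combinatorial types" (graph + choice, per message, of which face of $\Delta\Theta$ its posterior lives on and which action-set it is indifferent over), (ii) for each type exhibit the relevant map $(u_R,\text{strategy})\mapsto(\text{indifference residuals})$ as a polynomial/semialgebraic map and show that on the forest/non-binary/multi-pure types the projection to $u_R$-space misses an open dense set — this requires checking the Jacobian has the expected rank, i.e. that the indifference equations are \emph{genuinely independent} generically and not forced to be redundant by some hidden identity among posteriors of a tree. Handling the boundary cases (posteriors on lower-dimensional faces, ties in $u_R$ among actions never used) cleanly, and confirming that "generic" can be taken to mean "complement of a finite union of proper algebraic sets" simultaneously for all four claims, is where the real work lies; the combinatorial bookkeeping is routine once the right parameterization is fixed.
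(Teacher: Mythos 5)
Your overall strategy is the same as the paper's: the proof in Appendix \ref{app:proofs} is exactly a dimension count made rigorous, via an inductive application of the transversality theorem up a rooted tree --- the set of feasible posteriors for a message, holding its descendants' strategies fixed, is a one-parameter family that generically meets the codimension-one indifference sets $\mathcal{I}(A)$, $|A|=2$, in finitely many points and misses the codimension-two sets $|A|\ge 3$ entirely. (One presentational difference: the paper first proves the statement for fixed $u_R$ satisfying assumptions (R)/(R$\ast$) and \emph{generic prior} $\mu$, then translates to generic $u_R$ by perturbing the indifference sets instead of the feasible posterior sets; you perturb $u_R$ directly, which is fine.) Your anticipated ``main obstacle'' --- verifying that the indifference equations are genuinely independent --- is precisely what the inductive transversality argument supplies.

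There is, however, one concretely wrong step. In (b) you claim ``a tree must have a leaf among its message-nodes, and a leaf-message is sent only from one state, so it fully reveals that state.'' This is false: in the path graph $\theta_0$--$\pi_1$--$\theta_1$--$\pi_2$--$\theta_2$ (Figure \ref{fig:graphtreea} of the paper) both leaves are states and both messages are internal, so no message is forced to reveal a state. The correct argument for ``at least one pure action'' is the over-determination count you already ran in step (a): a tree component with $k$ messages has a $(k-1)$-dimensional family of consistent strategies, and if \emph{every} message induces a proper mixed action you get at least $k$ indifference constraints, which generically has no solution. The paper's proof does exactly this at the root node (``if we consider the rooted message\dots the posteriors do not intersect any set $I\in\mathcal{I}_2$, thus one action must be pure''); the role of the leaf observation in the paper is only the converse direction, namely that \emph{if} a message-node is a leaf its posterior is degenerate and its action is generically pure, which is why leaves can be assumed to be states in the induction. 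Your step (d) is also stated too strongly --- the binary indifference is one linear equation on a posterior living in a simplex of dimension $|\mathcal{N}(\pi)|-1$, so it does not ``exactly determine'' each posterior; what the transversality argument delivers is finitely many solutions per combinatorial type, not uniqueness of the posterior from local data. Neither issue changes the architecture of the proof, but both sub-arguments need to be replaced by the global count you set up in (a).
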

The focus on acyclic equilibria will be justified in the next section when we focus on the sender's problem. 

The reasoning is in essence a dimensionality argument. Each sender indifference allows another degree of freedom (randomization) in the sender's strategy, thus a tree equilibrium with on-path messages $M_0$ allows posteriors to vary over an $|M_0|-1$-dimensional subset of $(\Delta \Theta)^{M_0}$. A fixed receiver's strategy is also a best response to some subset of $(\Delta \Theta)^{M_0}$. For the receiver's strategy to be receiver-incentive compatible with a sender strategy having this communication graph, these sets must intersect, (generically) requiring a receiver strategy that is a best response to a set of codimension at most $|M_0|-1$. However, if every action taken is a binary mixed action, the corresponding strategy is a best response to a set of beliefs with codimension $|M_0|$ (the intersection of $|M_0|$ sets of codimension 1) for generic $u_R$. The only way to reduce this codimension is by replacing one mixed action with a pure action, giving properties (b)-(c).

For each tree communication graph, these intersections occur finitely many times (each corresponding to a distinct set of mixed actions), and since there are finitely many trees with states $\Theta$ (up to permutation of messages), this results in finitely many acyclic equilibria, property (d).

Forest equilibria are composed of at least two trees, each requiring a pure action (by the above reasoning), since the sender is never indifferent between distinct pure actions, this is not sender-incentive compatible, so forest equilibria are generically impossible, property (a).

In Appendix \ref{app:proofs}, we show that properties (a)-(d) are also obtained for generic prior $\mu$, under mild assumptions on $u_R$. 

This elimination of forest equilibria will allow us to simplify our analysis in the next section.

\section{Fragility}\label{sec:frag}
Our first robustness question is how the slightest amount of sender idiosyncrasy affects persuasive equilibria. This is desirable since perfect knowledge of another's preferences (including, necessarily, their risk aversion) is implausible.

For this section it suffices to consider state-independent perturbations to the senders utility: $V\equiv U:\Delta\mathcal{A}\times M\times\Omega\rightarrow \mathbb{R}$. We begin with a definition from \cite{H73}. Take the notation $\Gamma((u_i)_i)$ to refer to a game $\Gamma$ where player $i$ has utility function $u_i$. For idiosyncratic perturbations $(U_i)$, consider the perturbed utility functions
$$
u_i^{\epsilon U_i}(\pi\rvert \omega):=u^0_i(\pi)+\epsilon U_i(\pi\rvert \omega).
$$
We understand a candidate equilibrium as approximating an environment with idiosyncrasy when the following condition is satisfied:
\begin{defn}[Harsanyi stability]
An equilibrium $\sigma\in \Sigma(u^0_S,u_R^0)$ of the game with preferences $u_i^0$ is \emph{Harsanyi-stable} if for all bounded perturbations $U_i$, there exist equilibria $\sigma^\epsilon\in\Sigma(u_S^{\epsilon U_S},u_R^{\epsilon U_R})$ of the game with preferences $u_i^{\epsilon U_i}$ such that $\sigma^\epsilon\rightarrow \sigma$ in distribution as $\epsilon\rightarrow 0$.

Otherwise the equilibrium is \emph{Harsanyi-fragile}.

\end{defn}

We will focus on sender idiosyncrasies, as receiver idiosyncrasies are generically unproblematic.

\cite{H73} shows that this notion of stability is a property of Nash Equilibria in generic finite games. While these idiosyncrasies change individual players' preferences, other players, ignorant to the realization of these idiosyncrasies, will not observe a difference in probabilistic behaviour. This provides an interpretation of mixed strategy equilibria as being a manifestation of unobserved idiosyncrasies.


Harsanyi's argument relies on the strategy space not featuring redundancies. This condition fails in models of cheap talk with transparent preferences for two distinct reasons:
\begin{enumerate}
    \item The sender's utility is invariant to relabelling of messages.
    \item The sender's utility is invariant to `informativeness'. Fixing the receiver with an arbitrary strategy $\mathscr{P}$, the sender's utility is equal under the following strategies: 
    \begin{enumerate}
        \item Informative strategy: send each message $m$ with a state-dependent probability $p_m(\theta)$.
        \item Uninformative strategy: send each message $m$ with a state-independent probability $\sum_\theta p_m(\theta)\mu(\theta)$ equal to the total probability that the message is sent under the informative strategy.
    \end{enumerate}
\end{enumerate}
The fragility associated with the first redundancy is easily resolved --- either by selecting equilibria without redundant messages, or by forcing the model to remain in the realm of cheap talk with message-independent idiosyncrasies.

However the second type of redundancy still occurs, and turns out to be more intractable:

\begin{prop}[Adaptation of \cite{DK21}\footnote{The referenced result establishes the fragility of persuasive equilibria in the $\Theta=\mathcal{A}=\mathbb{R}^n$ model of \cite{CH10}. (1) adapts their techniques to finite pure action models (where their Condition (S) may not hold), leading to the novel result (2).}]\label{prop:Frag}
All persuasive equilibria of communication games where the sender has state-independent preferences are Harsanyi fragile:
\begin{enumerate}
    \item No persuasive equilibrium exists if the sender's preference may be decomposed 
    \begin{equation}\label{eq:frag}
        u_S(a,m\rvert (\omega_1,\omega_2)):=\tilde{u}(a,m\rvert \omega_1)+ U(a\rvert \omega_2)
    \end{equation}
    where $U$ is a message-independent component that admits a density over $\mathbb{R}^\mathcal{A}$.\footnote{If the reader wishes to normalize preferences, consider $U$ that admit a density over non-normalized actions.}
    \item For fixed receiver utility $u_R$, persuasion is (topologically) generically impossible over the set of preferences $\Delta \mathbb{R}^{\mathcal{A}\times M}$ under the weak-$\ast$ topology.
\end{enumerate}
\end{prop}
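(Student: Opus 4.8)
The plan is to prove part (1) first --- it contains the essential idea --- and then deduce part (2) as a genericity statement about the space of preferences. For part (1), suppose toward a contradiction that $\sigma$ is a persuasive equilibrium of the game in which the sender has the decomposed preference in eq.~\eqref{eq:frag}. Fix the receiver's equilibrium strategy $\mathscr{P}:M\to\Delta\mathcal{A}$. The key observation is that, conditional on $\omega_1$, the sender's problem is to choose $m$ to maximize $\tilde u(\mathscr{P}(m),m\mid\omega_1)+U(\mathscr{P}(m)\mid\omega_2)$; and since $U$ has a density over $\mathbb{R}^{\mathcal{A}}$, the random vector $\big(U(\mathscr{P}(m)\mid\omega_2)\big)_{m}$, as $m$ ranges over on-path messages inducing distinct pure actions, has no atoms, so ties among such messages occur with probability zero. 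Hence almost every sender type has a best response that, after consolidating messages inducing the same mixed action, is a \emph{pure} function of $(\omega_1,\omega_2)$ into the set of distinct on-path actions.

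Next I would exploit the redundancy-of-informativeness point flagged in the text (item 2 before the proposition). Because the sender's payoff depends only on the induced action(s) and (weakly) the message, for \emph{any} realized profile of sender best responses the induced marginal distribution over messages-and-actions can be reproduced by a sender strategy that ignores $\theta$ entirely: replace the map $(\theta,\omega)\mapsto\pi$ with the map $\omega\mapsto$ (randomize over $\pi$ with the $\theta$-averaged conditional probabilities). This alternative strategy yields the sender the same expected utility type-by-type, so it is also a best response, but it induces \emph{no} information --- every posterior equals the prior $\mu$. I would then argue that receiver-incentive compatibility forces the original $\sigma$ to induce only posteriors at which the receiver is indifferent over the relevant actions (this is where Assumption~(S) enters: distinct pure actions are never sender-indifferent, so a persuasive equilibrium must have the receiver mixing after almost every message, pinning posteriors to the measure-zero indifference locus). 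Combining: any persuasive $\sigma$ would require the ex post realized behavior --- which is (generically in $\omega$) pure, state-contingent, and informative --- to be supported on receiver-indifference posteriors; but the sender strictly gains nothing from the state-contingency, while the idiosyncratic density in $U$ breaks exactly the ties that the transparent equilibrium relied on to sustain a whole interval of sender-indifferent messages (cf. the remark after Example~\ref{ex:bi} that persuasive candidate equilibria come in continua precisely because of these ties). Formalizing this last contradiction --- showing that once $U$ is dispersed there is a positive-probability set of sender types who strictly prefer to deviate to the action-maximizing message, collapsing the posterior structure to the prior --- is the heart of the argument.

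For part (2), I would phrase the set of ``bad'' preferences $u_S\in\Delta\mathbb{R}^{\mathcal{A}\times M}$ (those admitting some persuasive equilibrium against the fixed $u_R$) and show it is contained in a weak-$\ast$ closed, nowhere-dense set. The inclusion in the complement of the set considered in part (1) is immediate: any $u_S$ that can be written with a message-independent component having a density over $\mathbb{R}^{\mathcal{A}}$ admits no persuasive equilibrium. So it suffices to show that such decomposable preferences are weak-$\ast$ dense and that ``persuasive'' is not destroyed only on a meager set --- i.e. that an arbitrary $u_S$ can be weak-$\ast$ approximated by convolving with a small-support smooth density on the action coordinates (a standard mollification on $\mathbb{R}^{\mathcal{A}}$, extended trivially in the $M$ coordinate), which produces exactly a preference of the form \eqref{eq:frag}. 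Denseness plus the closedness of the bad set under the weak-$\ast$ topology (which follows from upper-hemicontinuity of the equilibrium correspondence, as invoked in the introduction) gives topological genericity.

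The main obstacle I anticipate is the contradiction step in part (1): carefully converting ``the sender is indifferent over informativeness'' plus ``$U$ has a density'' into a rigorous statement that \emph{no} receiver strategy $\mathscr{P}$ can simultaneously (a) make almost every sender type's pure best response coincide with an informative strategy and (b) be receiver-incentive-compatible with the resulting posteriors. One has to rule out the possibility that the state-dependence in $\tilde u(\cdot\mid\omega_1)$ conspires with $\mathscr{P}$ to re-create the needed indifferences; the resolution is that $\tilde u$ enters only through $\omega_1$, which is independent of $\theta$, so it cannot generate a systematic state-contingent tilt --- but making that precise, and handling the measure-theoretic bookkeeping on the rich space $\Omega$, is the delicate part. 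The message-relabelling redundancy is a genuine but, as the text notes, easily dispatched nuisance: I would handle it up front by passing to the reduced message set in which distinct on-path messages induce distinct mixed actions.
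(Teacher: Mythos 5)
Your first paragraph already contains the decisive fact --- conditional on $\omega$, the density of $U$ makes the sender's best response almost surely unique among messages inducing distinct (mixed) actions, hence measurable in $\omega$ alone --- but you then abandon it for a detour through the strategy-replacement argument and the receiver's indifference locus, and you explicitly leave the contradiction unformalized (``the heart of the argument''). That deferred step is where the proof actually lives, and the route you sketch for closing it (a positive-probability set of types strictly preferring to deviate, plus pinning posteriors to a measure-zero indifference set) is not how it closes. The paper's proof never argues about deviations or about where receiver-indifferent posteriors live. It fixes a pure action $a^\ast$ that is not optimal at the prior but is played with positive probability, lets $\Phi_i$ be the equilibrium message--action pairs whose action puts positive weight on $a^\ast$, and sandwiches $\Pcond{M_i}{\theta}$ between $\P{\max_{\pi\in\Phi_i}u_S(\pi)>\max_{\pi'\notin\Phi_i}u_S(\pi')}$ and that quantity plus the tie probability; the tie probability is zero because, conditional on $\omega_1$ and the utilities of all pure actions other than $a^\ast$, the maximum over $\Phi_i$ is strictly increasing in $U(a^\ast)$, which has a conditional density. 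State-independence of $\Pcond{M_i}{\theta}$ then implies the average posterior at which $a^\ast$ is played equals the prior itself, so $a^\ast$ is a best response to $\mu$ --- directly contradicting persuasiveness. This also disposes of the worry you raise at the end: $\tilde u$ cannot ``conspire'' to restore indifference, because the tie event is handled conditionally on $\omega_1$ and on every coordinate of $U$ except $U(a^\ast)$, so only the marginal conditional density of one coordinate is needed.

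On part (2) your outline (density of decomposable preferences plus upper-hemicontinuity of the equilibrium correspondence) matches the paper, but the set of preferences admitting a persuasive equilibrium is not closed as you assert: persuasiveness requires some non-prior-optimal action to be played with \emph{positive} probability, a condition that need not survive weak-$\ast$ limits. The paper therefore stratifies the ``bad'' set by a lower bound $\tfrac{1}{n}$ on that probability, obtains a countable union of closed nowhere dense sets, and concludes meagreness rather than containment in a single closed nowhere dense set. You should do the same.
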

These results rely only on the state-independence of $u_S$ and the finiteness of $\mathcal{A}$, and can be extended to settings with infinite state spaces and idiosyncratic receivers (ie. where the receiver's preference depends on an idiosyncrasy $\omega_R$ which is distributed independently of the sender's idiosyncrasy $\omega$).

Equation \ref{eq:frag} represents preferences that have even mild (and potentially hidden) correlation in the value of actions across different messages. Consequently, cheap talk persuasion is impossible whenever the sender's preference is distributed according to a state-independent density --- no matter how concentrated --- and, even if persuasion is possible for some rare utility $\tilde{u}$, this persuasion can be destroyed by adding even a small idiosyncratic perturbation that is independent of the first.

The intuition behind this result is that a sender who first observes their idiosyncrasy $\omega$ will, w.p. 1, have a strict preference over equilibrium messages, and thus will choose a message independent of their observation of $s$.

In this paper we are primarily interested in finite message equilbiria: 


\begin{proof}[Proof of (1) when $M$ is finite.]
Suppose there exists a persuasive equilibrium $\sigma$, and let $p_i$ be a mixed action that is chosen with positive probability. Let $M_i$ be the set of messages that induce an action $p_i$, and $\Phi_i:=\{p_i\}\times M_i$.

The probability that a message in $M_i$ is sent from state $s$ is then bounded between two state-independent quantities:
\begin{equation*}
\begin{split}
    \mathbb{P}\Big[\max_{\pi_i\in \Phi_i}u_S(\pi_i)> \max_{\pi_j\not\in \Phi_i} u_S(\pi_j)\Big] &\le\Pcond{M_i}{\theta}\le\\ \mathbb{P}\Big[\max_{\pi_i\in \Phi_i}u_S(\pi_i)> \max_{\pi_j\not\in \Phi_i}u_S(\pi_j)\Big]&+\mathbb{P}\Big[\max_{\pi_i\in \Phi_i}u_S(\pi_i)=\max_{\pi_j\not\in \Phi_i}u_S(\pi_j)\Big],
\end{split}
\end{equation*}
where the last term is dominated by a finite sum of probability 0 sets --- one for each other message. Thus every action is recommended with a positive state-independent probability, and hence is optimal under the prior belief.\footnote{This proof does not extend to cases where a continuum of messages are sent. In this case, each sender $\omega$ may be indifferent between messages $\pi_\omega,\pi_\omega'$ varying with $\omega$, and randomizes between these messages in an informative way. Such messages are each sent with probability zero, but cumulatively are sent with positive probability. This general case is treated in Appendix \ref{app:proofs}.}
\end{proof}


The result of this section is that persuasion is impossible with state-independent sender preferences if there is \emph{any} fuzziness about what these preferences are. Since perfect knowledge about other agents is impossible in reality, this suggests that practical persuasion relies on a state-dependent element entering the model.

\section{$\mathscr{O}$-Stability and the Sender's Problem}\label{sec:stab}
In this section, our main question is whether adding slight state-dependence to the sender's preference can re-establish persuasive equilibria. We start by studying non-idiosyncratic perturbations to simplify our arguments (our perturbation is $V\equiv v:\Delta\mathcal{A}\times M\times \Theta\rightarrow \mathbb{R}$) before re-introducing idiosyncrasy to establish robustness of our approximating equilibria.

Suppose the transparent sender preference $u^0$ is perturbed by the modification $v\in\mathbb{R}^{(\mathcal{A}\times M) \times \Theta}$:
$$u_S^{\epsilon v}(\pi\rvert \theta)=u^0(\pi)+\epsilon v(\pi\rvert \theta).$$
To obtain Harsanyi stability, an equilibrium must be robust to any modification. Since Proposition \ref{prop:Frag} shows Harsanyi stability is impossible, we focus on robustness to \emph{certain} modifications:
\begin{defn}
    For a nonempty open set of modifications $\mathscr{O}\subseteq \mathbb{R}^{(\mathcal{A}\times M) \times \Theta}$, we say a candidate equilibrium $\sigma\in\Sigma(u^0)$ is $\mathscr{O}$\emph{-stable} if for any $v\in\mathscr{O}$ there are equilibria $\sigma^{\epsilon}\in\Sigma(u_S^{\epsilon v})$ of the game with sender preferences $u^{\epsilon v}_S$ such that $\sigma^{\epsilon}\rightarrow \sigma$ as $\epsilon \rightarrow 0$.
\end{defn}
The notion of $\mathscr{O}$-stability parallels Harsanyi stability, however it weakens the condition in two distinct ways: (1) we consider only non-idiosyncratic perturbations, and (2) we allow ourselves to constrain these perturbations to an open set --- openness ensuring we avoid knife-edge scenarios. In Section \ref{sec:Harstab} we consider a version of $\mathscr{O}$-stability that allows for idiosyncrasy and obtain parallel results, showing relaxation (1) is merely simplifying while (2) is essential.

We will see that many candidate equilibria fail to be $\mathscr{O}$-stable, and indeed no persuasive candidate equilibrium is $\mathbb{R}^{(\mathcal{A}\times M) \times \Theta}$-stable.
To establish intuition for how a modification affects an equilibrium, consider the following first order approximation:
\begin{equation}
    u_S^{\epsilon v}(p+\Delta p,m\rvert \theta)=u^0(p,m)+\epsilon v(p,m\rvert \theta)+\langle (u^0_a),\Delta p\rangle+ O(\epsilon\Delta p)
\end{equation}
where $(u^0_a)$ is the vector of pure action utilities.\footnote{Readers may recognize an analogy to the standard mechanism design model through the following correspondence:
\begin{equation*}
    \begin{pmatrix}
        \sigma(\theta)\\
        \text{$\langle (u^0_a),\Delta p\rangle$}
    \end{pmatrix}\Longleftrightarrow
    \begin{pmatrix}
        \text{decision rule}\\
        \text{transfer mechanism}
    \end{pmatrix}.
\end{equation*}
There are two distinctions: (1) we must maintain some sender indifferences, (2) our stability criteria.
} Recalling our shorthand $\pi\equiv(p,m)$, note that for a candidate equilibrium $\sigma\in \Sigma(u^0)$, $u^0(\pi)$ is constant across $\pi\in \sigma(\Theta)$; the preference over equilibrium paths is determined by higher order terms.

Suppose $(\pi_1,\theta_2,\pi_2)$ is a path on the communication graph $G(\sigma)$ To maintain the $\theta_1$ sender's indifference, we must have
\begin{equation*}
    v(\pi_1\rvert \theta_2)-v(\pi_2\rvert \theta_2)=\left\langle (u^0_a),\tfrac{\Delta p_2-\Delta p_1}{\epsilon}\right\rangle+ O(\Delta p)
\end{equation*}
where $\Delta p_i$ is a perturbation to the candidate equilibrium mixed action specified by $\pi_i$. If moreover $(\pi_1,\theta_2,\dots,\theta_{N},\pi_N)$ is a path on $G(\sigma)$, we can sum the corresponding constraints to get
\begin{equation}\label{eq:neighs}
    \sum_{i=2}^{N}v(\pi_{i-1}\rvert \theta_i)-v(\pi_{i}\rvert \theta_i)=\left\langle (u^0_a),\tfrac{\Delta p_{N}-\Delta p_1}{\epsilon}\right\rangle+ O(\Delta p).
\end{equation}
But the righthand side also appears in the difference of utilities between $\pi_{1}$ and $\pi_{N}$. Consequently, if $\pi_1\in \sigma(\theta_1)$, we must have
\begin{equation}\begin{split}\label{eq:Gintuit}
    u_S^{\epsilon v}(p_1+\Delta p_1,m_1\rvert \theta_1)-u_S^{\epsilon v}(p_{N}+\Delta p_{N},m_{N}\rvert \theta_1)
    =&\sum_{i=1}^{N}\epsilon\big(v(\pi_{i}\rvert \theta_i)-v(\pi_{i-1}\rvert \theta_i)\big)+ O(\epsilon\Delta p)> 0
\end{split}
\end{equation}
where $\pi_{0}:=\pi_{N}$, and the strict inequality is due to our criterion that this be robust over a neighbourhood of $v$. Note that this also suggests $\pi_{N}\not\in \sigma(\theta_1)$.


This result inspires the following definition:
\begin{defn}[Graph Monotonicity\footnote{This concept resembles the closely related concept of cyclic monotonicity from mechanism design. In fact, when $G(\sigma)$ is connected, the equations of eq. \ref{eq:GCM} form a basis for cyclic monotonicty. The interested reader may consult Appendix \ref{app:CM} for a short discussion contrasting the two.}]
For a candidate equilibrium $\sigma\in\Sigma(u^0)$ with communication graph $G(\sigma)$,
we say that a modification $v$ is $G(\sigma)$\emph{-monotone} (or $v\in \GCM$) if, for any path $(\theta_1,\pi_1,\dots,\theta_{N},\pi_N=:\pi_{0})$ on $G(\sigma)$ with $N\ge 2$ we have
\begin{align}\label{eq:GCM}\tag{GM}
\sum_{i=1}^{N} v(\pi_{i}\rvert \theta_{i})-v(\pi_{i-1}\rvert \theta_{i})>0.
\end{align}
\end{defn}
Equation \ref{eq:Gintuit} suggests that this is a necessary condition for $\mathscr{O}$-stability. There is a question of whether this is a sufficient condition. The main challenge is that our argument relies on chains of indifferences to relate utilities. When the graph $G(\sigma)$ is not connected, there will be utilities that graph monotonicity cannot relate, even though the utility of one \emph{should} impose constraints on the other.

Recall that Theorem \ref{thm:gen} shows communication graphs (generically) must contain cycles to be non-connected. The following result argues that an acyclic communication graph is necessary for $\mathscr{O}$-stability, generically implying $\mathscr{O}$-stable equilibria are connected:
\begin{prop}\label{prop:acyc}
    $\GCM$ is non-empty iff the communication graph $G(\sigma)$ is acyclic.
\end{prop}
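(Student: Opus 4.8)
The plan is to establish the two directions separately, treating the modification $v$ as an unknown in a system of linear inequalities indexed by paths on $G(\sigma)$. Observe first that (GM) is a system of strict linear inequalities in $v\in\mathbb{R}^{(\mathcal{A}\times M)\times\Theta}$, one for each path (equivalently, for each closed walk, since reversing or concatenating paths corresponds to adding or negating the associated linear functionals). So $\GCM$ is an open convex cone, and it is nonempty if and only if the zero functional is not a nonnegative combination of the path-functionals that forces a contradiction --- i.e., iff there is no finite collection of paths whose associated linear functionals sum to zero identically in $v$. I would phrase the whole argument around this observation.

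\textbf{($\Leftarrow$) Acyclic $\implies\GCM\neq\varnothing$.} Suppose $G(\sigma)$ is acyclic, hence a forest. Pick a spanning forest structure and root each tree component. For each node $\pi\in\sigma(\Theta)$ and each state $\theta\in\Theta$, I want to assign values $v(\pi\mid\theta)$ making every path-sum in \eqref{eq:GCM} strictly positive. The key point is that on a forest there is at most one path between any two nodes, so the path-functionals are ``free'' --- there is no linear relation among them forcing a sum to zero. Concretely, I would build $v$ by a potential/height construction: assign to each edge $\{\theta,\pi\}$ of $G(\sigma)$ a strictly positive weight, and set $v(\pi\mid\theta)$ so that traversing an edge in the ``away-from-root'' direction strictly decreases a potential while traversing it back strictly increases it by a larger amount --- more carefully, note that in \eqref{eq:GCM} each state $\theta_i$ appears attached to its two incident edges $\pi_{i-1},\pi_i$ along the closed walk, so the sum telescopes into a sum of edge-terms of the form $v(\pi_i\mid\theta_i)-v(\pi_{i-1}\mid\theta_i)$; since the walk is closed and the graph is a forest, every edge is traversed an equal number of times in each direction, so I only need the ``net discrepancy'' terms at the off-path coordinates to be controllable. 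The cleanest route: set $v(\pi\mid\theta)$ to equal a large generic ``reward'' $r(\pi)$ when $\pi\in\sigma(\theta)$ (edge present) and a very negative ``penalty'' $-K$ when $\{\theta,\pi\}$ is not an edge, then check that for $K$ large enough and $r$ chosen so that along any path the rewards strictly beat the penalties, \eqref{eq:GCM} holds; acyclicity guarantees no path revisits a coordinate so the penalty terms cannot cancel. I expect this direction to require care but no deep idea.

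\textbf{($\Rightarrow$) $\GCM\neq\varnothing\implies$ acyclic.} I prove the contrapositive. Suppose $G(\sigma)$ contains a cycle $C=(\theta_1,\pi_1,\theta_2,\pi_2,\dots,\theta_N,\pi_N,\theta_1)$ with all nodes distinct (a genuine cycle in the bipartite graph, so $N\ge 2$). Traversing $C$ one way gives the path $(\theta_1,\pi_1,\dots,\theta_N,\pi_N=:\pi_0)$ and traversing it the other way gives $(\theta_1,\pi_0=\pi_N,\dots)$ --- i.e. the reversed cycle, which is also a valid closed path on $G(\sigma)$. Writing out \eqref{eq:GCM} for both orientations: the forward orientation gives $\sum_{i=1}^N\big(v(\pi_i\mid\theta_i)-v(\pi_{i-1}\mid\theta_i)\big)>0$, and the reverse orientation gives exactly the negative of the same sum, hence $<0$ for any $v$. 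These two strict inequalities are contradictory, so no $v$ can satisfy all of (GM); thus $\GCM=\varnothing$. The only subtlety is confirming that the reversed cycle is indeed among the paths quantified in the definition --- which it is, since the definition quantifies over all paths on the undirected graph $G(\sigma)$ with $N\ge 2$, and reading a cycle backwards is again such a path.

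\textbf{Main obstacle.} The easy-to-miss point, and the one I would spend the most words on, is the ($\Leftarrow$) construction: one must verify that on an acyclic graph the telescoping in \eqref{eq:GCM} really does leave a strictly positive residual for \emph{every} path, including paths that backtrack along an edge (the definition allows $\theta_{i-1}=\theta_{i+1}$-type repetitions in principle, or at least one should check whether it does). If backtracking paths are allowed, a naive potential function gives residual exactly zero on an out-and-back path, violating strictness, so the construction must add a strictly positive ``per-step'' slack --- e.g. replace the potential argument by choosing the off-diagonal penalties $-K$ and requiring that every path of length $\ge 2$ that is not a repeated single edge picks up strict positivity, while genuinely arguing the repeated-single-edge case is excluded because $\pi_{i}\neq\pi_{i-1}$ forces distinct nodes at consecutive steps. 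Pinning down exactly which closed walks the definition ranges over, and matching the construction to that, is where the proof has to be precise; everything else is routine linear algebra over the incidence structure of the forest.
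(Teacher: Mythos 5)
Your proposal is correct and follows essentially the same route as the paper: the converse is the paper's orientation-reversal argument on a cycle, and the forward direction is the paper's construction of ``making off-path messages unattractive,'' since in an acyclic graph the only non-edge term in the sum \eqref{eq:GCM} is $-v(\pi_N\rvert\theta_1)$, so setting $v=-K$ on non-edges makes every path-sum equal to $+K>0$ plus bounded on-path terms. The worry about backtracking walks is moot once one notes that a path has distinct nodes and that acyclicity is exactly what guarantees $\{\theta_1,\pi_N\}$ is not an edge for $N\ge 2$.
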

If $G(\sigma)$ is acyclic, we may construct a modification in $\GCM$ by making off-path messages unattractive. This can always be done through message-dependent modifications, and often through message-independent (ie. cheap talk) modifications as well. However, in cases where many equilibrium mixed actions have support within a small set of pure actions, obtaining graph monotonicity through cheap talk modifications may become complicated.

The converse is deduced by observing that eq. \ref{eq:GCM} for a path around a cycle changes sign if we flip the orientation of the path. Equivalently, observe that if $(\theta_1,\dots,\pi_N,\theta_1)$ is a cycle on $G(\sigma)$ then eq. \ref{eq:Gintuit} must be simultaneously greater than and less than zero.



Our main result of this section is the following, whose intuition we established in eq. \ref{eq:Gintuit}:
\begin{thm}\label{thm:main}
    For generic receiver utilities $u_R$, a candidate equilibrium $\sigma\in\Sigma(u^0,u_R)$ is $\mathscr{O}$-stable iff $\mathscr{O}\subseteq\GCM$.
\end{thm}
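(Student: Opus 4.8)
The plan is to prove the two implications behind the ``iff'' separately, after recording one structural preliminary. Since $\mathscr{O}$ is required nonempty, $\mathscr{O}\subseteq\GCM$ forces $\GCM\neq\varnothing$, so by Proposition~\ref{prop:acyc} the graph $G(\sigma)$ is acyclic, and then by Theorem~\ref{thm:gen}(a) it is connected, i.e.\ a tree; by Theorem~\ref{thm:gen}(b)--(c) exactly one node $\pi^\ast$ carries a pure action and every other node carries a binary mixed action. I will use this tree structure throughout, together with the facts that $u^0(\cdot,m)$ is linear and $v(\cdot\mid\theta)$ affine in the action, and that Assumption~(S) makes the utility gap $u^0_{a_\pi}-u^0_{a'_\pi}$ across the two actions in a binary node's support nonzero.

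For necessity ($\mathscr{O}$-stable $\Rightarrow\mathscr{O}\subseteq\GCM$), fix $v\in\mathscr{O}$ and a path $(\theta_1,\pi_1,\dots,\theta_N,\pi_N=:\pi_0)$ on $G(\sigma)$ with $N\ge2$, and take approximating equilibria $\sigma^\epsilon\to\sigma$. By upper hemicontinuity and finiteness of $\mathcal A$, for small $\epsilon$ one identifies in $\sigma^\epsilon$, for each node $\pi_i$, an action--message pair $\pi_i^\epsilon$ with $\pi_i^\epsilon\to\pi_i$ sent by $\theta_i$ (and, for $i<N$, also by $\theta_{i+1}$) with positive probability. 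Since $\theta_i$ sends both $\pi_{i-1}^\epsilon$ and $\pi_i^\epsilon$ for $2\le i\le N$, sender-incentive compatibility gives the exact equalities $u^0(p_{i-1}^\epsilon)+\epsilon v(\pi_{i-1}^\epsilon\mid\theta_i)=u^0(p_i^\epsilon)+\epsilon v(\pi_i^\epsilon\mid\theta_i)$; summing over $i$ telescopes the $u^0$ terms to $u^0(p_1^\epsilon)-u^0(p_N^\epsilon)=\epsilon\sum_{i=2}^N\!\bigl(v(\pi_i^\epsilon\mid\theta_i)-v(\pi_{i-1}^\epsilon\mid\theta_i)\bigr)$. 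Meanwhile $\theta_1$ sends $\pi_1^\epsilon$ and could deviate to the message of $\pi_N^\epsilon$, so $u^0(p_1^\epsilon)-u^0(p_N^\epsilon)\ge\epsilon\bigl(v(\pi_N^\epsilon\mid\theta_1)-v(\pi_1^\epsilon\mid\theta_1)\bigr)$. Combining, dividing by $\epsilon>0$ and letting $\epsilon\to0$ yields $\sum_{i=1}^N\bigl(v(\pi_i\mid\theta_i)-v(\pi_{i-1}\mid\theta_i)\bigr)\ge0$; thus $\mathscr{O}$ lies in the closed half-space cut out by this linear functional (which is nonzero because the path visits distinct nodes), hence, being open, in its interior, giving the strict inequality \eqref{eq:GCM}. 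As the path was arbitrary, $v\in\GCM$. This is just the rigorous version of the heuristic \eqref{eq:Gintuit}.

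For sufficiency ($\mathscr{O}\subseteq\GCM\Rightarrow\mathscr{O}$-stable) I would, given $v\in\mathscr{O}$, construct $\sigma^\epsilon\to\sigma$ by leaving the receiver's on- and off-path beliefs and the sender's strategy $\mathscr{M}_\sigma$ unchanged and only nudging the receiver's mixed actions. For a binary node $\pi=(p_\pi,m)$ write the perturbed action $p_\pi+s_\pi(e_{a_\pi}-e_{a'_\pi})$, which for $|s_\pi|$ small remains a best response to the unchanged belief $\nu(m)$; set $s_{\pi^\ast}=0$ at the pure node (still the unique best response to $\nu(m_{\pi^\ast})$, generically). Requiring, for each state $\theta$ and each pair $\pi,\pi'\in\sigma(\theta)$, equality of the perturbed payoffs $u^0+\epsilon v$ is a linear system in $(s_\pi)_{\pi\neq\pi^\ast}$ with $\sum_\theta(|\sigma(\theta)|-1)=|\sigma(\Theta)|-1$ equations (using $|E|=|\Theta|+|\sigma(\Theta)|-1$ for a tree) and equally many unknowns; at $\epsilon=0$ connectedness and the $\pi^\ast$-normalization force all $s_\pi=0$, so the coefficient matrix is invertible and for small $\epsilon$ there is a unique solution with $s_\pi=O(\epsilon)\to0$. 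The resulting $\sigma^\epsilon$ is Bayes-consistent, receiver-optimal, converges to $\sigma$, and makes each $\theta$ exactly indifferent among $\sigma(\theta)$. The remaining equilibrium check is that $\theta$ does not want to deviate to some other message. Off-path messages: keeping the old off-path action, the deviation payoff tends to a value $\le\max_m\min_\nu u^\ast(\nu,m)<\overline u$ (slack generically, cf.\ \eqref{eq:offpath}), hence below the perturbed value $\overline u+O(\epsilon)$. On-path messages $m$ with node $\pi=\pi_m\notin\sigma(\theta)$: writing $s_\pi(u^0_{a_\pi}-u^0_{a'_\pi})=\epsilon\hat c_\pi+O(\epsilon^2)$, the deviation gain of $\theta$ over $\pi'\in\sigma(\theta)$ equals $\epsilon\bigl(\hat c_\pi+v(\pi\mid\theta)-\hat U_\theta\bigr)+O(\epsilon^2)$ with $\hat U_\theta:=\hat c_{\pi'}+v(\pi'\mid\theta)$ the common value over $\sigma(\theta)$; tracing the unique $G(\sigma)$-path $(\theta=\theta_1,\pi_1,\dots,\theta_k,\pi_k=\pi)$ and using $\hat c_{\pi_i}-\hat c_{\pi_{i-1}}=v(\pi_{i-1}\mid\theta_i)-v(\pi_i\mid\theta_i)$ step by step gives $\hat c_\pi+v(\pi\mid\theta)-\hat U_\theta=-\sum_{i=1}^k\bigl(v(\pi_i\mid\theta_i)-v(\pi_{i-1}\mid\theta_i)\bigr)$ (with $\pi_0:=\pi_k$), which is $<0$ exactly because $v\in\GCM$. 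Being strictly negative at leading order, it dominates the $O(\epsilon^2)$ correction, so $\theta$ strictly prefers $\sigma(\theta)$, and $\sigma^\epsilon$ is an equilibrium of $u_S^{\epsilon v}$ converging to $\sigma$.

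I expect the main obstacle to be the sufficiency direction, and within it the conceptual point --- invisible from the first-order heuristic of \eqref{eq:Gintuit}, which only tracks indifferences along a chain --- that the binding constraints are a state's incentive not to deviate to the \emph{other} on-path messages, and that graph monotonicity, precisely because \eqref{eq:GCM} is a strict inequality summed around a path of the tree, is exactly what turns those constraints into robust (small-$\epsilon$) ones. The rest is bookkeeping: the genericity inputs from Theorem~\ref{thm:gen} (tree, single pure node, binary mixed actions) and Assumption~(S) that make the perturbed indifference system square and invertible, slackness of the off-path temptation bound \eqref{eq:offpath}, and, for necessity, the upper-hemicontinuity machinery needed to track the edges of $G(\sigma)$ through the approximating equilibria.
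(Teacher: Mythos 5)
Your proof is correct, and the necessity half is essentially the paper's argument: both derive the telescoped indifference identity along a path, compare it with the deviation inequality for $\theta_1$, and use openness of $\mathscr{O}$ to upgrade the resulting weak inequality to the strict one in \eqref{eq:GCM} (the paper phrases this contrapositively, picking a $\tilde v$ near $v\notin\GCM$ that strictly violates some path inequality and deriving a contradiction). Where you genuinely diverge is the sufficiency construction. The paper roots the tree at the pure node and solves the indifference conditions one edge at a time: it chooses shrinking intervals $I_k\subseteq\supp{\pi_k}$ satisfying a sandwich condition and invokes the intermediate value theorem at each state, propagating down the tree. You instead observe that fixing the sender's strategy and beliefs and nudging only the binary mixed actions yields a square linear system in the $(s_\pi)$ --- square precisely by the tree edge count, invertible at $\epsilon=0$ by connectedness, Assumption~(S), and the $\pi^\ast$-normalization --- so the implicit solution is $O(\epsilon)$. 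These are really the same system: the paper's parent-to-child recursion is the triangular structure that makes your matrix invertible. Your formulation is cleaner and makes the $O(\epsilon)$ rate explicit, but it leans on vNM linearity of $u^0$ and $v$ in the mixed action, whereas the paper's IVT/interval version is the one that survives the passage to idiosyncratic perturbations in Theorem~\ref{thm:mainB} and Lemma~\ref{lem:NIC}. Your final step --- tracing the unique tree path to show the first-order gain from deviating to a non-neighbouring on-path message equals minus a \eqref{eq:GCM} sum --- coincides with the paper's closing display. One point to tidy: your dismissal of off-path messages rests on the bound \eqref{eq:offpath} holding with slack, which you assert "generically"; the paper's proof is silent on off-path deviations altogether, so you are no worse off, but if $\overline u=\max_m\min_\nu u^\ast(\nu,m)$ with equality (impossible in cheap talk, where the constraint is vacuous, but conceivable with message-dependent $u^0$) the argument would need an extra word.
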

This relies on first showing that graph-monotonicity is a necessary property for $\mathscr{O}$-stability, thus ruling out cyclic candidate equilibria. We then apply Theorem \ref{thm:gen} to deduce that connectedness is generically a necessary condition for a candidate equilibrium to be $\mathscr{O}$-stable. The last step is to show that tree candidate equilibria are $\GCM$-stable.

\begin{proof}
    \textbf{Necessity:} We first show that graph-cyclic monotonicity is necessary for $\mathscr{O}$-stability. Suppose $v\not\in \GCM$. Then any neighbourhood of $v$ contains a modification $\tilde{v}$ that is not in the closure of $\GCM$ --- ie. for some path $(\theta_1,\pi_1,\dots,\theta_{N},\pi_N)$
\begin{equation}\label{sigfail}
    0>\sum_{i=1}^{N} \tilde{v}_i(\pi_{i})-\tilde{v}_i(\pi_{i-1})
\end{equation}
    where we adopt the shorthand $\tilde{v}_i(\pi):=\tilde{v}(\pi\rvert \theta_i)$. Since $\tilde{v}$ is continuous in $\pi$, this inequality holds in a neighbourhood $N_\pi$ of $\pi$.

    Now suppose our equilibrium can be approximated with actions $\hat{\pi}\rightarrow \pi$ as $\epsilon\rightarrow 0$. For this to be incentive compatible, we must have
\begin{align*}
    u(\hat{\pi}_i)-u(\hat{\pi}_{i-1})+\epsilon\left(\tilde{v}_i(\hat{\pi}_{i})-\tilde{v}_{i}(\hat{\pi}_{i-1})\right)=&0\quad \text{for }i\in\{2,\dots, N\}\\
    u(\hat{\pi}_N)-u(\hat{\pi}_1)+\epsilon(\tilde{v}_1(\hat{\pi}_N)-\tilde{v}_1(\hat{\pi}_1))\le& 0
\end{align*}
Summing up the first equation, and comparing with the second inequality, we get
\begin{equation*}
    u(\hat{\pi}_N)-u(\hat{\pi}_1)+\epsilon\left(\tilde{v}_1(\hat{\pi}_N)-\tilde{v}_1(\hat{\pi}_1)\right)\le 0=\sum_{i=2}^{N} u(\hat{\pi}_i)-u(\hat{\pi}_{i-1})+\epsilon\left(\tilde{v}_i(\hat{\pi}_{i})-\tilde{v}_i(\hat{\pi}_{i-1})\right).
\end{equation*}
Rearranging, we get
\begin{equation*}
    \tilde{v}_1(\hat{\pi}_N)-\tilde{v}_1(\hat{\pi}_1)\le\frac{u(\hat{\pi}_1)-u(\hat{\pi}_N)}{\epsilon}=\sum_{i=2}^{N} \tilde{v}_i(\hat{\pi}_{i})-\tilde{v}_i(\hat{\pi}_{i-1})
\end{equation*}
which contradicts eq. \ref{sigfail} when $\hat{\pi}\in N_\pi$.\medskip
\begin{figure}
    \centering
\begin{forest}for tree={
        grow'=east, 
        anchor=west, child anchor=west, 
        }
[$\pi^0$
    [$s^1_a$ [$\pi^2_a$ [$s^3_a$]] 
        [$\pi^2_b$ [$s^3_b$ [$\pi^4_a$[$s^5_a$]]]] 
        ]
    [$s^1_b$[$\pi^2_c$ [$s^3_c$] [$s^3_d$] ]]]
\end{forest}
    \caption{An example of a communication graph turned into a rooted tree. Our method is to approximate the children of $s^1_a$ (ie. $\pi^2_a,\pi^2_b,\pi^2_c$) before approximating further descendents of $s^1_a$ (ie. $\pi^4$). Graph monotonicity ensures that a state will never recommend a non-neighbouring action for slight modifications.}
    \label{fig:roottree}
\end{figure}
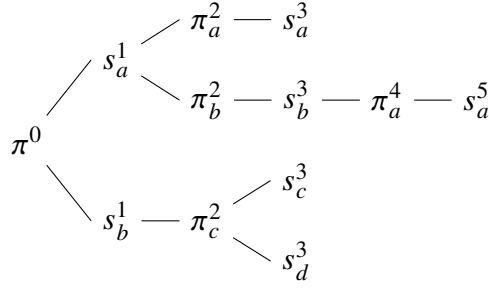

\textbf{Generic sufficiency:} From Proposition \ref{prop:acyc}, it suffices to check acyclic graphs. From Theorem \ref{thm:gen}, we can then further restrict attention to tree equilibria with one pure action. Fix the vertex associated with the pure action as the root of the communication graph. We adopt the following notation for node indices that we will maintain for related proofs: for a node $j$ on a rooted tree $(G,\pi_0)$: 
\begin{itemize}
    \item $j^\downarrow $ is the set of $j$'s children (ie. its neighbours that are furthest from the root)
    \item $j^\uparrow$ is the parent node of $j$ (ie. its unique neighbour closest to the root)
    \item $\mathcal{N}(j)=j^\downarrow\cup\{j^\uparrow\}$ is the set of neighbours of a node $j$
\end{itemize}
This notation will be pushed to subscripts to differentiate action/state nodes.

Let $\theta_j$ be a state and fix its parent action $\hat{\pi}_{j^\uparrow}$, for $\pi_k\in \pi_{j^\downarrow}$ let $I_k\subseteq \supp{\pi_k}$ be intervals such that
\begin{equation}\label{eq:neighIC.pure}
        \min_{\pi'_k\in I_k}u_S^{\epsilon v}(\pi'_{k}\rvert \theta_j)\le u_S^{\epsilon v}(\hat{\pi}_{j^\uparrow}\rvert \theta_j)
        \le \max_{\pi'_k\in I_k}u_S^{\epsilon v}(\pi'_{k}\rvert \theta_j)\qquad\text{for all }k\in j^\downarrow.
\end{equation}
Then by intermediate value theorem, for all children $k\in j^\downarrow$ there exist $\hat{\pi}_k\in I_k$ such that the sender in state $k$ is indifferent. For small $\epsilon$, and $\hat{\pi}_{j^\uparrow}\rightarrow \pi_{j^\uparrow}$, these inequalities can be satisfied with $I_k\rightarrow \{\pi_k\}$ and thus $\hat{\pi}_k\rightarrow \pi_k$. Applying this inductively down the tree gives us $\hat{\pi}\rightarrow \pi$.

Now we show that these neighbouring indifferences generate strict preferences for neighbouring actions as $\epsilon\rightarrow 0$. Suppose the actions are within a neighbourhood $N_\pi\ni\pi$ so that eq. \ref{eq:GCM} holds, then
\begin{equation*}
\begin{split}
    u(\hat{\pi}_1)-u(\hat{\pi}_N)+\epsilon\left(v_1(\hat{\pi}_1)-v_1(\hat{\pi}_N)\right)>&\sum_{i=2}^{N} u(\hat{\pi}_{i-1})-u(\hat{\pi}_{i})+\epsilon\left(v_i(\hat{\pi}_{i-1})-v_i(\hat{\pi}_{i})\right)\\
    =&\sum_{i=2}^{N}u_S^{\epsilon v}(\hat{\pi}_{i-1}\rvert \theta_i)-u_S^{\epsilon v}(\hat{\pi}_{i}\rvert \theta_i)= 0.\qedhere
\end{split}
\end{equation*}
\end{proof}
Proposition \ref{prop:acyc} and Theorem \ref{thm:gen} provide two immediate corollaries to this result:
\begin{cor}
\begin{enumerate}
    \item For generic $u_R$, a candidate equilibrium $\sigma\in\Sigma(u^0,u_R)$ is $\mathscr{O}$-stable for some non-empty set of modifications $\mathscr{O}$ iff its communication graph $G(\sigma)$ is acyclic.
    \item For generic $u_R$, there are finitely many candidate equilibria (up to information equivalence) that are $\mathscr{O}$-stable for some non-empty set of modifications $\mathscr{O}$.
\end{enumerate}
\end{cor}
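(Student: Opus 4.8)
The plan is to obtain both parts as essentially immediate consequences of Theorem~\ref{thm:main}, Proposition~\ref{prop:acyc}, and Theorem~\ref{thm:gen}, once one records the elementary fact that $\GCM$ is an open subset of $\mathbb{R}^{(\mathcal{A}\times M)\times\Theta}$. Throughout I fix $u_R$ in the intersection of the (co-generic) parameter sets for which Theorem~\ref{thm:main} and Theorem~\ref{thm:gen} hold; this intersection is again co-generic.

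\emph{Part (1).} Every set $\mathscr{O}$ appearing in the definition of $\mathscr{O}$-stability is by hypothesis open, so ``$\sigma$ is $\mathscr{O}$-stable for some non-empty $\mathscr{O}$'' coincides with ``$\sigma$ is $\mathscr{O}$-stable for some non-empty \emph{open} $\mathscr{O}$''. By Theorem~\ref{thm:main} this is equivalent to the existence of a non-empty open $\mathscr{O}\subseteq\GCM$, i.e.\ to $\GCM$ having non-empty interior. Next I argue that $\GCM$ is open: each defining inequality of graph monotonicity is a \emph{strict} linear inequality in $v$, and since $G(\sigma)$ is finite it has only finitely many simple cycles; the inequality attached to any longer (possibly vertex-repeating) path reduces to a non-negative combination of simple-cycle inequalities, since an edge traversed in both orientations contributes $\bigl(v(\pi\mid\theta)-v(\pi'\mid\theta)\bigr)+\bigl(v(\pi'\mid\theta)-v(\pi\mid\theta)\bigr)=0$. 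Hence $\GCM$ is a finite intersection of open half-spaces, so its interior is non-empty iff $\GCM\neq\emptyset$. Proposition~\ref{prop:acyc} identifies the latter with $G(\sigma)$ being acyclic, which completes part (1).

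\emph{Part (2).} By part (1) the candidate equilibria that are $\mathscr{O}$-stable for some non-empty $\mathscr{O}$ are exactly those whose communication graph is acyclic. Theorem~\ref{thm:gen}(d) asserts that, for generic $u_R$, there are finitely many acyclic candidate equilibria up to information equivalence (each pinned down by its communication graph, and there are finitely many such bipartite graphs on the vertex set $\Theta$). Combining this with part (1) gives part (2).

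The only step carrying content beyond bookkeeping is the openness of $\GCM$, and within it the reduction from the a priori infinite family of path inequalities to the finitely many simple-cycle inequalities — precisely the observation already underlying Proposition~\ref{prop:acyc}. I therefore expect no real obstacle: the statement is a corollary in the literal sense, and the argument above is close to complete as written.
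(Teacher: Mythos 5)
Your proposal is correct and follows the same route the paper intends: the paper presents this as an immediate consequence of Theorem~\ref{thm:main}, Proposition~\ref{prop:acyc}, and Theorem~\ref{thm:gen}(d), with no separate proof given. The one detail you add — that $\GCM$ is open because it is cut out by finitely many strict linear inequalities, so ``contains a non-empty open set'' collapses to ``is non-empty'' — is exactly the bookkeeping the paper leaves implicit, and your handling of it is fine.
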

\subsection{Reintroducing Idiosyncrasy}
\label{sec:idio}
\label{sec:Harstab}
To establish robustness of these equilibria, consider a sender preference with an idiosyncratic perturbation
$$u_S^{\epsilon V}(\pi\rvert \theta,\omega)=u^0(\pi)+\epsilon V(\pi\rvert \theta,\omega).$$

Note the determining factor for the distribution of messages sent from a state is the distribution of sender preferences in that state, ie. the conditional distribution of $V(\pi\rvert \theta,\cdot)$, and is independent of any correlation of preferences across states. Indeed, our assumption in Section \ref{sec:frag} that receiver and sender have independent preferences can be replaced with the assumption that $V(\pi\rvert \theta,\cdot)$ are identically distributed across states. To reflect this dependence on only the conditional distributions, we define the \emph{state-factored support} of an idiosyncratic perturbation $V$:
\begin{equation}\label{eq:supps}
    \supps{V}:=\bigtimes_{\theta\in \Theta}\supp{V(\pi\rvert \theta)}.
\end{equation}
Note that $\supp{V}\subseteq \supps{V}$ with equality when the state-dependent perturbations $\{V(\pi\rvert \theta)\}_\theta$ are mutually independent.\footnote{\label{fn:indep}This independence can always be obtained by reparametrizing idiosyncrasy --- if the idiosycrasy space is redefined as the product probability space $\tilde{\Omega}:=\Omega^\Theta$, the preference $\tilde{V}(\cdot\rvert \theta,\tilde{\omega}=(\omega_{\theta_1},\dots,\omega_{\theta_N})):=V(\cdot\rvert \theta,\omega_\theta)$ satisfies this independence and has the same conditional distributions as $V$. We find this parametrization unintuitive, as it prevents sender preferences from being correlated across states.}

To extend Theorem \ref{thm:main} to idiosyncratic perturbations, we make the following definition:
\begin{defn}\label{def:stab}
For a nonempty open set of modifications $\mathscr{O}\subseteq \mathbb{R}^{(\mathcal{A}\times M)\times \Theta}$, we say a candidate equilibrium $\sigma\in\Sigma(u^0)$ is \emph{Harsanyi $\mathscr{O}$-stable} if, for any compactly supported perturbation $V$ with $\supps{V}\subseteq 
\mathscr{O}$, there exists equilibria $\sigma^{\epsilon}\in \Sigma(u_S^{\epsilon V})$ to the game with sender preference $u_S^{\epsilon V}$ such that $\sigma^{\epsilon}\rightarrow \sigma$ as $\epsilon\rightarrow 0$.
\end{defn}
Note that Harsanyi stability is equivalent to Harsanyi $\mathbb{R}^{(\mathcal{A}\times M)\times \Theta}$-stability. Moreover, Harsanyi $\mathscr{O}$-stability is a stronger condition than $\mathscr{O}$-stability. Nevertheless, our result ends up closely mirroring Theorem \ref{thm:main}:
\begin{thm}\label{thm:mainB}
    A tree candidate equilibrium $\sigma\in \Sigma(u^0)$ is Harsanyi $\mathscr{O}$-stable iff $\mathscr{O}\subseteq \GCM$.

    Suppose $v\in \GCM$ is a non-idiosyncratic modification and $V_n\rightarrow v$ uniformly, then there exist $N,\overline{\epsilon}>0$, and equilibria $\sigma^{\epsilon V_n}\in\Sigma(u_S^{\epsilon V_n})$ and $\sigma^{\epsilon v}\in\Sigma(u_S^{\epsilon v})$ such that
    \begin{align*}
         (1)&\qquad \sigma^{\epsilon V_n}\rightarrow \sigma^{\epsilon v}\qquad \text{as }n\rightarrow \infty,\text{ for }\epsilon<\overline{\epsilon}, \text{ and}\\
         (2)&\qquad \sigma^{\epsilon V_n},\sigma^{\epsilon v}\rightarrow \sigma\qquad \text{as }\epsilon\rightarrow 0,\text{ for }n>N.
    \end{align*}
\end{thm}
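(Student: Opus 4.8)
The plan is to reuse the inductive ``rooted-tree'' construction from the proof of Theorem~\ref{thm:main}, upgraded in two ways: (i) the intermediate-value argument that produced an indifferent action in each interval $I_k$ must be replaced by a population-balance argument, since with idiosyncrasy we no longer need a single sender to be indifferent — instead we need the \emph{fraction} of senders in state $\theta_j$ sending each child message to match the probabilities dictated by the target posteriors; and (ii) all estimates must be made uniform in the perturbation, so that the same $\overline\epsilon$ and the same construction work simultaneously for $v$ and for all $V_n$ with $n$ large. I would first prove the ``only if'' direction of the iff: if $\sigma$ is Harsanyi $\mathscr O$-stable then $\mathscr O\subseteq\GCM$. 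This is immediate from Theorem~\ref{thm:main}, since Harsanyi $\mathscr O$-stability implies $\mathscr O$-stability (take $V$ non-idiosyncratic), and $\mathscr O$-stability already forces $\mathscr O\subseteq\GCM$; alternatively one can rerun the necessity computation from the proof of Theorem~\ref{thm:main} with $\hat\pi_i$ now interpreted as an action played by a positive-measure set of $\omega$'s in state $\theta_i$.

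For the ``if'' direction, fix a compactly supported $V$ with $\supps V\subseteq\mathscr O\subseteq\GCM$. Root the tree $G(\sigma)$ at its unique pure-action vertex $\pi_0$ and process nodes in breadth-first order exactly as in Figure~\ref{fig:roottree}. Suppose inductively that the parent action $\hat\pi_{j^\uparrow}$ has been constructed, close (uniformly in $\omega$) to the candidate action $\pi_{j^\uparrow}$. For the state $\theta_j$: each idiosyncrasy type $\omega$ has a preference $u^0(\cdot)+\epsilon V(\cdot\mid\theta_j,\omega)$ over the finitely many equilibrium actions, and since $u^0$ is injective on pure actions and the candidate actions $\{\pi_k\}_{k\in\mathcal N(j)}$ are (generically, by Theorem~\ref{thm:gen}) binary mixtures whose $u^0$-values agree to first order, for small $\epsilon$ each $\omega$'s best response among $\{\hat\pi_{j^\uparrow}\}\cup\{\hat\pi_k\}_{k\in j^\downarrow}$ is determined by the ranking of the $V(\cdot\mid\theta_j,\omega)$-terms — which lies in $\supp V(\cdot\mid\theta_j)$, hence is controlled by $\mathscr O$. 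I would then choose, for each child $k\in j^\downarrow$, an action $\hat\pi_k$ inside a small interval $I_k\subseteq\supp\pi_k$ so that the measure of types $\omega$ whose best response is $\hat\pi_k$ equals the required conditional probability $\Pcond{m_k}{\theta_j}$; the existence of such a choice follows because sliding $\hat\pi_k$ along $I_k$ continuously trades off the indirect-utility $u^0$-value against $\hat\pi_{j^\uparrow}$'s, moving the indifference threshold monotonically across the (atomless, by richness of $\Omega$) distribution of $V(\cdot\mid\theta_j,\cdot)$, so an intermediate-value / Poincaré–Miranda argument on the vector of child-probabilities applies. Richness of the probability space $\Omega$ and compact support of $V$ are what make these thresholds move continuously and cover the needed range; this is where I'd be most careful.

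The final and main step is propagating strict incentives: having matched populations, I must check no sender in state $\theta_i$ wants to jump to a \emph{non-neighbouring} action. This is precisely where $\GCM$ enters, and it is the same computation as the ``generic sufficiency'' display in the proof of Theorem~\ref{thm:main}: summing the neighbour-indifference identities along any path and comparing to the direct utility difference yields a strict inequality of sign fixed by $\GCM$, provided all constructed actions lie in the neighbourhood $N_\pi$ on which \eqref{eq:GCM} holds. The one new subtlety is that with idiosyncrasy the relevant inequality must hold for \emph{every} $\omega$, not just on average — but since $\supps V\subseteq\GCM$ and $\GCM$ is an open condition, there is a neighbourhood of $\pi$ and a threshold $\overline\epsilon$ such that the path-sum inequality holds for all $\omega$ simultaneously and all $\epsilon<\overline\epsilon$; compactness of $\supp V$ keeps this uniform. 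For the two convergence claims: statement~(2) is exactly the conclusion just obtained (the construction yields $\sigma^{\epsilon V}\to\sigma$ as $\epsilon\to0$), applied to $V=V_n$ and to $V=v$, with $N$ chosen so that for $n>N$ the perturbation $V_n$ is $C^0$-close enough to $v$ that $\supps{V_n}$ still lies in $\GCM$ and the neighbourhood estimates go through. For statement~(1), fix $\epsilon<\overline\epsilon$: the construction above is continuous in the perturbation data in the $C^0$-topology — each interval $I_k$, each population threshold, and each resulting $\hat\pi_k$ depends continuously on the conditional law of $V(\cdot\mid\theta_j,\cdot)$ — so $V_n\to v$ uniformly forces the constructed equilibria $\sigma^{\epsilon V_n}\to\sigma^{\epsilon v}$. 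The main obstacle, and the place to spend the most care, is the population-matching / continuity-in-perturbation argument in the second paragraph: making precise that as $\hat\pi_k$ slides across $I_k$ the induced fraction of types choosing it sweeps monotonically through the full required range, uniformly enough that the same $\overline\epsilon$ works for an entire $C^0$-neighbourhood of $v$.
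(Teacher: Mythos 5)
Your architecture matches the paper's: necessity reduces to Theorem \ref{thm:main} (Harsanyi $\mathscr{O}$-stability implies $\mathscr{O}$-stability), and sufficiency proceeds by the same rooted-tree induction, first matching population probabilities over neighbouring messages and then using $\GCM$ to kill non-neighbouring deviations. But both places where the paper does genuine work beyond Theorem \ref{thm:main} are exactly the places your sketch has gaps. First, the population-matching step: you invoke an intermediate-value/Poincar\'e--Miranda argument and justify continuity by claiming the distribution of $V(\cdot\mid\theta_j,\cdot)$ is ``atomless, by richness of $\Omega$.'' Richness of $\Omega$ does not give you atomlessness of the induced law of $V$ --- the definition of Harsanyi $\mathscr{O}$-stability quantifies over \emph{all} compactly supported $V$ with $\supps{V}\subseteq\mathscr{O}$, including degenerate (non-idiosyncratic) ones, which is precisely the case needed for part (1) of the theorem. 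With atoms, the map from child actions to the fraction of types choosing each child is a correspondence, not a function, and the siblings' best-response regions interact simultaneously. The paper handles this with a dedicated fixed-point lemma (Lemma \ref{lem:NIC}): it builds upper-hemicontinuous interval-valued correspondences $g_j^\ast$ on the product of intervals $B_j$ and applies Kakutani, using the boundary conditions \eqref{eq:NIC.bound} to rule out corner fixed points. Your sketch needs this (or an equivalent selection argument); a coordinate-wise IVT does not suffice.

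Second, the non-neighbouring-deviation step: you propose to ``sum the neighbour-indifference identities along any path,'' but with idiosyncrasy no individual type at an intermediate state $\theta_i$ is indifferent between $\hat{\pi}_{i-1}$ and $\hat{\pi}_i$ --- only the population splits between them. The telescoping identity you want does not exist $\omega$-by-$\omega$. The paper's fix is to note that, because the population splits, the intermediate value theorem yields at each intermediate state a convex combination $v^\ast(\cdot\mid\theta_i)\in\text{co}\bracket{\supp{V(\cdot\mid\theta_i)}}$ that \emph{is} exactly indifferent; one then forms the hybrid modification equal to $v^\ast$ at states $\theta\neq\theta_1$ and to the realized $V(\cdot\mid\theta_1,\omega)$ at $\theta_1$, observes that it lies in $\text{co}\bracket{\supps{V}}\subseteq\GCM$ by convexity of $\GCM$ and the product structure of the state-factored support, and only then runs the Theorem \ref{thm:main} computation. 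Your appeal to openness of $\GCM$ and compactness of $\supp{V}$ addresses uniformity of the inequality, which is the easy part, but not the existence of the indifference identities being summed. Both gaps are repairable and your instincts about where to be careful are correct, but as written the proof does not go through for non-idiosyncratic or atomic perturbations, which the theorem explicitly covers.
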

The first statement of the theorem says that a slight idiosyncratic perturbation in the direction $\GCM$ will allow us to approximate the candidate equilibrium $\sigma$. The second statement says that the approximating equilibria that we obtained in Theorem \ref{thm:main} are Harsanyi-stable, note that the limits (2) are an immediate consequence of the first statement.\footnote{We expect this is a complete description of Harsanyi stable approximating equilibria when either (1) $V$ is action-independent, (2) $\sigma$ is acyclic, or (3) $|\mathcal{A}|=3$. Otherwise, there may be cyclic candidate equilibria that can be approximated with Harsanyi stable equilibria --- see \cite{SGGK23} for analysis in two-state models. In Appendix \ref{app:weakstab} the distinction between Harsanyi stable approximating equilibria and Harsanyi $\mathscr{O}$-stability is discussed in the context of an example.
    }

The proof uses the same rooted tree as in the proof of Theorem \ref{thm:main}. The goal is to prove that senders in a given state can be induced to send each neighbouring message (on $G(\sigma)$) with the correct probability. To simplify this, we temporarily ignore the possibility of sending non-neighbouring messages, and anticipate our argument moving inductively down the tree from the root with fixed (pure) action. Thus at each state, its parent action is fixed and we seek to perturb its child actions induce the desired sender strategy.

The following lemma provides a sufficient condition for this to be obtainable within a neighbourhood of equilibrium actions. For a message-action $\pi=(p,m)$ we denote $\supp{\pi}=\{(a,m);a\in \supp{p}\}$. 
\begin{lem}[Neighbour Incentive Compatability]\label{lem:NIC}
Let $u_S:\Delta\mathcal{A}\times M\times\Omega\rightarrow \mathbb{R}$ be some idiosyncratic sender preference, $\mathcal{N}$ be a finite set of message-actions containing at most one pure action, and $m\in\Delta\mathcal{N}$ be a mixed sender strategy over this set. If $\mathcal{N}$ includes a pure action, denote it by $\pi_0$, otherwise fix $\pi_0\in\mathcal{N}$ an arbitrary message-action. For every other action $\pi_j\in \mathcal{N}\setminus\{\pi_0\}$, let $ B_j\subseteq \Delta\supp{\pi_j}$ be closed, convex intervals that satisfy 
\begin{align}\label{eq:NIC.bound}
    \inf_{\pi'\in B_j}\P[\omega]{u_S(\pi')>u_S(\pi_{0})}=&0&
    \sup_{\pi'\in B_j}\P[\omega]{u_S(\pi')>u_S(\pi_{0})}=&1.
\end{align}
There then exists a profile of actions $\hat{\pi}_+\in \bigtimes_{j\neq 0} B_j$ that induce the sender population best response is $m$ when limited to messages in $\mathcal{N}$.
\end{lem}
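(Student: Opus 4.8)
The plan is to recast the lemma as a surjectivity statement for an aggregate-demand map and solve it by a fixed-point (tâtonnement) argument, with hypothesis \ref{eq:NIC.bound} supplying the boundary behaviour. Write $n:=|\mathcal{N}|-1$, $\mathcal{N}=\{\pi_0\}\cup\{\pi_j\}_{j=1}^{n}$, and fix the target $m=(m_i)_i\in\Delta\mathcal{N}$. Since each $B_j$ is a line segment, parametrise it by $t_j\in[0,1]$ through a continuous path $t_j\mapsto\hat{\pi}_j(t_j)\in B_j$ with $\hat{\pi}_j(0)=\underline{\pi}_j$, where $\underline{\pi}_j\in B_j$ satisfies $\P[\omega]{u_S(\underline{\pi}_j)>u_S(\pi_0)}=0$ --- such a point exists because $\pi'\mapsto\P[\omega]{u_S(\pi')>u_S(\pi_0)}$ is lower-semicontinuous (Fatou) and hence attains its infimum $0$ on the compact $B_j$ --- and with $\P[\omega]{u_S(\hat{\pi}_j(t_j))>u_S(\pi_0)}\to1$ as $t_j\to1$, possible since that probability has supremum $1$ over $B_j$. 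For $t\in[0,1]^n$ set $\mathcal{N}(t):=\{\pi_0\}\cup\{\hat{\pi}_j(t)\}_j$ and let $F(t)\in\Delta\mathcal{N}$ be the sender-population choice distribution over $\mathcal{N}(t)$ when ties are broken towards $\pi_0$; more generally, let $D(t)\subseteq\Delta\mathcal{N}$ be the set of choice distributions attainable by an arbitrary measurable tie-breaking rule, a convex- and compact-valued, upper-hemicontinuous correspondence. It suffices to find $t^{\ast}$ with $m\in D(t^{\ast})$. A routine perturbation of $u_S$ --- adding a small idiosyncratic term with a density (so that, at every $t$, no two elements of $\mathcal{N}(t)$ are ranked alike by a positive mass of idiosyncrasies, making $F$ a \emph{continuous} function) and nudging the utility of $\pi_0$ slightly upward (so that the first equality of \ref{eq:NIC.bound} survives exactly and the second up to $o(1)$) --- lets us assume these ties are null; the general statement is recovered by letting the perturbation vanish and using compactness of $\bigtimes_{j\neq0}B_j$ with joint upper-hemicontinuity of $D$. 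We may likewise assume $m_0>0$, the case $m_0=0$ following by the same limiting device from targets with positive $\pi_0$-share.

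For the core step, fix a small $\delta$ (with $\delta<m_0$, and with the paths chosen so that the strict-preference probability of $\hat{\pi}_j(t_j)$ over $\pi_0$ is at least $1-\delta$ at $t_j=1-\delta$) and apply Brouwer's theorem to $t\mapsto\bigl(\mathrm{clip}_{[0,1-\delta]}(t_j+m_j-F_j(t))\bigr)_{j=1}^{n}$ on $[0,1-\delta]^n$, obtaining a fixed point $t^{\ast}$. Reading the clip relation coordinatewise: if $t^{\ast}_j\in(0,1-\delta)$ then $F_j(t^{\ast})=m_j$; if $t^{\ast}_j=0$ then $\hat{\pi}_j(t^{\ast})=\underline{\pi}_j$ and any idiosyncrasy choosing it must rank it strictly above $\pi_0$ --- a null event, ties going to $\pi_0$ --- so $F_j(t^{\ast})=0$ and the clip relation forces $m_j=0=F_j(t^{\ast})$; and $t^{\ast}_j=1-\delta$ is impossible, because then $F_0(t^{\ast})\le1-\P[\omega]{u_S(\hat{\pi}_j(t^{\ast}))>u_S(\pi_0)}\le\delta$, so $\sum_{k\ge1}F_k(t^{\ast})\ge1-\delta>1-m_0=\sum_{k\ge1}m_k$, giving $F_{k_0}(t^{\ast})>m_{k_0}\ge0$ for some $k_0\ge1$, which by the clip relation forces $t^{\ast}_{k_0}=0$ and hence $F_{k_0}(t^{\ast})=0$, a contradiction. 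Thus $F_j(t^{\ast})=m_j$ for all $j\ge1$, so also $F_0(t^{\ast})=m_0$, and $\hat{\pi}_+:=(\hat{\pi}_j(t^{\ast}))_j\in\bigtimes_{j\neq0}B_j$ induces the population best response $m$; un-perturbing as above finishes the argument.

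The reason no one-shot construction works, and where I expect the real effort, is that the demand map is genuinely ill-behaved: moving $\hat{\pi}_j$ inside $B_j$ can raise $u_S(\hat{\pi}_j\mid\omega)$ for some idiosyncrasies while lowering it for others, so demand is neither monotone in the parametrisation nor (before perturbing) continuous, and since adjusting one item shifts every item's share, one cannot tune the $\hat{\pi}_j$ coordinate by coordinate or invoke any lattice or intermediate-value shortcut --- the simultaneous fixed point is essential. Hypothesis \ref{eq:NIC.bound} is exactly what kills the two degenerate fixed points (an item stuck at the bottom of $B_j$ with a positive target share, or at the top with $\pi_0$ keeping spurious demand). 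The most delicate bookkeeping, which I would treat carefully, is the tie handling sketched above --- arranging the perturbation to make ties null while keeping \ref{eq:NIC.bound} in force, and checking that the limit as the perturbation vanishes lands back inside every $B_j$ and achieves $m$ exactly rather than only approximately.
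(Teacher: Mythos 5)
Your argument is correct in outline and, interestingly, shares the same skeleton as the paper's proof while differing in the machinery. The paper also works with a fixed point on $\bigtimes_{j\neq 0}B_j$ and uses eq.~\ref{eq:NIC.bound} at the two ends of each interval exactly as you do: the bottom point of $B_j$ kills demand for $\pi_j$, the top kills demand for $\pi_0$, and the boundary contradiction runs through $\pi_0$'s share. But instead of Brouwer on a clipped t\^atonnement map, it applies Kakutani to the correspondence $g^\ast_j(\pi'_{-j})$ that selects the level set $\{\pi'_j\in B_j:\hat{m}_j(\pi'_j;\pi'_{-j})=m_j\}$ (or the top of $B_j$ when that set is empty), where $\hat{m}_j$ is the set-valued population demand. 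This hinges on $\hat{m}_j(\cdot,\pi_{-j})$ being monotone along $B_j$ --- which the paper gets from vNM linearity together with an implicit unanimity of the ordinal ranking of $\supp{\pi_j}$ across idiosyncrasies (true in every context where the lemma is invoked, though not literally among the lemma's stated hypotheses) --- so that an intermediate-value selection exists and $g^\ast_j$ is interval-valued. You deliberately avoid that monotonicity, which is why you need the simultaneous Brouwer adjustment rather than a coordinatewise intermediate-value argument; in exchange you must make demand single-valued and continuous via a density perturbation and then pass to the limit, whereas the paper absorbs ties directly into the demand correspondence (the box-intersect-simplex construction) and lets Kakutani handle upper-hemicontinuity. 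Your boundary analysis at $t^\ast_j=0$ and $t^\ast_j=1-\delta$ is sound, and your use of lower-semicontinuity to attain the infimum in eq.~\ref{eq:NIC.bound} (while only approaching the supremum) is handled correctly by evaluating the path at $t_j=1-\delta$ rather than claiming a limit. Net: your proof is marginally more robust (no unanimity within each $\supp{\pi_j}$ needed) at the cost of the perturbation-and-limit bookkeeping you flag yourself; the paper's is shorter because monotonicity comes for free where the lemma is applied.
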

This says that, fixing the state, we can induce any sender strategy (over neighbouring messages) whenever the parent action $\pi_0$ is unanimously ranked in the interior of each neighbourhood $B_j$ of child actions.

To conclude the proof, we verify that small neighbourhoods $B_j$ satisfying eq. \ref{eq:NIC.bound} can simultaneously be chosen for every message-action in $\sigma(S)$, then apply Theorem \ref{thm:main} to show that no sender will deviate to non-neighbouring messages.

\section{Ordinally Transparent Preferences}
So far we have limited our analysis to approximately transparent preferences, providing a type of continuity result, but in reality agents' preferences are always a finite distance away from transparency, and may exhibit significant state dependence without fundamentally disturbing our equilibria from the previous section. 

In this section we consider a broad class of preferences that we call \emph{ordinally transparent}, and provide a sufficient condition on the preference's state dependence to admit persuasive equilibria. This novel condition bridges graph monoticity with more traditional single crossing conditions.

\begin{defn}[Ordinal Transparence]
    We say that a sender's preference $u_S$ is \emph{ordinally transparent} if sender's have a strict unanimous ranking over all $(a',m'), (a,m)$ with $a'\neq a$:
    \begin{equation}\label{eq:strong}\begin{split}
        \P{u_S(a,m\rvert \theta,\omega)=u_S(a',m'\rvert \theta,\omega)}=&0\\
        \P{u_S(a,m\rvert \theta,\omega)>u_S(a',m'\rvert \theta,\omega)}\in&\{0,1\}
    \end{split}
    \end{equation}
    The utility $u^0:\Delta\mathcal{A}\times M\rightarrow \mathbb{R}$ is a \emph{transparent representation} of $u_S$ if
    $$
    u^0(a,m)>u^0(a',m')\qquad\Leftrightarrow\qquad \P{u_S(a,m\rvert \theta,\omega)>u_S(a',m'\rvert \theta,\omega)} =1
    $$
\end{defn} 
We exclude $a=a'$ in our definition of ordinal transparence to include cheap talk in our analysis, paralleling Assumption (S).

While transparent preferences are public and state-independent, ordinally transparent preferences may exhibit idiosyncrasy and state-dependence to a moderate degree, so long as the ordinal ranking of pure outcomes is transparent. Within cheap talk, this amounts to having a strict, transparent ordinal preference over $\mathcal{A}$ --- however the risk attitudes can vary significantly. Any ordinally transparent preference can be decomposed into its transparent representation $u^0$ plus a perturbation that preserves this ordinal ranking.

The \cite{IK87} Beer-Quiche Game provides a classic example of ordinally transparent preferences, where depending on the state (the sender being `weak' or `strong'), senders have slightly different preference over messages (having quiche or beer), but they have a transparent preference over the receiver's action (preferring to avoid a duel). This game admits a persuasive `partial pooling' equilibrium when the sender is weak with high probability, or if a message can be accompanied by burning money.

To compare equilibria within this broad class of preferences, we use informational equivalence:
\begin{defn}
    We say two strategy profiles $\sigma,\sigma'$ are \emph{informationally equivalent} if they induce the same distribution of posterior beliefs in the receiver.
\end{defn}
Two informationally equivalent equilibria may differ in the receiver's choice of mixed action or by permutating the messages. In this section only the former effect --- the receiver adjusting their action to generate sender indifference --- is relevant.

Our analysis of ordinally transparent preferences begins by observing that their equilibria are represented by our candidate equilibria from earlier sections:
\begin{prop}\label{prop:translucent}
    If the sender has ordinally transparent preference $u_S$, then any equilibrium $\sigma$ is informationally equivalent to a candidate equilibrium $\sigma'\in\Sigma(u^0)$, where $u^0$ is a transparent representation of $u_S$.
\end{prop}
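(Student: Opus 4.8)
The plan is to show that any equilibrium under an ordinally transparent preference $u_S$ is, after forgetting how the receiver mixes, a candidate equilibrium of the transparent-representation game. Let $\sigma=(\mathscr{M},\nu,\mathscr{P})$ be a perfect Bayesian equilibrium of the game with sender preference $u_S$. Because $u_S$ is ordinally transparent, for almost every $\omega$ the sender in any state $\theta$ strictly ranks any two message-action pairs that induce distinct pure actions, and moreover the \emph{direction} of that ranking agrees with the transparent representation $u^0$ (this is exactly the content of the two displays in the definition of ordinal transparence together with the definition of a transparent representation). In particular, if $(p,m)$ and $(p',m')$ are two \emph{mixed} actions both played on the equilibrium path and the sender in state $\theta$ is indifferent between them for a positive-probability set of $\omega$, then $u^0$ cannot strictly rank them either way --- because a strict $u^0$-ranking would, by vNM linearity, force a strict unanimous ranking of at least one pure action in the support of one over the other, contradicting indifference. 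Hence $u^0(p,m)=u^0(p',m')$ for all such on-path pairs.

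First I would record the key reduction: the set of messages $M_0=\supp{\mathscr{M}}$ (over all states) and, for each on-path $m$, the induced posterior $\nu(m)$ and the receiver's chosen mixed action $\mathscr{P}(m)$. I would then argue that all on-path messages deliver the same $u^0$-value. Take any two messages $m,m'\in M_0$, sent (with positive probability across some states) and connected through a chain of sender indifferences --- more carefully, it suffices to compare $m,m'$ sent from a common state $\theta$, where sender-IC (eq.~\ref{eq:sIC}) for the $u_S$-game forces $u_S(\mathscr{P}(m),m\mid\theta,\omega)=u_S(\mathscr{P}(m'),m'\mid\theta,\omega)$ on a positive-probability $\omega$-set; by the paragraph above this gives $u^0(\mathscr{P}(m),m)=u^0(\mathscr{P}(m'),m')$. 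For messages not sharing a common state one walks along the communication graph $G(\sigma)$; if $G(\sigma)$ is disconnected the two separated components correspond to disjoint information partitions and one simply relabels the messages in one component so that the common value $\overline u$ is used throughout (this is where informational equivalence rather than equality of strategies is needed — the $u^0$-values on separated components need not originally coincide, but relabelling messages does not change the induced distribution of posteriors). This produces a candidate strategy profile $\sigma'$ in which the sender's strategy sends each on-path message with the same conditional-on-state probabilities as $\mathscr{M}$ (marginalizing out $\omega$), the receiver holds the same beliefs $\nu$ and plays the same actions $\mathscr{P}$, and all on-path $u^0$-values equal a single number $\overline u$.

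Next I would verify $\sigma'\in\Sigma(u^0)$, i.e.\ that $\sigma'$ satisfies the candidate-equilibrium conditions of Section~\ref{sec:candid}: (i) sender-IC for $u^0$ holds because on-path pairs all give $u^0=\overline u$ and because the off-path bound \eqref{eq:offpath} is inherited --- an off-path message $m$ that gave the sender $u^0$-value above $\overline u$ via its best-response action would, by ordinal transparence and vNM linearity, give $u_S(\mathscr{P}(m),m\mid\theta,\omega)>\overline u$ on a positive-probability $\omega$-set for some $\theta$, contradicting sender-IC of $\sigma$ in the $u_S$-game (here one uses that the receiver's best response to the off-path belief is the same object in both games, since $u_R$ is unchanged); (ii) Bayes plausibility \eqref{plaus} and receiver-IC \eqref{eq:rIC} are literally unchanged, since $\nu$ and $\mathscr{P}$ and $u_R$ are untouched and the induced posteriors of $\sigma'$ are those of $\sigma$. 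Finally, $\sigma$ and $\sigma'$ are informationally equivalent by construction, since message relabelling and marginalizing out $\omega$ leave the distribution over induced posterior beliefs invariant.

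\textbf{Main obstacle.} The delicate point is the step from ``sender indifferent between two on-path \emph{mixed} actions'' to ``$u^0$ does not strictly rank them.'' Ordinal transparence only constrains rankings of pairs that induce \emph{distinct pure} actions, so I must use vNM linearity carefully: if $u^0(p,m)\neq u^0(p',m')$, expand both sides over the supports and find a pure pair $(a,m),(a',m')$ with $a\neq a'$ on which $u^0$ strictly disagrees in the same direction; then ordinal transparence upgrades this to a unanimous strict $u_S$-ranking of that pure pair, and a convexity/averaging argument shows the sender cannot then be indifferent between $(p,m)$ and $(p',m')$ for a positive mass of $\omega$ --- contradiction. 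A secondary subtlety, already flagged above, is that when $G(\sigma)$ is disconnected one genuinely needs the freedom to relabel messages across components, which is why the conclusion is phrased up to informational equivalence rather than as an identity of strategy profiles.
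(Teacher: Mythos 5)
There is a genuine gap, and it sits exactly at the step you flag as the main obstacle. Your central claim --- that if a positive-probability set of senders is indifferent between two on-path mixed actions $(p,m)$ and $(p',m')$ then $u^0$ cannot strictly rank them --- is false. Ordinal transparence pins down only the \emph{ordinal} ranking of pure outcomes; it says nothing about risk attitudes, which the paper explicitly allows to ``vary significantly'' across states and idiosyncrasies. Concretely, take cheap talk with $u^0(a_1)<u^0(a_2)<u^0(a_3)$, $p=\delta_{a_2}$ and $p'=\tfrac12(a_1\oplus a_3)$: a sender type can be exactly indifferent between $p$ and $p'$ while $u^0(p)\neq u^0(p')$, because the comparison of the two mixtures depends on cardinal curvature, not on the ordinal ranking of pure actions. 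Your proposed repair --- find a pure pair ranked strictly by $u^0$ ``in the same direction'' and average --- does not go through, since whenever the supports overlap in the $u^0$ order there are also pure pairs ranked in the opposite direction, and the averaging argument yields nothing. The only inference ordinal transparence licenses is the one the paper uses: if $\max u^0(\supp{\pi})<\min u^0(\supp{\pi'})$, i.e.\ \emph{every} pure outcome of $\pi'$ unanimously beats \emph{every} pure outcome of $\pi$, then $u_S(\pi)<u_S(\pi')$ w.p.~1, so $\pi$ cannot be on path. Hence the intervals $\left[\min u^0(\supp{\pi}),\max u^0(\supp{\pi})\right]$ over on-path $\pi$ must pairwise intersect, and therefore (being intervals on a line) share a common point $\overline{u}$.

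This error propagates into your construction: you keep the receiver's actions $\mathscr{P}$ unchanged and assert that all on-path $u^0$-values already coincide, but they generically do not, so your $\sigma'$ fails sender-IC for $u^0$. The paper instead \emph{re-chooses the receiver's mixing weights within the same supports} so that every on-path message attains the common value $\overline{u}$ --- the supports, hence the posteriors and the receiver's best-response property, are preserved, which is precisely why the proposition is stated up to informational equivalence. Your attribution of the ``informational equivalence'' caveat to message relabelling across disconnected components misses this; the paper states explicitly that the relevant slack here is the receiver adjusting mixed actions, not permuting messages. The off-path condition (eq.~\ref{eq:offpath}) is then checked by pushing all equilibrium actions up (maintaining indifference) until one is pure and noting that whichever sender type plays the $u^0$-minimal on-path message must weakly prefer its best supported pure outcome to any off-path pure temptation.
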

From an information perspective the equilibria of a model with ordinally transparent preferences are a subset of candidate equilibria. This allows our analysis in Section \ref{sec:receiver}, studying the generic structure of candidate equilibria, to extend to describe the equilibria of models with ordinally transparent preferences.

\label{sec:bigeps}
Naturally, we seek to answer which of these equilibria will be admitted by such a model. A simple exercise will show that in two-state models, any equilibrium for a small perturbation will also be preserved (up to informational equivalence) as $\epsilon$ increases, so long as the preference remains ordinally transparent. Thus it is sufficient that $u_S$ is graph monotone. This is not true in general, however we can provide a stronger sufficient condition.

First, we observe that graph monotonicity is implied by the following condition, which we term \emph{strong graph-monotonicity}\footnote{We speculate that this concept may have applications in mechanism design. See Appendix \ref{app:CM} for details.} ($\SLID$):

\begin{prop}\label{prop:sandwich} 

If $v$ satisfies
\begin{equation}\tag{SGM}
    v(\pi_0\rvert \theta_0)-v(\pi_N\rvert \theta_0)\ge v(\pi_0\rvert \theta_1)-v(\pi_N\rvert \theta_1)
\end{equation}
for any path $(\theta_0,\pi_0,\dots,\theta_{N},\pi_N)$ on $G(\sigma)$, with a strict inequality when $N=1$, then $v\in \GCM$.
\end{prop}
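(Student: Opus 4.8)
The goal is to show that if $v$ satisfies the inequality (SGM) along every path on $G(\sigma)$ — with the strict version holding for single-edge paths $N=1$ — then $v$ is $G(\sigma)$-monotone, i.e.\ $\sum_{i=1}^N v(\pi_i\rvert\theta_i) - v(\pi_{i-1}\rvert\theta_i) > 0$ for every closed walk $(\theta_1,\pi_1,\dots,\theta_N,\pi_N=:\pi_0)$ on $G(\sigma)$ with $N\ge2$.

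The plan is to telescope. Fix a path $(\theta_1,\pi_1,\dots,\theta_N,\pi_N)$ with the convention $\pi_0 := \pi_N$, as in the definition of (GM). I want to bound $\sum_{i=1}^N \big(v(\pi_i\rvert\theta_i) - v(\pi_{i-1}\rvert\theta_i)\big)$ from below by zero. The key observation is that the term indexed by $i$ in (GM) compares $\pi_i$ and $\pi_{i-1}$ \emph{through the same state} $\theta_i$; I want instead to re-anchor everything to a single reference state — naturally $\theta_1$ — using (SGM) to pay for each re-anchoring. Concretely, apply (SGM) to the sub-path $(\theta_1,\pi_1,\dots,\theta_i,\pi_i,\dots)$: reading (SGM) with $\theta_0 \leftrightarrow \theta_1$, the endpoints of the sub-path being $\pi_1$ and $\pi_i$, and the internal node $\theta_i$ gives
\[
 v(\pi_1\rvert\theta_1) - v(\pi_i\rvert\theta_1) \ge v(\pi_1\rvert\theta_i) - v(\pi_i\rvert\theta_i),
\]
so that $v(\pi_{i-1}\rvert\theta_i) - v(\pi_i\rvert\theta_i) \le v(\pi_{i-1}\rvert\theta_1) - v(\pi_i\rvert\theta_1)$ after applying the same device to both endpoints (once for $\pi_{i-1}$, once for $\pi_i$), hence
\[
 v(\pi_i\rvert\theta_i) - v(\pi_{i-1}\rvert\theta_i) \ge \big(v(\pi_i\rvert\theta_1) - v(\pi_{i-1}\rvert\theta_1)\big).
\]
Wait — I must be careful with the direction: I need a \emph{lower} bound on the left side, so I should apply (SGM) so that the state-$\theta_1$ expression lies below the state-$\theta_i$ expression. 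That is exactly what (SGM) in the form $v(\pi_0\rvert\theta_0) - v(\pi_N\rvert\theta_0) \ge v(\pi_0\rvert\theta_1) - v(\pi_N\rvert\theta_1)$ provides, with the roles of ``$\theta_0$'' and ``$\theta_1$'' matched to our $\theta_i$ and $\theta_1$ respectively — so I will need to check that the path orientation is compatible, possibly reading (SGM) backwards along the path (which is legitimate since the negation of (SGM) for the reversed path is again an instance of (SGM)). Summing the resulting inequalities over $i=1,\dots,N$ yields $\sum_i \big(v(\pi_i\rvert\theta_i)-v(\pi_{i-1}\rvert\theta_i)\big) \ge \sum_i \big(v(\pi_i\rvert\theta_1)-v(\pi_{i-1}\rvert\theta_1)\big)$, and the right-hand sum telescopes to $v(\pi_N\rvert\theta_1) - v(\pi_0\rvert\theta_1) = 0$ because $\pi_0 = \pi_N$. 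Finally, at least one of the $N\ge2$ edges in the closed walk is a genuine single edge for which the strict version of (SGM) was invoked (or, more carefully, the $i$-term for which $\theta_i$ is adjacent to $\theta_1$ along a one-edge sub-path), so at least one inequality in the sum is strict, giving the strict conclusion $> 0$.

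The main obstacle I anticipate is bookkeeping the orientation of (SGM) relative to the cyclic orientation of the walk in (GM): (SGM) as stated picks out the two endpoints $\pi_0,\pi_N$ of a path and an ``internal'' state $\theta_1$ adjacent to $\pi_0$, and I need to make sure that for each $i$ the pair $(\pi_{i-1},\pi_i)$ together with the reference state $\theta_1$ can be realized as such a configuration on $G(\sigma)$ — this uses that $G(\sigma)$ being the relevant graph, the concatenation of the two tree-paths from $\theta_1$ (via $\pi_1,\dots$) to $\pi_{i-1}$ and to $\pi_i$ is itself a walk on $G(\sigma)$, and that (SGM) is assumed for \emph{all} paths, including non-simple ones. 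I should also double-check the degenerate cases $i=1$ (where $\pi_{i-1}=\pi_N$, i.e.\ the "closing" edge) and confirm that the strictness is correctly sourced — the cleanest route is to note that the edge $\{\theta_1,\pi_1\}$ contributes, via the one-edge sub-path, a strict instance. If the orientation bookkeeping proves delicate, a cleaner alternative is to prove the contrapositive geometrically: (SGM) says the function $\theta \mapsto v(\pi_0\rvert\theta) - v(\pi_N\rvert\theta)$ is monotone along paths in a suitable sense, and any sign-reversing cycle in (GM) would force a violation of this monotonicity; but I expect the telescoping computation above to be the most transparent.

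\begin{proof}[Proof sketch]
This is the plan; the details are routine once the path bookkeeping is fixed. Fix a closed walk $(\theta_1,\pi_1,\dots,\theta_N,\pi_N)$ on $G(\sigma)$ with $N\ge 2$ and set $\pi_0:=\pi_N$. For each $i$, apply (SGM) to the walk obtained by concatenating the segment from $\theta_1$ to the edge $(\theta_i,\pi_{i-1})$ with that from $\theta_1$ to $(\theta_i,\pi_i)$ (reversing orientation where needed, which preserves membership in the class of paths for which (SGM) is assumed) to obtain
\[
 v(\pi_i\rvert\theta_i)-v(\pi_{i-1}\rvert\theta_i)\ \ge\ v(\pi_i\rvert\theta_1)-v(\pi_{i-1}\rvert\theta_1),
\]
with strict inequality for the index $i$ corresponding to the one-edge sub-path $(\theta_1,\pi_1)$ (resp.\ $(\theta_1,\pi_0)$), using the strict hypothesis at $N=1$. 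Summing over $i=1,\dots,N$ and telescoping the right-hand side gives
\[
 \sum_{i=1}^N \big(v(\pi_i\rvert\theta_i)-v(\pi_{i-1}\rvert\theta_i)\big)\ >\ v(\pi_N\rvert\theta_1)-v(\pi_0\rvert\theta_1)\ =\ 0,
\]
since $\pi_0=\pi_N$. Hence $v\in\GCM$.
\end{proof}
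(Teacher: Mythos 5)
Your strategy---telescope the cyclic sum term-by-term against the fixed reference state $\theta_1$---has a genuine gap, and it is not the orientation bookkeeping you flag but the central termwise inequality itself. (SGM) for a path $(\theta_0,\pi_0,\dots,\theta_N,\pi_N)$ only ever compares the \emph{first two} states of the path, $\theta_0$ and $\theta_1$, which by construction share the action $\pi_0$ as a common neighbour, and it compares them with respect to $\pi_0$ and the terminal action $\pi_N$; it never compares the first state with an arbitrary internal state $\theta_i$. So your displayed step $v(\pi_1\rvert\theta_1)-v(\pi_i\rvert\theta_1)\ge v(\pi_1\rvert\theta_i)-v(\pi_i\rvert\theta_i)$ is not an instance of (SGM), and the re-anchoring inequality $v(\pi_i\rvert\theta_i)-v(\pi_{i-1}\rvert\theta_i)\ge v(\pi_i\rvert\theta_1)-v(\pi_{i-1}\rvert\theta_1)$ that you sum over $i$ is false in general. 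Concretely, take the tree $\theta_1-\pi_1-\theta_2-\pi_2-\theta_3-\pi_3$ and set $v(\pi_1\rvert\cdot)\equiv 0$, $v(\pi_2\rvert\theta_1)=0$, $v(\pi_2\rvert\theta_2)=5$, $v(\pi_2\rvert\theta_3)=6$, $v(\pi_3\rvert\theta_1)=v(\pi_3\rvert\theta_2)=0$, $v(\pi_3\rvert\theta_3)=4$. Every (SGM) instance holds (the four admissible paths give $0>-5$, $0\ge 0$, $5>2$, $6>5$), yet for the path $(\theta_1,\pi_1,\theta_2,\pi_2,\theta_3,\pi_3)$ your $i=3$ step asserts $v(\pi_3\rvert\theta_3)-v(\pi_2\rvert\theta_3)\ge v(\pi_3\rvert\theta_1)-v(\pi_2\rvert\theta_1)$, i.e.\ $-2\ge 0$. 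The cyclic sum is still $0+5-2=3>0$, as the proposition promises, but your term-by-term lower bound fails, so the summation step collapses.

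The repair is to transfer the reference state one edge at a time rather than all the way back to $\theta_1$ at once; this is what the paper does, by induction on path length. For a path $(\theta_0,\pi_0,\dots,\theta_N,\pi_N)$ the induction hypothesis gives (GM) for the suffix $(\theta_1,\pi_1,\dots,\theta_N,\pi_N)$, and adding the single (SGM) inequality for the \emph{full} path, $v(\pi_0\rvert\theta_0)-v(\pi_N\rvert\theta_0)-\bigl(v(\pi_0\rvert\theta_1)-v(\pi_N\rvert\theta_1)\bigr)\ge 0$, yields exactly the cyclic sum for the full path. Unrolled, this writes the (GM) sum as $\sum_{k}\bigl[\bigl(v(\pi_k\rvert\theta_k)-v(\pi_N\rvert\theta_k)\bigr)-\bigl(v(\pi_k\rvert\theta_{k+1})-v(\pi_N\rvert\theta_{k+1})\bigr)\bigr]$, a sum of (SGM) quantities for the nested suffix paths, each comparing \emph{adjacent} states $\theta_k,\theta_{k+1}$ with respect to $\pi_k$ and the terminal action $\pi_N$; the innermost suffix is a one-edge path, which supplies the strictness. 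Note that every summand in the valid decomposition involves $\pi_N$, not the pair $(\pi_{i-1},\pi_i)$ of actions adjacent to a single state---that is the structural difference between the paper's telescoping and yours, and it is why yours cannot be fixed by re-orienting paths.
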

Strong graph-monotonicity can be seen as a `semi-local' version of single-crossing. For comparison, a graph version of single crossing requires that for any path $(\theta_0,\dots,\pi_N)$:
$$
v(\pi_{n_1}\rvert \theta_{m_1})-v(\pi_{n_2}\rvert \theta_{m_1})\ge v(\pi_{n_1}\rvert \theta_{m_2})-v(\pi_{n_2}\rvert \theta_{m_2})\quad\text{for all }n_1<n_2,\,m_1<m_2,
$$
ie. states have a relative preference for messages on `their side' of the graph. 

While single-crossing requires this relative preference to hold relative to all other states on the path, for every action pair; graph monotonicity fixes one action to be an on-path action, and only requires a preference relative to the neighbouring state.

The following condition ensures that strong graph-monotonicity will hold for all actions sharing the same support:
\begin{defn}[Semi-Local Increasing Differences]
    We say that a modification $v$ is (support) $\sigma$-\emph{semi-local increasing  in differences} (or $v\in \SLIDS$) if, for any path $(\theta_0,\pi_0,\dots,\theta_{N},\pi_N)$ on $G(\sigma)$, we have
    \begin{align}\label{SLID}\tag{SLID}
     v(\alpha_0\rvert \theta_0)-v(\alpha_N\rvert \theta_0)\ge v(\alpha_0\rvert \theta_1)-v(\alpha_N\rvert \theta_1)
    \end{align}
    for all pure message-actions $\alpha_i\in\supp{\pi_i}$, and when $N=1$ there is strict inequality for at least one $(\alpha_0,\alpha_1)$ permutation.
\end{defn}
Note if $v\in\SLIDS$ then $u^0+ v\in\SLIDS$ for any transparent preference $u^0$. Thus support-monotonicity is independent of how a preference is decomposed into modification and transparent preference.

Moreover $\SLIDS$ is non-empty whenever $G(\sigma)$ is acyclic --- this can be observed simply by inductively making off-path messages unattractive.

It is clear $\SLIDS\subseteq \SLID\subseteq \GCM$. When a modification is action-independent, the permutations of pure-actions in the definition of support-monotonicity are irrelevant, and $\SLIDS=\SLID$. If there further exists a state $s^\ast$ sending every equilibrium message, then $\SLIDS=\GCM$.

We can also verify that $\SLIDS=\GCM$ for tree equilibria involving three pure actions --- since these equilibria must have two messages, there is a state sending every message, and after normalizing the values of the randomized action's support, any modification is equivalent to a message-based modification.
\begin{thm}\label{thm:largeeps}
    Let $u_S$ be an ordinally transparent preference and $\sigma\in\Sigma(u^0)$ be a candidate equilibrium for an associated transparent preference $u^0$. If $\supps{u_S}\subseteq\SLIDS$ then this preference admits an equilibrium $\hat{\sigma}\in\Sigma(u_S)$ that is informationally equivalent to $\sigma$.


\end{thm}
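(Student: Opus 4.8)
\textbf{Proof plan for Theorem \ref{thm:largeeps}.}

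The plan is to mimic the structure of the sufficiency argument in Theorem \ref{thm:main} and Theorem \ref{thm:mainB}, but now working at the \emph{global} scale $\epsilon = 1$ rather than in a shrinking neighbourhood of the candidate equilibrium $\sigma$. By Proposition \ref{prop:translucent} it is no loss to assume $\sigma$ is a candidate equilibrium of $u^0$ and, since $\SLIDS$ is nonempty only for acyclic $G(\sigma)$, we may take $G(\sigma)$ to be a forest; passing to each connected component separately, assume it is a tree, and fix its unique pure action as the root $\pi_0$. The target receiver strategy is inherited from $\sigma$ on the posterior beliefs $\mathcal{B}_\sigma$ — these beliefs are held fixed (informational equivalence only lets the receiver re-mix). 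The key point is that, because the receiver's best-response correspondence at each on-path belief $\nu(\pi)$ is an interval of mixed actions (by genericity / the structure in Section \ref{sec:candid}), we have genuine freedom to slide each action $\hat\pi$ along the segment $\Delta\supp{\pi}$ of receiver-incentive-compatible mixed actions without changing $\nu$, and we use this freedom to satisfy the sender's indifferences.

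First I would set up the inductive construction along the rooted tree, exactly as in Figure \ref{fig:roottree}: fix $\hat\pi_0 = \pi_0$ (pure, so not adjustable); then, processing state-nodes in order of distance from the root, at state $\theta_j$ with already-fixed parent action $\hat\pi_{j^\uparrow}$, choose for each child $\pi_k$, $k\in j^\downarrow$, an action $\hat\pi_k$ in the receiver-IC segment so that the state-$\theta_j$ sender population splits across $\{\hat\pi_{j^\uparrow}\}\cup\{\hat\pi_k\}$ with exactly the equilibrium probabilities. The existence of such a split is Lemma \ref{lem:NIC}: I must check its hypothesis, that the parent action $\hat\pi_{j^\uparrow}$ is unanimously ranked strictly between the two endpoints of each child's adjustable segment $B_k = \Delta\supp{\pi_k}$. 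This is where $\SLIDS$ enters. Unlike the $\epsilon\to 0$ proof, the endpoints here are the \emph{pure} actions in $\supp{\pi_k}$; the \ref{SLID} inequalities for the length-one path $(\theta_{j^\uparrow},\hat\pi_{j^\uparrow},\theta_j,\pi_k)$, applied to every pair $(\alpha_{j^\uparrow},\alpha_k)$ of pure message-actions in $\supp{\hat\pi_{j^\uparrow}}\times\supp{\pi_k}$, together with the defining ordinal-transparence order $u^0$, pin down that $u_S(\hat\pi_{j^\uparrow}\mid\theta_j,\omega)$ lies strictly between $\max$ and $\min$ over $\pi'\in\Delta\supp{\pi_k}$ of $u_S(\pi'\mid\theta_j,\omega)$ for $\mathbb{P}_\omega$-almost every $\omega$ — giving the two equalities in eq. \ref{eq:NIC.bound}. (The root is pure, so at depth one the relevant pure-action comparisons are degenerate; that is exactly why the $N=1$ clause of \ref{SLID} demands a strict inequality.) Induction then produces a profile $\hat\pi$ with $\nu(\hat\pi) = \nu(\pi)$ for all on-path nodes and all on-path sender indifferences satisfied.

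Second I would rule out off-path and non-neighbouring deviations. For a state $\theta$ and an on-path action $\hat\pi_1\in\hat\sigma(\theta)$ and any other on-path action $\hat\pi_N$ reached by the tree-path $(\theta_1{=}\theta,\hat\pi_1,\theta_2,\ldots,\theta_N,\hat\pi_N)$, I sum the neighbouring-indifference identities $u^0(\hat\pi_{i-1}) - u^0(\hat\pi_i) + \big(v(\hat\pi_{i-1}\mid\theta_i) - v(\hat\pi_i\mid\theta_i)\big) = 0$ along the path and compare with the same telescoped $u^0$-difference appearing in the $\theta_1$-sender's comparison of $\hat\pi_1$ versus $\hat\pi_N$; the \ref{eq:GCM} inequality (which holds since $\SLIDS\subseteq\SLID\subseteq\GCM$ by Proposition \ref{prop:sandwich}, applied along the cycle obtained by closing the path) forces the $\theta_1$-sender to strictly prefer $\hat\pi_1$, and the same telescoping handles every $\omega$ since the perturbation is evaluated per-realization and $\SLIDS$ is a pointwise condition on $\supps{u_S}$. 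Genuinely off-path messages are handled by choosing their induced beliefs, and hence the sender's indirect utility there, via eq. \ref{eq:offpath} as usual — or simply by noting $\SLIDS$ nonempty lets us make off-path messages unattractive (the remark after the definition). Finally I would assemble these pieces into an equilibrium $\hat\sigma\in\Sigma(u_S)$ and note it induces the posterior distribution $\mathcal{B}_\sigma$, hence is informationally equivalent to $\sigma$.

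The main obstacle I anticipate is verifying the Lemma \ref{lem:NIC} hypothesis \emph{globally} rather than infinitesimally: in Theorem \ref{thm:main} one only needed the parent action to sit inside a tiny interval $I_k\to\{\pi_k\}$, so plain graph monotonicity sufficed; here the adjustable interval $B_k$ is the whole simplex $\Delta\supp{\pi_k}$ (its endpoints the two pure actions in the support — recall acyclic equilibria use binary mixed actions by Theorem \ref{thm:gen}(c)), and I need the parent unanimously \emph{strictly interior} to it for every idiosyncrasy realization. This is precisely the content of \ref{SLID} being a genuinely stronger, support-level condition than \ref{eq:GCM}, and the delicate step is checking that the pure-action version of the $N=1$ \ref{SLID} inequalities, combined with the transparent order $u^0$ (which controls the ranking of pure actions but not of mixtures), really does sandwich the parent's mixed-action utility strictly between the min and max over the child's support for a.e. $\omega$ — exploiting that $u_S$ restricted to $\Delta\supp{\pi_k}$ is affine so its range is the interval between its two pure-action values.
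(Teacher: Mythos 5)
Your architecture matches the paper's: root the tree at the pure action, construct the mixed actions inductively downward via Lemma \ref{lem:NIC}, and rule out non-neighbouring deviations by telescoping indifferences along paths. However, there is a genuine gap at the step you yourself flag as delicate: verifying the hypothesis of Lemma \ref{lem:NIC}, i.e.\ that the already-fixed parent action $\hat{\pi}_{j^\uparrow}$ is unanimously ranked strictly between $u_S(\underline{\alpha}_{k}\rvert\theta_j)$ and $u_S(\overline{\alpha}_{k}\rvert\theta_j)$ for each child $\pi_k$. You claim this is ``pinned down'' by the $N=1$ \ref{SLID} inequalities on the single edge together with ordinal transparence. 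That is not enough. The location of $u_S(\hat{\pi}_{j^\uparrow}\rvert\theta_j)$ inside the interval $[u_S(\underline{\alpha}_{j^\uparrow}\rvert\theta_j),u_S(\overline{\alpha}_{j^\uparrow}\rvert\theta_j)]$ is determined by the mixing weights of $\hat{\pi}_{j^\uparrow}$, and those weights were calibrated by an indifference condition at the \emph{grandparent} state, which in turn was calibrated at its own parent, and so on back to the root. To locate $u_S(\hat{\pi}_{j^\uparrow}\rvert\theta_j)$ relative to $\supp{\pi_k}$ you must telescope this entire chain: this is precisely the paper's Step 1 (an induction on $G(\sigma)$-distance showing each state strictly prefers its neighbouring actions to all non-neighbouring ones, given that the indifferences above it hold), followed by Step 2, which argues by contradiction that a failure of the sandwich at $\theta_k$ would force $u_S(\alpha_0\rvert\theta_k)<u_S(\underline{\alpha}_{k^\downarrow}\rvert\theta_k)$ (or the mirror inequality), contradicting ordinal transparence because the candidate equilibrium forces $u^0(\underline{\alpha}_{k^\downarrow})<u^0(\alpha_0)<u^0(\overline{\alpha}_{k^\downarrow})$. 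The comparison that ordinal transparence ultimately delivers is between the \emph{root} pure action and the supports of all equilibrium mixed actions, not between adjacent nodes; the length-one inequality alone cannot supply it once you are at depth two or more.

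A secondary gap is the idiosyncratic extension. You assert the telescoping ``handles every $\omega$'' because \ref{SLID} is pointwise on $\supps{u_S}$, but the indifference conditions defining $\hat{\pi}$ are population-level (marginal message probabilities), not realization-level, so there need not be any single realization that is literally indifferent along a chain. The paper resolves this by an intermediate value argument producing a representative utility $\hat{u}_S\in\text{co}\left(\supps{u_S}\right)\subseteq\SLIDS$ that is indifferent along the chain, and then applying the deterministic induction to the hybrid preference that splices the realized $u_S(\cdot\rvert\theta_0,\omega)$ into $\hat{u}_S$. Your sketch needs this device (or an equivalent) to make the per-$\omega$ strict-preference claim rigorous.
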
This result shows that perturbing an ordinally transparent preference $u_S\in \SLIDS$ by a modification $v\in\SLIDS$, preserves the equilibrium $\sigma$ so long as ordinal transparence is maintained.
\begin{proof}
\textbf{(1) Obtain strict preferences (without idiosyncrasy):} Suppose senders in each state are indifferent over the messages that neighbour this state on $G(\sigma)$ (which we will later prove using these intermediate results). We prove they then have strict preference for neighbouring messages on $G(\sigma)$ over non-neighbouring messages by induction on the $G(\sigma)$-distance between the state and the non-neighbouring message. 
\begin{figure}
    \centering
    \begin{subfigure}{0.48\textwidth}\centering
    \begin{tikzpicture}
        \filldraw (3.5,3.5) circle (1.5 pt) node[anchor = west] {\color{black}$\overline{\alpha}_N$};
        \filldraw (0,0) circle (1.5 pt) node[anchor = east] {\color{black}$\underline{\alpha}_N$};
        \filldraw (4,4.3) circle (1.5 pt) node[anchor = south] {\color{black}$\overline{\alpha}_{N-1}$};
        \filldraw (1,1.9) circle (1.5 pt) node[anchor = north] {\color{black}$\underline{\alpha}_{N-1}$};
        \draw (0,0) -- (3.5,3.5);
        \draw (1,1.9) -- (4,4.3);
        \draw[dashed, blue, thick] (2,0) -- (2,5);
        \node[anchor = east] at (0,5) {$u_S(\cdot \rvert \theta_{N-1})$};
        \node[anchor = north] at (5,0) {$u_S(\cdot \rvert \theta_{N})$};
        \filldraw[red] (2,2) circle (1.5 pt) node[anchor = north west] {\color{black}$\hat{\pi}_N$};
        \filldraw[red] (2,2.7) circle (1.5 pt) node[anchor = south east] {\color{black}$\hat{\pi}_{N-1}$};
        \draw[<->] (0,5) -- (0,0) -- (5,0);
    \end{tikzpicture}
    \caption{}
    \label{fig:largeeps.base}
    \end{subfigure}
    \begin{subfigure}{0.48\textwidth}\centering
    \begin{tikzpicture}
        \draw (0,0) -- (3.5,3.5);
        \draw (1,2.4) -- (4,4.8);
        \draw[dashed, thick, blue] (3,0) -- (3,5);
        \node[anchor = north] at (3,0) {\color{black}$\hat{\pi}_{k+1}$};
        \filldraw (3.5,3.5) circle (1.5 pt) node[anchor =  west] {\color{black}$\overline{\alpha}_N$};
        \filldraw (0,0) circle (1.5 pt) node[anchor =  east] {\color{black}$\underline{\alpha}_N$};
        \filldraw (4,4.8) circle (1.5 pt) node[anchor = west] {\color{black}$\overline{\alpha}_{k}$};
        \filldraw (1,2.4) circle (1.5 pt) node[anchor = north] {\color{black}$\underline{\alpha}_{k}$};
        \filldraw[red] (3,4) circle (1.5 pt) node[anchor =  south east] {\color{black}$\hat{\pi}_{k}$};
        \node[anchor = east] at (0,5) {$u_S(\cdot \rvert \theta_{k})$};
        \node[anchor = north] at (5,0) {$u_S(\cdot \rvert \theta_{k+1})$};
        \filldraw[red] (2,2) circle (1.5 pt) node[anchor = north west] {\color{black}$\hat{\pi}_N$};
        \draw[<->] (0,5) -- (0,0) -- (5,0);
    \end{tikzpicture}
    \caption{}
    \label{fig:largeeps.induct}
    \end{subfigure}
    \caption{Illustration of the proof of Theorem \ref{thm:largeeps}. The graphs compare the utilities attained by senders in two states $\theta_{k},\theta_{k+1}$ separated by a single message $\pi_k$ on $G(\sigma)$ for various mixed actions (on the left $k=N-1$). SLID implies that $\Delta\supp{\pi_k}$ will always lie above the affine hull $\text{aff}(\supp{\pi_{N}})$ on these graphs. The indifferences of the sender in state $\theta_{k+1}$ are given by vertical lines.
    }
    \label{fig:largeeps}
\end{figure}
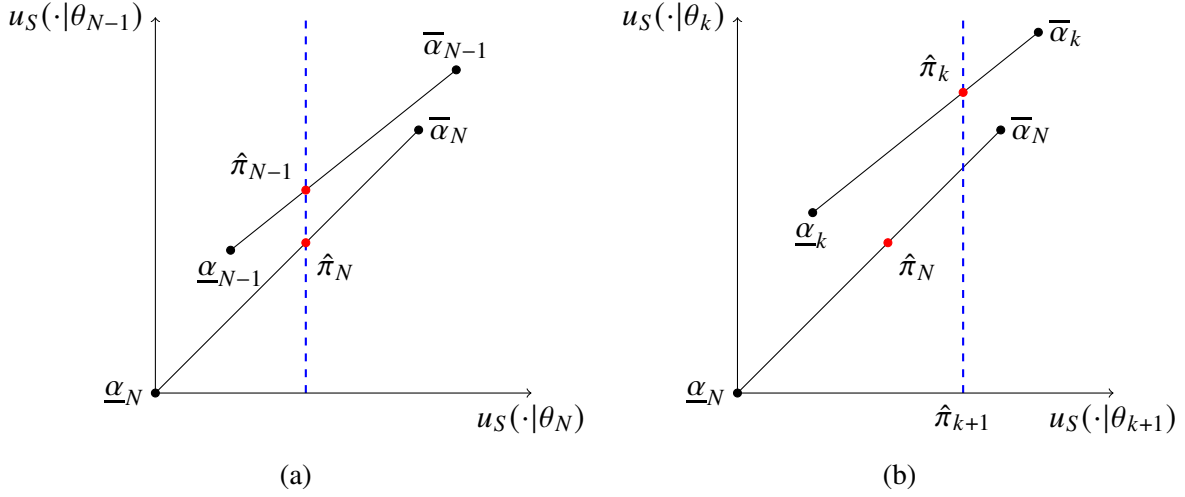

In the base case (illustrated in Figure \ref{fig:largeeps.base}), the connecting path is $(\theta_{N-1},\pi_{N-1},\theta_N,\pi_N)$. We know $\hat{\pi}_{N-1}$ will be chosen so that the sender in state $\theta_N$ is indifferent between $\hat{\pi}_{N-1}$ and $\hat{\pi}_N$, but by SLID this implies that the sender in state $\theta_{N-1}$ will strictly prefer $\pi_{N-1}$ over $\pi_N$.

Now suppose that the path is $(\theta_k,\pi_{k},\dots,\pi_N)$ (illustrated in Figure \ref{fig:largeeps.induct}). Again the action $\hat{\pi}_k$ is such that the sender in state $\theta_{k+1}$ is indifferent between $\hat{\pi}_k$ and $\hat{\pi}_{k+1}$. By the induction hypothesis, the sender in state $\theta_{k+1}$ strictly prefers $\hat{\pi}_{k+1}$ (and therefore $\hat{\pi}_k$) to $\hat{\pi}_N$. By SLID, the sender in state $\theta_k$ has the same strict preference for $\hat{\pi}_{k}$ over $\hat{\pi}_N$.

\textbf{(2) Neighbouring Incentive Compatibility:} For this step denote $\underline{\alpha}_k:=\min \supp{\pi_k}$, $\overline{\alpha}_k:=\max \supp{\pi_k}$. We root our tree at its pure action $\alpha_0$\footnote{If no actions are pure, set $\alpha_0$ to either be the lowest upper support $\argmin u_S(\overline{\alpha})$ or highest lower support $\argmax u_S(\underline{\alpha})$  over equilibrium message-actions.}
, as in the proof of Theorem \ref{thm:main}. Note that as a candidate equilibrium, we must have $\underline{\alpha}_k<\alpha_0<\overline{\alpha}_k$ for every non-root message $\pi_k\in\sigma(\Theta)$.

Proceeding inductively down the tree, suppose state $\theta_k$ is the first state where we are unable to obtain sender indifference over neighbouring actions, ie. for some child $\pi_{k^\downarrow}$ of $\theta_k$ either
\begin{equation}\label{eq:nolem}
    (1)\quad u_S(\underline{\alpha}_{k^\downarrow}\rvert \theta_k)>u_S(\hat{\pi}_{k^\uparrow}\rvert \theta_k)
    ,\qquad\text{or}\qquad 
    (2)\quad u_S(\overline{\alpha}_{k^\downarrow}\rvert \theta_k)<u_S(\hat{\pi}_{k^\uparrow}\rvert \theta_k).
\end{equation}
In either case, we may apply Step 1 to obtain incentive compatibility on the tree above $\theta_k$, obtaining
\begin{equation}\label{eq:ordinalviol}
\begin{split}
    (1)\qquad& u_S(\alpha_0\rvert \theta_k)\le u_S(\hat{\pi}_{k^\uparrow}\rvert \theta_k)< u_S(\underline{\alpha}_{k^\downarrow}\rvert \theta_k), \text{ or}\\
    (2)\qquad& u_S(\alpha_0\rvert \theta_{1})\ge u_S(\hat{\pi}_{k^\uparrow}\rvert \theta_{1})>u_S(\overline{\alpha}_{k^\downarrow}\rvert \theta_{1}),
\end{split}
\end{equation}
respectively (where the first equalities holds iff $\pi_{k^\uparrow}=\alpha_0$). Both cases violate ordinal transparency.

\textbf{(3) Extension to Idiosyncratic Preferences:} We follow the same steps as in the non-idiosyncratic proof. Suppose there exists actions $\pi$ that induce the desired degree of randomization when constrained to neighbouring messages. To observe a sender with realized preference $u_S(\cdot\rvert \theta_0,\omega)\in\supp{u_S(\cdot\rvert \theta_0)}$ will have strict preferences for neighbouring messages, consider a sequence $(\theta_0,\dots,\pi_N)$. Applying intermediate value theorem, we find a utility $\hat{u}_S\in \text{co}(\supps{u_S})\subseteq \SLIDS$ that is indifferent over each message is this sequence. Now apply the previous induction result to the preference $\overline{u}_S\in\SLIDS$ given by
$$
\overline{u}_S(\pi\rvert s):=\begin{cases}
    \hat{u}_S(\pi\rvert \theta)&\theta\neq \theta_0\\
    u_S(\pi\rvert \theta_0,\omega)& \theta=\theta_0.
\end{cases}
$$

It remains to show that we can find mixed actions that induce sender's to send each of their neighbouring messages with the appropriate probability, when limited to neighbouring actions. Working inductively down the tree, as long as eq. \ref{eq:nolem} is almost never satisfied at a state, we can apply Lemma \ref{lem:NIC} to find appropriate child actions $\hat{\pi}_{k^\downarrow}\in \supp{\pi_{k^\downarrow}}$ generating the desired distribution of messages sent from a state. 

When eq. \ref{eq:nolem} fails with positive probability, we are able to obtain a utility $\overline{u}_S\in\text{co}(\supps{u_S})$ satisfying eq. \ref{eq:ordinalviol}, violating ordinal transparency.
\end{proof}

We will see that this result can be applied to many natural modifications that we consider in Section \ref{sec:apps}.

\section{Applications}\label{sec:apps}
\label{sec:mods}
In this section we apply the above results to three natural modifications. Two modifications represent ethical considerations of a sender, in one case concerned about the receiver's welfare, in the second concerned about lying.

Ethical modifications intuitively seem to favour communication, however we will see that this is not the case in general. We will provide a natural example illustrating how vertically differentiated states may be unfavourable to communication when the sender is empathetic. 

We will apply our lying aversion modification to illustrate how money burning can benefit the sender's persuasive capabilities when they have nearly transparent preferences, despite not improving the sender's utility of candidate equilibria.

The last modification we will consider is a technological one, resembling a `weak disclosure game'. As in standard disclosure games the sender reveals information to the receiver, however they also have the option to cheaply fabricate false information. This technology is of particular interest, as it is maximally stabilizing modification.

Since we are discussing fixed modifications it will be helpful to adopt the shorthand of saying a modification $v$ \emph{stabilizes} a candidate equilibrium $\sigma\in\Sigma(u^0)$ when $\sigma$ is $\mathscr{O}$-stable for some neighbourhood $\mathscr{O}\ni v$.
\subsection{Other-regarding Senders}
\label{sec:emp}

When faced with a decision that has little effect on their material utility, even the most egoistic sender may consider the effect their actions will have on others as a deciding factor. We consider two modifications of this flavour: empathy ($v_E:=u_R$) and antipathy ($v_A:=-u_R$).

These modifications correspond to aligning or disaligning preferences. A common rule-of-thumb in communication games is that preference alignment increases the possibilities of communication: since the receiver will choose the best option according to their belief, a sender with a stronger interest in the receiver will seek to ensure the receiver makes `more accurate' decisions.


The previous sections illustrate that this intuition is validated when the communication structure and the receiver's preference share compatible geometries. We illustrate with using the special case where communication and the receiver's preference are `linearly ordered' in compatible manners. In particular, let $\succ$ be a linear order on the state space $\Theta$. 
\begin{defn}
    We say the receiver utility $u_R$ is $\succ$-\emph{single crossing} if     
    \begin{equation*}
        \theta\mapsto u_R(a\rvert \theta)-u_R(a'\rvert \theta)
    \end{equation*}
    is $\succ$-strictly monotone for all $a,a'\in\mathcal{A}$.

    A candidate equilibrium $\sigma\in\Sigma(u^0)$ is an $\succ$-\emph{interval} equilibrium if the sender's strategy is $\succ$-ordered: 
    for $\pi,\pi'\in\sigma(\Theta)$ either
    \begin{align*}
        \max_\succ\sigma^{-1}(\pi)\preceq& \min_\succ\sigma^{-1}(\pi')&
        \text{or}&&
        \min_\succ\sigma^{-1}(\pi)\succeq& \max_\succ\sigma^{-1}(\pi').
    \end{align*}
\end{defn}
The single crossing condition induces a linear order $\succ_\mathcal{A}$ on the action space $\mathcal{A}$ such that $\succ_\mathcal{A}$-higher actions have a comparative advantage in $\succ$-high states.

Interval equilibria describe situations where there are $\succ$-thresholds between messages: only lower states send one message, and only higher states send the other. Such equilibria are necessarily acyclic.

Combining these structures ensures that empathy will stabilize equilibria:

\begin{prop}\label{prop:emp}
If the sender's utility is $u_S:=u^0-\epsilon u_R$ for some transparent $u^0$ and $\epsilon>0$, then there is no persuasive equilibrium in $\Sigma(u^0-\epsilon u_R,u_R)$ .

Let $u^0$ be a transparent sender preference, and $u_R$ be a $\succ$-single crossing receiver preference. If $\sigma\in\Sigma(u^0,u_R)$ is an $\succ$-interval candidate equilibrium, then there is an informationally equivalent equilibrium $\sigma'\in \Sigma(u^0+\epsilon u_R,u_R)$ whenever $\epsilon>0$ is sufficiently small that the sender preference $u^0+\epsilon u_R$ is ordinally transparent.
\end{prop}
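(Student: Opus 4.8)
The two halves of the proposition require quite different treatment, so the plan is to handle the anti-empathy (money-burning--adjacent) statement first and then the main empathy claim.

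For the first statement --- that $u_S := u^0 - \epsilon u_R$ admits no persuasive equilibrium --- the plan is to argue directly from incentives. Suppose $\sigma$ is a persuasive equilibrium of $\Sigma(u^0 - \epsilon u_R, u_R)$. Since any equilibrium posterior $\nu$ induces a receiver best response $\mathscr{P}(\nu)$, and the $\omega$-expected sender payoff from inducing belief $\nu$ and then having the receiver act is $u^0(\mathscr{P}(\nu)) - \epsilon \int u_R(\mathscr{P}(\nu) \mid \theta)\,d\nu(\theta)$, I would compare this to the payoff of pooling at the prior. The key point: because $\mathscr{P}(\nu)$ maximizes $\int u_R(\cdot \mid \theta)\,d\nu$, the term $\int u_R(\mathscr{P}(\nu)\mid\theta)\,d\nu(\theta)$ is a convex function of $\nu$ (a maximum of affine functions), and similarly the $u^0$-part of the indirect utility is controlled by the quasiconcave envelope à la \cite{LR20}. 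A persuasive equilibrium Bayes-splits $\mu$ into posteriors $\mathcal{B}$ on which the realized sender payoff is constant $=\overline{u}$; but the empathy-reversed term $-\epsilon\int u_R$ penalizes exactly the informative splits, forcing $\overline{u}$ strictly below the babbling payoff at some posterior in the support while the plausibility constraint $\mu \in \mathrm{co}(\mathcal{B})$ with convexity of $\int u_R(\mathscr{P}(\cdot)\mid\theta)\,d(\cdot)$ forces it (weakly) above --- a contradiction unless $\mathcal{B}$ is the singleton $\{\mu\}$, i.e. babbling. I'd want to be careful that "persuasive" means the receiver's action differs from the prior-best-response with positive probability, which is exactly what rules out the degenerate split; this is where Assumption (S) may be invoked to ensure the relevant inequalities are strict.

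For the second statement, the plan is to verify the hypotheses of Theorem \ref{thm:largeeps} with the modification $v = v_E = u_R$ (more precisely its lift to $\Delta\mathcal{A}\times M\times\Theta$ via $v_E(\pi\mid\theta) = u_R(p\mid\theta)$), so it suffices to show $\supps{u_S} \subseteq \SLIDS$ where $\sigma$ is the given $\succ$-interval equilibrium. Since $u_S = u^0 + \epsilon u_R$ and adding a transparent $u^0$ does not affect membership in $\SLIDS$ (noted after the definition), and since ordinal transparency of $u_S$ is assumed, it reduces to checking that $u_R$ itself satisfies \eqref{SLID} along every path of $G(\sigma)$. Here I would use that $\sigma$ is an $\succ$-interval equilibrium: along a path $(\theta_0,\pi_0,\dots,\theta_N,\pi_N)$ the states $\theta_0,\theta_N$ and the message-actions $\pi_0,\pi_N$ are ordered consistently with $\succ$ and the induced action order $\succ_\mathcal{A}$, and then $\succ$-single crossing of $u_R$ gives exactly that $\theta \mapsto u_R(a_0\mid\theta) - u_R(a_N\mid\theta)$ is $\succ$-monotone in the right direction for all pure $a_i \in \supp{\pi_i}$, which is \eqref{SLID}; the strict-inequality-for-one-permutation requirement at $N=1$ follows from strict monotonicity in the single crossing definition. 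The main obstacle I anticipate is bookkeeping the orientation: making sure that the root choice in $G(\sigma)$, the direction of the path, the order $\succ$ on the states along it, and the order $\succ_\mathcal{A}$ on the supporting pure actions all line up so that the single-crossing inequality points the correct way in \eqref{SLID} rather than the reverse --- i.e. that "lower states recommend lower actions" in the $\succ$-interval structure genuinely translates into the comparative-advantage statement $\succ_\mathcal{A}$ delivers. Once that alignment is pinned down, invoking Theorem \ref{thm:largeeps} yields the informationally equivalent equilibrium $\sigma' \in \Sigma(u^0 + \epsilon u_R, u_R)$ and completes the proof.
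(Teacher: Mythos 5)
Your treatment of the second (empathy) statement matches the paper's proof essentially step for step: the paper likewise reduces the claim to verifying that $u_R$ satisfies \eqref{SLID} along every path of the $\succ$-interval graph and then invokes Theorem \ref{thm:largeeps}. The orientation issue you flag is exactly the one the paper resolves: single crossing says the difference $\theta\mapsto u_R(\alpha_0\rvert\theta)-u_R(\alpha_N\rvert\theta)$ is strictly monotone in one of two directions, and the wrong direction is excluded because $\alpha_N$ must be a receiver best response to a belief supported on $\succ$-high states while $\alpha_0$ is a best response to a belief supported on $\succ$-low states. That part of your plan is sound.

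The first (antipathy) statement is where there is a genuine gap. Your sketch rests on the claim that a persuasive equilibrium of $\Sigma(u^0-\epsilon u_R,u_R)$ splits $\mu$ into posteriors on which ``the realized sender payoff is constant $=\overline{u}$.'' That constant-payoff characterization is the candidate-equilibrium condition for \emph{transparent} preferences (Section \ref{sec:candid}); once the sender's utility is $u^0-\epsilon u_R$ it is state-dependent, each state is only indifferent among the messages \emph{it} sends, and the posterior-averaged payoff $u^0(p_m)-\epsilon\int u_R(p_m\rvert\theta)\,d\nu_m$ genuinely varies across messages. With that premise gone, the ``strictly below babbling at some posterior / weakly above by convexity'' comparison is not pinned down, and the intended conclusion is also too strong: failure of persuasion means every induced action is a best response to the prior, not that $\mathcal{B}=\{\mu\}$. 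A convexity argument \emph{can} be rescued --- average the interim IC inequalities $W_S(\theta)\ge u^0(p_{m'})-\epsilon u_R(p_{m'}\rvert\theta)$ over $\theta\sim\mu$ and over on-path $m'$, cancel the $u^0$ terms, and sandwich $\sum_m P(m)V_R(\nu_m)$ between $V_R(\mu)$ from both sides to conclude every equilibrium action is a prior best response --- but your sketch does not perform this aggregation. The paper instead gives a more elementary pairwise argument: since the equilibrium is persuasive there exist on-path actions $p_0,p_1$ and messages with $\Econd{u_R(p_0\rvert\theta)-u_R(p_1\rvert\theta)}{m_0}>0\ge\Econd{u_R(p_0\rvert\theta)-u_R(p_1\rvert\theta)}{m_1}$; picking states $\theta_0,\theta_1$ in the respective supports that realize these inequalities pointwise and summing the two sender IC constraints yields an immediate contradiction, with no appeal to convexity or the belief-based machinery. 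You should either adopt that swap argument or carry out the aggregation explicitly.
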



The first result is similar to the impossibility of communication in zero-sum games (see \cite{MV78}). This shows that when the communication technology is cheap talk/money burning, this impossibility extends to games that are even slightly antagonistic\footnote{This only works from the baseline of transparent preferences. In Example \ref{ex:tri} we find a case where adding slight antagonism to a state-\emph{dependent} preference \emph{increases} communication possibilities.} --- even if revelation is pareto improving as in Example \ref{ex:bi}. This result does not use the techniques developed in this paper, extending beyond ordinally transparent preferences and even applying in infinite state/action spaces. 

The second result can be seen as an analog to \cite{CS82} style cheap talk, where increasing alignment weakly improves communication. Here the result is that increasing alignment preserves communication --- at least until the preference is no longer ordinally transparent, at which point more informative non-connected equilibrium may emerge. 

The assumptions for the second result are automatically satisfied in two state models, and in the following example whose candidate equilibria are more fully analyzed in \cite{LR20}:
\begin{ex}[Advice on Investing in an Asset]\label{ex:asset}
    An investor consults a broker about an asset, after which they decide what share of their wealth to invest in the asset. The action space is thus $\mathcal{A}=[0,1]$,\footnote{This action space is continuous. Because the utilities (in particular $u^\ast$) are well behaved our results still apply, alternatively the reader may assume a finite approximation of this space.} 
    with the investor initially holding a position $a_0\in [0,1]$. The broker is aware of some information $\theta\in \Theta$ (drawn from the prior $\mu$), indicating the investor's ideal position is actually $a^\ast(\theta)\in [0,1]$, and receives a fee proportional to the investor's trade volume paid by the investor. The parties' material preferences are thus
    \begin{align*}
        u_R(a\rvert \theta)=&-\tfrac{1}{2}(a-a^\ast(\theta))^2-\kappa|a-a_0|&
        u^0(a)=&|a-a_0|&
    \end{align*}
    We assume that the investor's initial position is optimal under their prior knowledge, $a_0=\E[\mu]{a^\ast(\theta)}$.

    States can be ordered by their optimal positions, in which case the investor's preference satisfies single-crossing. Moreover, the sender-preferred candidate equilibrium is attained through a cut-off policy reflecting whether or not the optimal investment is above a threshold $a^\ast(\theta^\ast)$ (where $\theta^\ast$ randomizes over the two messages). This is an interval equilibrium, ensuring empathy is stabilizing.
\end{ex}

However there are simple three state models where empathy is not $G(\sigma)$-monotone for any persuasive candidate equilibrium $\sigma\in\Sigma(u^0)$. Consider the following example:
\begin{ex}[Salesperson with Vertically Differentiated Products]\label{ex:tri} 
A lawmaker is considering a policy to enact, and faces three alternatives: pass the risky policy $A$
A consumer seeks to purchase a product, and faces three alternatives: buy the cheaper value-brand product $A$, buy the expensive premium-brand product $B$, or purchase nothing $\emptyset$. 

The consumer does not know which product is suitable for their purposes. They have two concerns in particular: (1) the product in general may not be appropriate for their uses, this is state $n$; (2) the quality of the value-brand item. It may be of similar quality to the premium brand, in which case the state is $a$; or it may be the case that the premium-brand version is higher quality, which is indicated by state $b$.

The salesperson is aware of the state and communicates through cheap talk.

This leads the receiver to have the following preferences motivated below, the salesperson has transparent material incentives based on commission:
\begin{center}
    \begin{tabular}{r|c|c|c}
         $_\theta\smallsetminus ^\alpha$&$A$&$\emptyset$&$B$ \\\hline
         $u_R(\,\cdot\,\rvert a)$& 5/4&3/4&1\\
         $u_R(\,\cdot\,\rvert b)$&0&3/4&1\\
         $u_R(\,\cdot\,\rvert n)$&0&3/4&$-2$
    \end{tabular}\hspace{2cm}
    \begin{tabular}{r|c|c|c}
         &$A$&$\emptyset$&$B$  \\\hline
         $u^0$& 1/2&0&1
    \end{tabular}
\end{center}
From the consumer's perspective, the value-brand item $A$ is only useful in state $a$, where it has more value than item $B$ through its cheaper cost; in other states any value provided by product $A$ is offset by its cost. The premium-brand item $B$ delivers value in states $a,b$, but in state $n$ its value is significantly overshadowed by its cost. 

These preferences lead to the indirect utility illustrated in Figure \ref{fig:uncoopemp}.

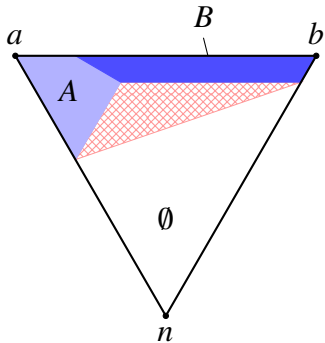
\begin{figure}\centering\begin{tikzpicture}
    \coordinate (a) at (0,3.46);
    \coordinate (b) at (4,3.46);
    \coordinate (n) at (2,0);
    \coordinate (AN) at (0.8,2.08);
    \coordinate (ABN) at (1.4,3.11);
    \coordinate (AB) at (0.8,3.46);
    \coordinate (BN) at (3.8, 3.11);
    \draw (2.6,3.3) -- (2.5,3.7);
    \node[anchor = south] at (2.5,3.7) {$B$};
    \node[anchor = south] at (2,1) {$\emptyset$};
    \draw[pattern=crosshatch, pattern color=red!40!white, draw=red!40!white, ultra thin] (AN) -- (ABN) -- (BN) -- cycle;
    \filldraw [blue!30!white] (a) -- (AB) -- (ABN) -- (AN) -- cycle; 
    \filldraw [blue!70!white] (b) -- (AB) -- (ABN) -- (BN) -- cycle; 
    \draw[thick] (a) -- (b) -- (n) -- cycle;
    \node at (0.7,3) {$A$};
    \filldraw (0,3.46) circle (1 pt) node[anchor = south] {\color{black}$a$};
    \filldraw (4,3.46) circle (1 pt) node[anchor = south] {\color{black}$b$};
    \filldraw (2,0) circle (1 pt) node[anchor = north] {\color{black}$n$};
    \end{tikzpicture}
    \caption{Shaded regions indicate the sender's indirect utility in Example \ref{ex:tri}, with darker hues indicating higher utility. There are persuasive candidate equilibria when the prior belief is in the cross-hatched region. However no persuasive equilibrium survives the introduction of empathy into the sender's preference. }
    \label{fig:uncoopemp}
\end{figure}

\begin{prop}[Uncooperative empathy]\label{prop:antipath}
    Suppose the sender has the ordinally transparent preference $u+\epsilon u_R$, where $\epsilon>0$, then there is no persuasive equilibrium for any prior $\mu$. 
    
    If the prior is in the cross-hatched region there are persuasive candidate equilibria $\sigma\in\Sigma(u^0)$ such that $\SLIDS$ is non-empty and excludes empathy.
\end{prop}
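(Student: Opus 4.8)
The plan is to split the proof into two parts, matching the two claims. For the first claim --- that empathy destroys all persuasive equilibria --- I would verify directly, using the geometry of Figure \ref{fig:uncoopemp}, that the indirect utility $u^\ast$ associated with the perturbed receiver is such that the sender strictly prefers their prior-induced action to any Blackwell-garbled alternative; equivalently, by Proposition \ref{prop:translucent} and Theorem \ref{thm:largeeps}'s necessity direction (or more directly by a graph-monotonicity obstruction), no persuasive candidate equilibrium of $\Sigma(u^0)$ admits $u_R$ as a $G(\sigma)$-monotone modification. Concretely, any persuasive candidate equilibrium here must be a tree whose communication graph connects the state $n$ to at least one of $a,b$ (the cross-hatched plausibility region forces the receiver to randomize after messages pooling $n$ with a good state). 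I would enumerate the (finitely many, by Theorem \ref{thm:gen}) tree communication graphs consistent with the cross-hatched prior region, and for each one check that the graph-monotonicity inequality \ref{eq:GCM} for $v = u_R$ fails along some path --- the key point being that $u_R$ ranks the pure actions in state $n$ in the opposite order to how it ranks them in states $a,b$ (since $B$ is worst in state $n$ but best in states $a,b$), so the orientation of the relevant indifference chain around the state $n$ forces a sign violation. This is essentially the three-state analogue of the ``misaligned order'' obstruction discussed around Example \ref{ex:bi} and in the text preceding the proposition.

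For the second claim I would exhibit an explicit persuasive candidate equilibrium $\sigma \in \Sigma(u^0)$ for priors in the cross-hatched region and a modification $v \in \SLIDS$ with $v$ not equal to (a positive scaling of) $u_R$ in the relevant sense. Since $|\mathcal{A}| = 3$ and a persuasive equilibrium here has two on-path messages, there is a state sending every equilibrium message, so by the remark in the text $\SLIDS = \GCM$ for this equilibrium; hence it suffices to show $\GCM$ is non-empty, which follows from Proposition \ref{prop:acyc} once I exhibit an acyclic (tree) communication graph --- e.g. the graph pooling $n$ with $b$ under one message and revealing (or pooling differently) $a$ under another, with the induced posteriors landing in the cross-hatched region. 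Non-emptiness of $\GCM$ then gives a stabilizing modification; to see that empathy is excluded, I combine this with the first claim: $u_R \notin \GCM = \SLIDS$ for that same $\sigma$.

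The main obstacle I anticipate is the bookkeeping in the first part: correctly identifying which tree communication graphs are even feasible for priors in the cross-hatched region (this requires reading off the plausibility constraint \ref{plaus} from the figure and checking which sets of posteriors have the cross-hatched set in their convex hull), and then checking the graph-monotonicity sign condition along every path of each such tree rather than just one path. A clean way to organize this is to observe that in any such tree the state $n$ has a neighbour in $\{a,b\}$ via some message $m$, and the receiver's mixed action after $m$ interpolates between a pure action good for $n$ (namely $\emptyset$ or $A$) and one good for the paired good state (namely $B$); the graph-monotonicity inequality for the two-step path through $n$ then reduces to comparing $u_R(\emptyset \mid n) - u_R(B\mid n)$ against $u_R(\emptyset\mid\theta') - u_R(B\mid\theta')$ for the neighbouring good state $\theta'$, and these have opposite signs, which is the contradiction. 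I would also double-check the degenerate possibility that a persuasive equilibrium pools $a$ with $b$ against $n$ --- but then the posterior on the $\{a,b\}$ edge still must induce randomization, and the same opposite-orientation argument applies to the path through whichever of $a,b$ neighbours $n$.
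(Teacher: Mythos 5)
Your overall strategy is the same as the paper's, just organized in contrapositive: the paper first derives the unique tree on which empathy satisfies \eqref{eq:GCM} (namely $a$ sending both messages, $b$ sending only the $p_B$-message, $n$ sending only the $A$-message) and then observes that receiver incentive compatibility makes that tree infeasible (the posterior inducing $p_B=\tfrac12(\emptyset\oplus B)$ must put weight $1/12$ on $n$, so $n$ must send that message), whereas you enumerate the receiver-feasible trees and show \eqref{eq:GCM} fails on each. Both routes rest on the same two ingredients --- the receiver-IC constraints pinning $n$ to the $p_B$-message and $a$ to the $A$-message, and the ordering of $u_R(p_B\rvert\theta)-u_R(A\rvert\theta)$ across states --- so your plan is workable and your second part (non-emptiness of $\SLIDS$ via $\SLIDS=\GCM$ for two-message trees and Proposition \ref{prop:acyc}) matches what the paper leaves to the reader.

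There are, however, two concrete soft spots. First, your ``clean way to organize this'' computes the wrong inequality: comparing $u_R(\emptyset\rvert n)-u_R(B\rvert n)$ with $u_R(\emptyset\rvert\theta')-u_R(B\rvert\theta')$ is the \eqref{SLID} condition over the \emph{support} of $p_B$, not the \eqref{eq:GCM} condition over the equilibrium actions $p_B$ and $A$. The GM sum for the path $(n,\pi_{p_B},a,\pi_A)$ is $\bigl(u_R(p_B\rvert n)-u_R(A\rvert n)\bigr)-\bigl(u_R(p_B\rvert a)-u_R(A\rvert a)\bigr)=-\tfrac58+\tfrac38=-\tfrac14<0$; note both bracketed terms are negative, so there is no ``opposite signs'' contradiction at the level of GM --- the violation is that $n$'s relative taste for $p_B$ over $A$ is the \emph{weakest} of all states while receiver IC forces $n$ onto the $p_B$-message. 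Your shortcut as stated would only establish $u_R\notin\SLIDS$, which (since SLID is a sufficient, not necessary, condition) does not deliver the nonexistence claim; you need the GM computation, which your main plan does call for. Second, Theorem \ref{thm:largeeps} has no necessity direction, so it cannot carry you from ``\eqref{eq:GCM} fails'' to ``no equilibrium exists at fixed $\epsilon>0$.'' The bridge the paper uses is Proposition \ref{prop:translucent} (any equilibrium of the ordinally transparent preference is informationally equivalent to a candidate equilibrium) together with the observation in Appendix \ref{app:weakstab} that in three-action models graph monotonicity is necessary for preserving a candidate equilibrium; you should cite that observation explicitly rather than a nonexistent converse of Theorem \ref{thm:largeeps}.
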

The key property of our example for this proposition is that the consumer much prefers $A$ over the candidate equilibrium action $p_B:=\frac{1}{2}(\emptyset\oplus B)$ in state $n$. As such an empathetic salesperson in state $n$ will seek to break ties in favour of action $A$ rather than the potentially harmful action $p_B$ (specifically, relative to a salesperson in state $a$). This is contrary to persuasive acyclic candidate equilibria however, which require that the salesperson sends a message that induces the action $p_B$ in state $n$, but not in state $a$.

Empathy can even be harmful to communication in this setting. Suppose the prior belief is in the cross-hatched region of Figure \ref{fig:uncoopemp}, and consider an acyclic persuasive candidate equilibrium $\sigma$ and a slight modification $ v\in\GCM$ that permits persuasion. Adding a relatively large amount of empathy will shift the modification outside of the set $\GCM$, removing the possibility of persuasion. In this way, increasing preference alignment may decrease the receiver's expected utility. Undoing this alignment provides an example where adding antipathy increases communication possibilities.\footnote{It is essential that the preference to which we add antipathy is already state dependent, to avoid the domain of Proposition \ref{prop:emp}.}
\end{ex}



\subsection{Lying Aversion}\label{sec:nearct}
A separate class of modifications are those that only depend on the message sent. This independence of the receiver's action means that stability of a candidate equilibrium can often be directly inferred from its communication graph. We explore this by applying our analysis to a lying averse sender.

Consider a sender concerned with honesty, opting to break their indifferences in favour of the truth. To model this we consider a message space identified with the state space through a bijection $\psi:M\leftrightarrow \Theta$ where a message $m$ is the claim `The state is $\psi(m)$.' The sender views the ethical cost of lying through the following modification
\begin{align}\label{eq:LA}
    v_{\text{LA}}(m\rvert \theta):=&\begin{cases}
    0&\psi(m)=\theta\\
    \ell_{\theta}&\psi(m)\neq \theta,
    \end{cases}
\end{align}
where $\ell_\theta<0$ for all $\theta$, ie. all lies from a state $\theta$ are equally costly.\footnote{The reader may consider this a benchmark, different results may be obtained if the cost of lying is anisotropic. 

Alternate models of lying aversion use a message-independent mechanism (see \cite{CD06}), where `lying' imposes a psychological cost through the guilt of inducing a misleading belief in the receiver. \cite{V08} experimentally tests these two models and finds that lying aversion is better described as a ``preference for keeping [one's] word'', as might be described by Equation \ref{eq:LA}. 

\cite{K09} applies a more general, but similar, model of lying aversion to Crawford-Sobel cheap talk.}

While lying aversion intuitively seems to favour honesty, it is actually contrary to the structure of candidate equilibria which generically require obfuscating the state to some degree (as we saw in Theorem \ref{thm:gen}). Thus lying is necessarily part of equilibria, and the challenge with lying averse senders is ensuring that they are only incentivize to send the `right' lies. This significantly constrains communication.

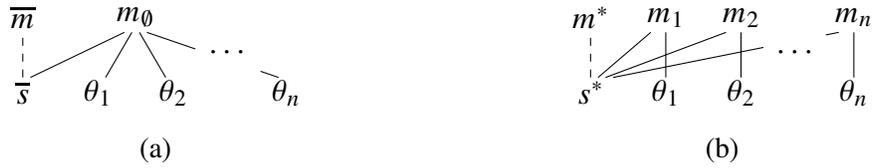
\begin{figure}
\centering
    \begin{subfigure}{0.45\textwidth}\centering
    \begin{tikzpicture}
    \node at (0,0) {$\overline{s}$};
    \node at (0,1) {$\overline{m}$};
    \node at (1.5,1) {$m_{\emptyset}$};
    \draw[dashed] (0,0.2) -- (0,0.8);
    \draw (0.1,0.2) -- (1.35,0.8);
    \node at (1,0) {$\theta_1$};
    \draw (1.1,0.2) -- (1.45,0.8);
    \node at (2,0) {$\theta_2$};
    \draw (1.9,0.2) -- (1.55,0.8);
    \node at (3.5,0) {$\theta_n$};
    \draw (3.4,0.2) -- (1.65,0.8);
    \node[circle,fill=white,inner sep=4pt] at (2.75,0.5) {$\cdots$};
    \end{tikzpicture}
    \caption{}\label{fig:2complete}
    \end{subfigure}
    \begin{subfigure}{0.45\textwidth}\centering
    \begin{tikzpicture}
    \node at (0,0) {$s^\ast$};
    \node at (0,1) {$m^\ast$};
    \draw[dashed] (0,0.2) -- (0,0.8);
    \node at (1,0) {$\theta_1$};
    \node at (1,1) {$m_{1}$};
    \draw (1,0.2) -- (1,0.8);
    \draw (0.1,0.2) -- (0.8,0.8);
    \node at (2,0) {$\theta_2$};
    \node at (2,1) {$m_2$};
    \draw (2,0.2) -- (2,0.8);
    \draw (0.2,0.2) -- (1.8,0.8);
    \draw (0.3,0.2) -- (3.3,0.8);
    \node[circle,fill=white,inner sep=4pt] at (2.75,0.5) {$\cdots$};
    \node at (3.5,0) {$\theta_n$};
    \node at (3.5,1) {$m_{n}$};
    \draw (3.5,0.2) -- (3.5,0.8);
    \end{tikzpicture}
    \caption{}\label{fig:2star}\end{subfigure}
    \caption{The two types of communication structures stabilized by the lying-averse modification $v_\text{LA}$ defined in eq. \ref{eq:LA}. (a) demonstrates the `single-disclosure' graph, where $m_\emptyset$ may be any non-$\overline{m}$ message, (b) demonstrates the `partial-revelation' graph. The dashed edges may or may not feature.}
    \label{fig:LAgraph}
\end{figure}
\begin{prop}\label{prop:LA}
    Let $\sigma\in \Sigma(u^0)$ be a tree candidate equilibrium of a cheap talk model (identified up to relabelling of messages). If the receiver has unique best responses to pure beliefs, then the following are equivalent:
\begin{enumerate}
    \item $v_\text{LA}\in \SLIDS$,
    \item $v_\text{LA}\in \GCM$,
    \item $G(\sigma)$ is one of the graphs illustrated in Figure \ref{fig:LAgraph}.
\end{enumerate} 
\end{prop}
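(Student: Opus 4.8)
The structure is to prove $(3) \Rightarrow (1)$, then note $(1) \Rightarrow (2)$ is Proposition \ref{prop:sandwich} (since $\SLIDS \subseteq \SLID \subseteq \GCM$), and finally $(2) \Rightarrow (3)$ by ruling out every tree that is not one of the two displayed shapes. Throughout, the key simplification is that $v_\text{LA}$ is message-dependent only: since the receiver has unique best responses to pure beliefs and we are in a cheap talk model, distinct messages induce distinct (mixed) actions, so the communication graph's message-nodes $\pi_i$ can be identified with the messages $m_i$ and hence with states $\psi(m_i)$. The relevant quantity along any path $(\theta_0,\pi_0,\dots,\theta_N,\pi_N)$ is then governed entirely by which of the $\theta_i$ equal $\psi(\pi_j)$, i.e.\ by where the `truthful' edges sit in the graph.

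For $(3) \Rightarrow (1)$ I would check \eqref{SLID} directly for each of the two graphs. In the single-disclosure graph (Figure \ref{fig:2complete}), one message $\overline m$ is sent only from its truthful state $\overline\theta$ (up to the dashed edge), and a pooling message $m_\emptyset$ is sent from all other states (and possibly $\overline\theta$); paths have length at most $2$, and one computes $v_\text{LA}(\pi_0\mid\theta_0) - v_\text{LA}(\pi_N\mid\theta_0) \ge v_\text{LA}(\pi_0\mid\theta_1) - v_\text{LA}(\pi_N\mid\theta_1)$ by case analysis on whether $\theta_0$ or $\theta_1$ is the truthful state of $\pi_0$ or $\pi_N$ — the left side is nonnegative (the state on $\pi_0$'s side prefers $\pi_0$) and the right side nonpositive, with strictness at $N=1$ because $\ell_\theta<0$ strictly. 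In the partial-revelation graph (Figure \ref{fig:2star}), each non-root message $m_i$ is truthful exactly for its own state $\theta_i$ and the root $m^\ast$ is sent from $\theta^\ast$ (truthfully) and every $\theta_i$ (a lie); again paths are short and the same sign analysis applies. Since $v_\text{LA}$ is action-independent, verifying \eqref{SLID} for the pure message-actions $\alpha_i \in \supp{\pi_i}$ is the same as verifying it for the $\pi_i$, so this step is genuinely just bookkeeping.

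The main obstacle is $(2) \Rightarrow (3)$: showing that a tree whose communication graph is \emph{not} one of the two pictures must violate \eqref{eq:GCM} for some path. Here I would argue as follows. In a tree, a message-node $\pi$ has some set of incident state-nodes; at most one of these can be the truthful state $\psi(\pi)$. If $G(\sigma)$ is neither of the two shapes, then (i) there exist at least two distinct ``pooling'' messages each sent from at least two states, or (ii) there is a single pooling message but at least two other messages are each sent from two or more states — in either case one can locate a path $(\theta_1,\pi_1,\dots,\theta_N,\pi_N=\pi_0)$ along which \emph{every} term $v_\text{LA}(\pi_i\mid\theta_i) - v_\text{LA}(\pi_{i-1}\mid\theta_i)$ is $\le 0$, because one can always route through the ``lying'' endpoints. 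Concretely: pick two messages $\pi, \pi'$ each sent from a non-truthful state, say $\pi$ is sent from $\theta$ with $\psi(\pi)\neq\theta$ and $\pi'$ from $\theta'$ with $\psi(\pi')\neq\theta'$; the path through the tree from $\pi$ to $\pi'$ and back contributes $\ell$-terms that cannot be offset, giving a sum $\le 0$, contradicting \eqref{eq:GCM}. Making this routing argument airtight — precisely characterizing which trees avoid such a bad path, and matching that characterization to exactly Figure \ref{fig:LAgraph}(a)–(b) — is the delicate part, and I expect it requires a careful induction on the tree or a direct combinatorial classification: a tree avoids a bad cycle-path iff the set of ``lie edges'' (state--message edges with $\psi(m)\neq\theta$) does not contain two independent edges in a suitable sense, which forces either all lie edges to share the message $\overline m$'s complement (single-disclosure case) or all lie edges to share the state-or-message structure of the star (partial-revelation case).
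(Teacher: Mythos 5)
Your skeleton matches the paper's: $(3)\Rightarrow(1)$ is a routine computation, $(1)\Rightarrow(2)$ follows from the general inclusion $\SLIDS\subseteq\GCM$, and all the real content sits in $(2)\Rightarrow(3)$. The problem is that your proposed mechanism for $(2)\Rightarrow(3)$ — ``pick two messages $\pi,\pi'$ each sent from a non-truthful state and route a closed path through them; the $\ell$-terms cannot be offset'' — is not correct as a criterion. The partial-revelation graph of Figure \ref{fig:2star} is itself a counterexample: every non-root message $m_i$ is sent from the center state $\theta^\ast$ as a lie, so your recipe would locate a ``bad path'' through $m_1$ and $m_2$ via $\theta^\ast$; but the GM sum along $(\theta_1,m_1,\theta^\ast,m_2)$ equals $\bigl(v_\text{LA}(m_1\rvert\theta_1)-v_\text{LA}(m_2\rvert\theta_1)\bigr)+\bigl(v_\text{LA}(m_2\rvert\theta^\ast)-v_\text{LA}(m_1\rvert\theta^\ast)\bigr)=-\ell_{\theta_1}+0>0$, so this allowed graph survives. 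The existence of two lie-edges on distinct messages is simply not the right obstruction, and the ``careful induction or combinatorial classification'' you defer to is precisely where the proof lives.

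The paper's resolution is to isolate the correct \emph{local} obstruction first: Lemma \ref{lem:N} shows that for any N-shaped subgraph (states $\theta_1,\theta_2$ both sending $m_1$, with $\theta_2$ also sending $m_2$), $v_\text{LA}\in\GCM$ forces $\psi(m_1)=\theta_1$ or $\psi(m_2)=\theta_2$ — i.e.\ the asymmetric bias needed on an N can only come from truth-telling at one of its two designated corners. The global classification then follows not by hunting for one long bad path but by iterating this lemma on overlapping N's to rule out two specific small subgraphs (a three-state/three-message zigzag and a four-state/two-message configuration, Figures \ref{fig:imp1}--\ref{fig:imp2}); their exclusion forces the half-square $G^2[M]$ to be complete and $G^2[\Theta]$ to be a star or complete, which is exactly the dichotomy of Figure \ref{fig:LAgraph}. (The hypothesis of unique best responses to pure beliefs enters here, to cap the number of revealing messages at one.) If you want to salvage your approach, you should replace your two-lie-edge criterion with this N-shape condition and then carry out the subgraph exclusion; as written, the argument would wrongly exclude one of the two graphs it is supposed to certify.
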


These graphs are determined by the \emph{half-squares} of their communication graphs.\footnote{\label{fn:halfsquare}The half square $G^2[X]$ of a bipartite graph $G$ contains all the vertices of one side $X$ of the bipartition, and draws edges between vertices that share a neighbour in $G$.

Single-disclosure captures communication graphs whose half-squares $G^2[\Theta]$ and $G^2[M]$ are both complete.

Partial revelation describes communication graphs whose state half-square $G^2[\Theta]$ is a star and whose message half-square $G^2[M]$ is complete.} 
When a communication graph's geometry is complex it becomes impossible to incentivize a specific lie from one state without tempting other states to make the same lie. As a result, there can only be one lying message (a `single-disclosure') as in Figure \ref{fig:2complete}, or there can only be one state that lies (`garbling' the revelation of other messages) as in Figure \ref{fig:2star}.

We use this result to explore the role of money burning in settings with translucent preferences.



\subsubsection*{Money-Burning}
Money burning is well-known technique in communication games to relax incentive constraints. A sender may profit significantly from the receiver taking an action $a$, but this profit may make it impossible to credibly recommend $a$. By accompanying this message with a public burning of a portion of their future profits, this recommendation becomes credible, as this may reduce the incentive to mislead the receiver to take this action.

A key question is whether this process actually benefits the sender, or if it requires the sender to burn all of their profit, providing no benefit over other equilibria.

While money burning greatly expands the set of candidate equilibria in settings with transparent preferences (permitting persuasion for generic priors in non-trivial models), it requires burning \emph{all} of the sender's profit from the induced action, and cannot increase the sender's utility above that which is attainable with cheap talk.\footnote{For discussion/formal proofs of these statements, see Appendix \ref{app:burn}} 

This changes when we move to \emph{nearly} transparent preferences for two reasons. Firstly, adding state-dependence increases the profit that senders obtain from persuasion. In the case of empathy for example, the sender receives a small bonus if communication increases the receiver utility. Thus in models where there is no cheap talk candidate equilibrium, burning money may increase the sender's utility.\footnote{This commonly occurs when the sender has a preference for `higher' beliefs in the receiver (relative to some order on $S$), e.g. the motivating example of the Prosecutor persuading a Judge in \cite{KG11}.} However, this effect factors through the perturbation term, so it will result in only a proportionately small increase in  utility.

A larger effect may occur when cheap talk candidate equilibria require garbling posteriors in a way that is incompatible with the state-dependent preference, while money-burning equilibria persist by permitting different communication graphs. We illustrate how this may occur using a lying aversion modification in a model that extends Example \ref{ex:asset} to managing a portfolio:

\begin{ex}[Advice on an Investment Portfolio]\label{ex:portfolio}
A lying averse broker advises an investor on how to manage their portfolio of investments. The broker possesses information about the state $s$ indicating an optimal balance across assets, however their material compensation is determined by the total value of assets that the investor trades. 

The investor's action now corresponds to a portfolio choice $a\in\mathcal{A}=\Delta^{n-1}$,\footnote{This pure action space, like in Example \ref{ex:asset}, is not discrete. It is easy to extend our results to this model, however the uneasy reader may also consider discrete approximations of these sets.} representing a division of their wealth across $n$ assets. The investor's initial portfolio is $a_0$. Each state $\theta\in \Theta$ corresponds to an optimal portfolio $a^\ast(\theta)$ that the investor seeks to match. As before, the broker receives a fee proportional to the volume of trade from the investors initial portfolio $a_0$. 

We will consider separately the case where the broker can only communicate through cheap talk, and when they may communicate through money burning. In general, the message space will be $M_1\times M_2$ where $M_1=\Theta$ is the set of cheap talk statements and $M_2\in\{\{0\},\mathbb{R}_+\}$ is a set of quantities of money to burn.

The investor and broker material utilities are thus
\begin{align*}
    u_R(a\rvert \theta)=&-\tfrac{1}{2}\|a-a^\ast(\theta)\|_2^2-\kappa\|a-a_0\|_1&
    u^0(a,(m_1,m_2))=&\|a-a_0\|_1-m_2
\end{align*}
where $\kappa\ge 0$ (the reader may consider $\kappa=0$ for simplicity). The broker will also suffer a slight moral cost to lying depending on their cheap talk statement $m_1$, as described by $v_\text{LA}$ (see eq. \ref{eq:LA}). We assume that the investor's portfolio is initially calibrated to their prior belief: $a_0 = \E[\mu]{a^\ast(\theta)}$.

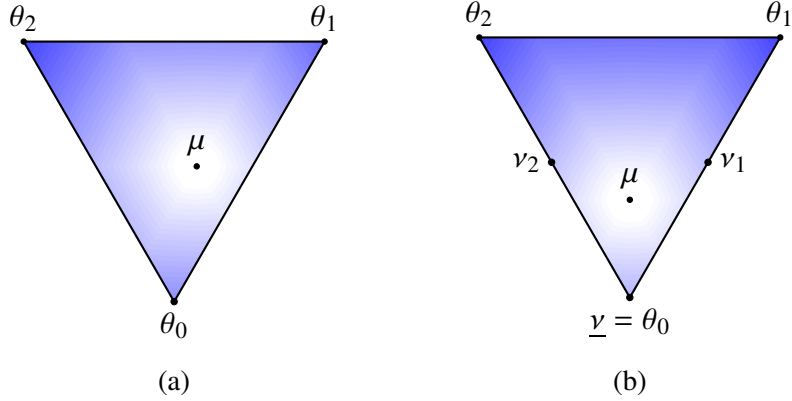
\begin{figure}\centering
\begin{subfigure}{0.3\textwidth}
\centering
\begin{tikzpicture}
    \coordinate (a) at (0,3.46);
    \coordinate (b) at (4,3.46);
    \coordinate (c) at (2,0);
    \newcommand\mux{2.3}
    \newcommand\muy{1.8}
    \newcommand\diff{-0.03}
    \newcommand\magn{25}
    \coordinate (mu) at (\mux,\muy);
    \begin{scope}
    \path[clip] (a) -- (b) -- (c) -- cycle;
    \foreach \x [evaluate=\x as \xn using {\x*2-8}] in {45,...,4} {
    \fill[blue!\xn!white]
    (\mux-2*\x/\magn,\muy) -- 
    (\mux-3/2*\x/\magn+0.865*\diff,\muy+\x*0.865/\magn-1/2*\diff) --
    (\mux-\x/\magn,\muy+\x*1.73/\magn) --
    (\mux,\muy+\x*1.73/\magn-\diff) --
    (\mux+\x/\magn,\muy+\x*1.73/\magn) --
    (\mux+3/2*\x/\magn-0.865*\diff,\muy+\x*0.865/\magn-1/2*\diff) --
    (\mux+2*\x/\magn,\muy) -- 
    (\mux+3/2*\x/\magn-0.865*\diff,\muy-\x*0.865/\magn+1/2*\diff) --
    (\mux+\x/\magn,\muy-\x*1.73/\magn) --
    (\mux,\muy-\x*1.73/\magn+\diff) --
    (\mux-\x/\magn,\muy-\x*1.73/\magn) --
    (\mux-3/2*\x/\magn+0.865*\diff,\muy-\x*0.865/\magn+1/2*\diff) --
    cycle;
    }
    \end{scope}
    \filldraw (a) circle (1 pt) node[anchor = south] {\color{black}$\theta_2$};
    \filldraw (b) circle (1 pt) node[anchor = south] {\color{black}$\theta_1$};
    \filldraw (c) circle (1.2 pt) node[anchor = north] {\color{black}$\theta_0$};
    \draw[thick] (a) -- (b) -- (c) -- cycle;
    \filldraw (mu) circle (1 pt) node[anchor = south] {\color{black}$\mu$};
    \end{tikzpicture}
    \caption{}\label{fig:portfolio.u}
\end{subfigure}\hspace{1cm}
\begin{subfigure}{0.3\textwidth}
\centering
\begin{tikzpicture}
    \coordinate (a) at (0,3.46);
    \coordinate (b) at (4,3.46);
    \coordinate (c) at (2,0);
    \newcommand\mux{2}
    \newcommand\muy{1.3}
    \newcommand\diff{-0.03}
    \newcommand\magn{25}
    \newcommand\ratio{1.73}
    \coordinate (mu) at (\mux,\muy);
    \coordinate (nub) at (\mux+\muy*0.8,\muy*0.8*\ratio);
    \coordinate (nua) at (\mux-\muy*0.8,\muy*0.8*\ratio);
    \begin{scope}
    \path[clip] (a) -- (b) -- (c) -- cycle;
    \foreach \x [evaluate=\x as \xn using {\x*2-8}] in {45,...,4} {
    \fill[blue!\xn!white]
    (\mux-2*\x/\magn,\muy) -- 
    (\mux-3/2*\x/\magn+0.865*\diff,\muy+\x*0.865/\magn-1/2*\diff) --
    (\mux-\x/\magn,\muy+\x*1.73/\magn) --
    (\mux,\muy+\x*1.73/\magn-\diff) --
    (\mux+\x/\magn,\muy+\x*1.73/\magn) --
    (\mux+3/2*\x/\magn-0.865*\diff,\muy+\x*0.865/\magn-1/2*\diff) --
    (\mux+2*\x/\magn,\muy) -- 
    (\mux+3/2*\x/\magn-0.865*\diff,\muy-\x*0.865/\magn+1/2*\diff) --
    (\mux+\x/\magn,\muy-\x*1.73/\magn) --
    (\mux,\muy-\x*1.73/\magn+\diff) --
    (\mux-\x/\magn,\muy-\x*1.73/\magn) --
    (\mux-3/2*\x/\magn+0.865*\diff,\muy-\x*0.865/\magn+1/2*\diff) --
    cycle;
    }
    \end{scope}
    \filldraw (a) circle (1 pt) node[anchor = south] {\color{black}$\theta_2$};
    \filldraw (b) circle (1 pt) node[anchor = south] {\color{black}$\theta_1$};
    \filldraw (nua) circle (1.2 pt) node[anchor = east] {\color{black}$\nu_2$};
    \filldraw (nub) circle (1.2 pt) node[anchor = west] {\color{black}$\nu_1$};
    \filldraw (c) circle (1.2 pt) node[anchor = north] {\color{black}$\underline{\nu}=\theta_0$};
    \draw[thick] (a) -- (b) -- (c) -- cycle;
    \filldraw (mu) circle (1 pt) node[anchor = south] {\color{black}$\mu$};
    \end{tikzpicture}
    \caption{}\label{fig:portfolio.star}
\end{subfigure}
    \caption{An illustration of the sender's indirect utility in Example \ref{ex:portfolio} with $|\Theta|=3$, and each state has its own distinct optimal asset which the investor would like to invest their entire portofio in. Level sets are (slightly rounded) hexagons, due to the hexagonal shape of $\|\cdot\|_1$ on the simplex. Figure \ref{fig:portfolio.star} shows how partial-revelation communication graphs may be persuasive.}
    \label{fig:portfolio}
\end{figure}

Let $\tilde{u}^\ast(\nu):=u^\ast(\nu,0)$ be the broker's indirect utility of a posterior $\nu$ if no money is burnt.  An example case of $\tilde{u}^\ast$ when $|\Theta|=3$ is graphed in Figure \ref{fig:portfolio.u}. For the following proposition, the relevant properties of this model are: (1) $\tilde{u}^\ast$ is quasiconvex; (2) $\tilde{u}^\ast$ attains its minimum at the prior $\mu$.

Let $\theta_0$ be the state that induces the minimum level of trade when revealed: $\theta_0:=\argmin_{\theta}\tilde{u}^\ast(\theta)$. We denote $\mu_{\Theta'}$ to be the distribution that results from conditioning the prior on the event $\Theta'\subseteq \Theta$. A lying averse sender is constrained to the communication graphs of Figure \ref{fig:LAgraph}, resulting in the following proposition.
\begin{prop}\label{prop:portfolio}
    There exists a single-disclosure candidate equilibrium $\sigma\in\Sigma(u^0)$ where the broker attains a material utility of at least $\tilde{u}^\ast(\theta)$ through
    \begin{align*}
    &\text{(a)}\quad\text{cheap talk}& \text{iff}&&
    \tilde{u}^\ast(\theta)<&\tilde{u}^\ast(\overline{\theta})< \tilde{u}^\ast(\mu_{\{\overline{\theta}\}^c})&
    \text{ for some }\overline{\theta},\\
    &\text{(b)}\quad\text{money burning}& \text{iff}&&
    \tilde{u}^\ast(\theta)<& \tilde{u}^\ast(\overline{\theta}),\tilde{u}^\ast(\mu_{\{\overline{\theta}\}^c})&
    \text{ for some }\overline{\theta}.
    \end{align*}
    If $\theta\neq \theta_0$ and $\tilde{u}^\ast(\theta)\ge \tilde{u}^\ast(\mu_{\{\overline{\theta},\theta_0\}})$ for all $\overline{\theta}\in \Theta$ 
    then there exists no partial-revelation candidate equilibrium attaining this material utility.\footnote{This proposition requires $|\Theta|\ge 4$ to be interesting. This is because ruling out the highest state results in a utility of at most the second highest state, so it must be that there are at least two states above $\theta$ for (a,b) to apply. And if there are three states, then (a,b) just show how to attain $\tilde{u}^\ast(\theta_0)$, which is often attainable through partial-revelation graphs (see Figure \ref{fig:portfolio.star}).}
\end{prop}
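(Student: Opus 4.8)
The plan is to reduce everything to the geometry of the broker's no-burning indirect utility $\tilde u^\ast$, which here is a genuine function (not a correspondence) because $u_R(\cdot\mid\theta)=-\tfrac12\|a-a^\ast(\theta)\|_2^2-\kappa\|a-a_0\|_1$ is strictly concave in $a$, so the receiver has a unique best reply to every belief. I would first record two consequences of the two stated properties (quasiconvexity of $\tilde u^\ast$ and $\tilde u^\ast(\mu)=0=\min\tilde u^\ast$): (i) on the segment $[\mu,\mu_{\{\overline\theta\}^c}]$, which is exactly the set of posteriors the pooling message can induce when a single state $\overline\theta$ discloses, $\tilde u^\ast$ is nondecreasing and runs from $0$ to $\tilde u^\ast(\mu_{\{\overline\theta\}^c})$ (a quasiconvex function on a segment with its minimum at an endpoint is monotone), so on the open segment it takes precisely the values in $\bigl(0,\tilde u^\ast(\mu_{\{\overline\theta\}^c})\bigr)$; and (ii) if a belief $\nu$ on $[\delta_{\theta'},\delta_{\theta_0}]$ is tilted toward $\delta_{\theta_0}$ no more than $\mu_{\{\theta',\theta_0\}}$ is, then $\tilde u^\ast(\nu)\le\max\{\tilde u^\ast(\theta'),\tilde u^\ast(\mu_{\{\theta',\theta_0\}})\}$ by quasiconvexity on the sub-segment $[\delta_{\theta'},\mu_{\{\theta',\theta_0\}}]$. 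I would invoke Proposition~\ref{prop:LA} to restrict attention to the single-disclosure and partial-revelation graphs of Figure~\ref{fig:LAgraph}, and Theorem~\ref{thm:gen} to discard the degenerate (forest) versions of those graphs for generic $u_R$, so that the disclosing state genuinely also pools and the relevant posteriors lie in segment interiors.

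Next I would handle the two single-disclosure cases. In a cheap-talk single-disclosure candidate equilibrium disclosing $\overline\theta$, the on-path posteriors are $\delta_{\overline\theta}$ and some $\nu_\emptyset$ in the open segment $(\mu,\mu_{\{\overline\theta\}^c})$, the disclosing state's indifference reads $\tilde u^\ast(\overline\theta)=\tilde u^\ast(\nu_\emptyset)$, the common material payoff is $\overline u=\tilde u^\ast(\overline\theta)$, and the remaining incentive constraints (together with the lying-aversion tie-breaks, using off-path beliefs $\delta_{\theta_0}$) are slack; by (i) such $\nu_\emptyset$ exists iff $0<\tilde u^\ast(\overline\theta)<\tilde u^\ast(\mu_{\{\overline\theta\}^c})$, which combined with $\overline u\ge\tilde u^\ast(\theta)$ yields condition (a) (generically the first inequality is strict). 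With money burning, the disclosing message $(\overline\theta,b_1)$ and pooling message $(m_\emptyset,b_2)$ must satisfy $\overline u=\tilde u^\ast(\overline\theta)-b_1=\tilde u^\ast(\nu_\emptyset)-b_2$ with $b_1,b_2\ge0$, so $\overline u\le\min\{\tilde u^\ast(\overline\theta),\tilde u^\ast(\nu_\emptyset)\}$ with equality attained by burning the larger trade volume down; maximizing over $\nu_\emptyset$ in the segment (pushing $\tilde u^\ast(\nu_\emptyset)$ up toward $\tilde u^\ast(\mu_{\{\overline\theta\}^c})$) makes payoffs up to $\min\{\tilde u^\ast(\overline\theta),\tilde u^\ast(\mu_{\{\overline\theta\}^c})\}$ attainable, and requiring this to be $\ge\tilde u^\ast(\theta)$ gives condition (b). The only subtlety is that (b) weakens (a) exactly when $\tilde u^\ast(\overline\theta)\ge\tilde u^\ast(\mu_{\{\overline\theta\}^c})$: then the cheap-talk indifference $\tilde u^\ast(\overline\theta)=\tilde u^\ast(\nu_\emptyset)$ is (generically) infeasible, but burning on the disclosure still delivers payoff $\tilde u^\ast(\mu_{\{\overline\theta\}^c})$ — this is precisely the money-burning gain.

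For the partial-revelation impossibility I would let $\sigma$ be the garbling state of Figure~\ref{fig:2star} (so every other state $\theta'$ sends only $m_{\theta'}$, with posterior $\nu_{\theta'}$ supported on $\{\theta',\sigma\}$, and $\sigma$ feeds all of these, plus possibly an extra message $m^\ast$ revealing $\sigma$), assume for contradiction $\overline u\ge\tilde u^\ast(\theta)$ with $\theta\neq\theta_0$, and split on whether $m^\ast$ is used. If $m^\ast$ is used, then $\tilde u^\ast(\nu_{\theta'})=\tilde u^\ast(\sigma)$ has to hold at an interior point of each $[\delta_{\theta'},\delta_\sigma]$; since $x\mapsto\tilde u^\ast((1-x)\delta_{\theta'}+x\delta_\sigma)$ is quasiconvex with value $\tilde u^\ast(\sigma)$ at $x=1$, attaining $\tilde u^\ast(\sigma)$ also at an interior point forces (generically, i.e.\ when $\tilde u^\ast$ is not constant on a segment) $\tilde u^\ast(\theta')>\tilde u^\ast(\sigma)$ for all $\theta'$, so $\sigma=\theta_0$ and $\overline u=\tilde u^\ast(\theta_0)<\tilde u^\ast(\theta)$, a contradiction. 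If $m^\ast$ is not used, I consider the posterior $\nu_{\theta_0}$ after $\theta_0$'s own message (here $\theta_0\neq\sigma$): all of $\theta_0$'s prior mass and at most all of $\sigma$'s prior mass lands on $m_{\theta_0}$, so $\nu_{\theta_0}(\sigma)/\nu_{\theta_0}(\theta_0)\le\mu(\sigma)/\mu(\theta_0)$, i.e.\ $\nu_{\theta_0}$ sits on $[\delta_{\theta_0},\mu_{\{\theta_0,\sigma\}}]$; then (ii) gives $\overline u=\tilde u^\ast(\nu_{\theta_0})\le\max\{\tilde u^\ast(\theta_0),\tilde u^\ast(\mu_{\{\theta_0,\sigma\}})\}$, and since $\tilde u^\ast(\theta_0)=\min_\theta\tilde u^\ast(\theta)<\tilde u^\ast(\theta)$ and $\tilde u^\ast(\mu_{\{\theta_0,\sigma\}})\le\tilde u^\ast(\theta)$ (the hypothesis at $\overline\theta=\sigma$), we get $\overline u\le\tilde u^\ast(\theta)$, with equality excluded generically (it would force $\nu_{\theta_0}$ to an endpoint of $[\delta_{\theta_0},\mu_{\{\theta_0,\sigma\}}]$, contradicting that $\sigma$ sends both $m_{\theta_0}$ and other messages), again a contradiction. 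I expect this last part to be the main obstacle: spotting the geometric constraint that $\nu_{\theta_0}$ must lie on the $\theta_0$-side of $\mu_{\{\theta_0,\sigma\}}$, and then keeping the genericity bookkeeping (forests, ties among $\tilde u^\ast$-values, flat stretches of $\tilde u^\ast$, strict versus weak inequalities) honest; the single-disclosure computations are comparatively routine once (i) is in place.
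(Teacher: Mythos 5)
Your single-disclosure argument is the paper's argument: for cheap talk you garble the pooling posterior along the segment from $\mu_{\{\overline{\theta}\}^c}$ toward $\mu$ until its payoff falls to $\tilde{u}^\ast(\overline{\theta})$, which is feasible exactly when $\tilde{u}^\ast(\overline{\theta})<\tilde{u}^\ast(\mu_{\{\overline{\theta}\}^c})$ because quasiconvexity with the minimum at $\mu$ makes $\tilde{u}^\ast$ monotone on that segment; for money burning you instead keep $\mu_{\{\overline{\theta}\}^c}$ ungarbled and burn down whichever of the two on-path payoffs is higher, reaching $\min\{\tilde{u}^\ast(\overline{\theta}),\tilde{u}^\ast(\mu_{\{\overline{\theta}\}^c})\}$. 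The partial-revelation impossibility also rests on the same two geometric facts the paper uses: the posterior after $\theta_0$'s own message lies on $[\delta_{\theta_0},\mu_{\{\theta_0,\theta^\ast\}}]$ (all of $\theta_0$'s mass and at most all of the center state's mass arrive there), and quasiconvexity then caps the common payoff by $\max\{\tilde{u}^\ast(\theta_0),\tilde{u}^\ast(\mu_{\{\theta_0,\theta^\ast\}})\}\le\tilde{u}^\ast(\theta)$.

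There is, however, a hole in your case analysis for the partial-revelation part. You split on whether the revealing message $m^\ast$ is on path, and in the branch where it is not you simply assert ``here $\theta_0\neq\sigma$'' without justification; the subcase in which the garbling (center) state is $\theta_0$ itself \emph{and} $m^\ast$ is never sent is not covered by either of your branches (your first branch derives $\sigma=\theta_0$ only as a \emph{consequence} of $m^\ast$ being on path, so it does not absorb this case). The paper splits instead on whether the center is $\theta_0$: when it is, every on-path posterior lies on some segment $[\delta_{\theta_0},\delta_{\theta'}]$, so quasiconvexity bounds the common payoff by $\min_{\theta'\neq\theta_0}\tilde{u}^\ast(\theta')\le\tilde{u}^\ast(\theta)$, regardless of whether $m^\ast$ is sent --- a one-line patch, but one your write-up currently lacks. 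A secondary remark: your inference in the $m^\ast$-on-path branch (interior indifference with $\delta_\sigma$ forces $\tilde{u}^\ast(\theta')>\tilde{u}^\ast(\sigma)$) genuinely requires the ``no flat stretch of $\tilde{u}^\ast$ at its maximum on the segment'' caveat you flag, since the indirect utility here is piecewise linear; the paper's case split avoids ever needing that inference.
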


To attain a utility higher than $\tilde{u}^\ast(\theta)$ with single-disclosure communication graphs, both cheap talk and money burning require that this utility is \emph{ex post} guaranteed by a Blackwell experiment that reveals only whether or not the state is $\overline{\theta}$. However, in cheap talk we additionally requires that the signal $\{\overline{\theta}\}^c$, revealing the state is not $\overline{\theta}$, induces higher utility than $\tilde{u}^\ast(\overline{\theta})$. This is because, in cheap talk, only the former signal can reduce its payoff to provide incentive-compatibility (through garbling the signal) while retaining a single-disclosure graph.

Money-burning avoids this problem by allowing the broker to reduce their payoff from the disclosure signal $\overline{\theta}$ by burning an appropriate amount of money, rather than through garbling the signal.

The last part of the proposition gives conditions under which this is the \emph{only} way to attain such a utility under lying aversion. This condition is satisfied when it is impossible for the broker to garble the signal induced by $\theta_0$ to attain a high utility. The maximum garbling of this signal occurs when the sender at the center state $\theta^\ast=\overline{\theta}$ only ever sends the same signal, resulting in the material utility $\tilde{u}^\ast(\mu_{\{\overline{\theta},\theta_0\}})$.  

\end{ex}

\subsection{Weak Signalling Models}
Our last application is a modification to the communication technology. This, as with lying aversion, turns our communication game into a signalling game, where the signal cost is only weakly state-dependent.

Consider a weakly verifiable disclosure game\footnote{We reach verifiable disclosure games as $\epsilon\rightarrow\infty$. Note that we consider a model allowing vague disclosure, this can be understood as a variation of the `persuasion game' of \cite{M81}. \cite{BC18} provide a broad analysis of such games.} where the sender discloses `pieces of evidence' $e\in \Theta$ to the receiver, each of which rules out its corresponding state --- however the sender may also fabricate evidence at a low cost. This may be the case where the verification process is unreliable, and lies are rarely caught.

We call this model \emph{Weakly Verified Disclosure}. The corresponding modification is then the sum of the cost of presenting the evidence plus the cost of fabricating any false evidence. For a message $m=\{e_s\}_s\in 2^\Theta$ that claims the state is not in $m$, the modification takes the form
\begin{align}
    v_\text{WD}(m\rvert \theta):=&\begin{cases}
    \tau_m&\theta\not\in m\\
    \ell_{m,\theta}& \theta\in m,
    \end{cases}
\end{align}
where $0\ge \tau_m>\ell_{m,\theta}$ for all $\theta\in m$ and messages $m$.

The flexibility in message content allows this modification to be maximally stabilizing:
\begin{prop}\label{prop:WD}
Let $\sigma\in \Sigma(u^0)$ be a tree candidate equilibrium of a cheap talk model (identified up to relabelling of messages). If the message space is $M= 2^\Theta$ then $v_\text{WD}\in \GCM$. If $\ell_{m,\theta}\equiv \ell_m$ then $v_\text{WD}\in \SLIDS$.
\end{prop}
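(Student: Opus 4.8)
The plan is to produce an explicit relabelling of the equilibrium messages by elements of $2^\Theta$ and then verify (GM) and (SLID) by one telescoping computation along paths of $G(\sigma)$. The structural fact doing all the work is that in a tree the two endpoints of a sufficiently long simple path cannot be adjacent, which forces a term of $v_\text{WD}$ that would otherwise be a truthful-disclosure cost $\tau$ to be a strictly smaller fabrication cost $\ell$. Concretely, since $v_\text{WD}$ is action-independent it suffices to attach a set $m_\pi\in 2^\Theta$ to each message-action node $\pi$ of $G(\sigma)$, and I would take $m_\pi:=\Theta\setminus\sigma^{-1}(\pi)$, the set of states that do \emph{not} send $\pi$ — so $\pi$ is read as the (possibly fabricated) claim that the state lies in $\sigma^{-1}(\pi)$. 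After the usual consolidation of cheap-talk messages inducing the same receiver action, distinct message nodes of a tree have distinct supports (two with a common support of size $\ge 2$ would close a $4$-cycle; two with the same singleton support induce the same posterior and are consolidated), so $\pi\mapsto m_\pi$ is a well-defined injection into $M=2^\Theta$. The only property used below is: if $\theta$ sends $\pi$ then $\theta\notin m_\pi$ and $v_\text{WD}(\pi\rvert\theta)=\tau_{m_\pi}$, while otherwise $v_\text{WD}(\pi\rvert\theta)=\ell_{m_\pi,\theta}<\tau_{m_\pi}$.

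For graph monotonicity, fix a path $(\theta_1,\pi_1,\dots,\theta_N,\pi_N=:\pi_0)$ on $G(\sigma)$ with $N\ge 2$. For each $i\in\{2,\dots,N\}$ the state $\theta_i$ is adjacent to both $\pi_{i-1}$ and $\pi_i$, so both disclosures are truthful and $v_\text{WD}(\pi_i\rvert\theta_i)-v_\text{WD}(\pi_{i-1}\rvert\theta_i)=\tau_{m_{\pi_i}}-\tau_{m_{\pi_{i-1}}}$. For $i=1$, $\theta_1$ sends $\pi_1$ (truthful), but $\theta_1$ cannot be adjacent to $\pi_N$: the path $\theta_1-\pi_1-\cdots-\pi_N$ is simple of length $2N-1\ge 3$, so an edge $\theta_1-\pi_N$ would create a cycle. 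Hence $\theta_1\in m_{\pi_N}$ and $v_\text{WD}(\pi_0\rvert\theta_1)=\ell_{m_{\pi_N},\theta_1}$. The $\tau$-terms telescope, so
\[
\sum_{i=1}^{N}\bigl(v_\text{WD}(\pi_i\rvert\theta_i)-v_\text{WD}(\pi_{i-1}\rvert\theta_i)\bigr)=\tau_{m_{\pi_N}}-\ell_{m_{\pi_N},\theta_1}>0,
\]
which is exactly (GM); hence $v_\text{WD}\in\GCM$. This step uses only acyclicity, not connectedness.

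For the (SLID) claim, assume $\ell_{m,\theta}\equiv\ell_m$. Since $v_\text{WD}$ is action-independent, $\SLIDS=\SLID$, so it is enough to verify the strong graph-monotonicity inequality of Proposition \ref{prop:sandwich}: for every path $(\theta_0,\pi_0,\dots,\theta_N,\pi_N)$ on $G(\sigma)$ one needs $v_\text{WD}(\pi_0\rvert\theta_0)-v_\text{WD}(\pi_N\rvert\theta_0)\ge v_\text{WD}(\pi_0\rvert\theta_1)-v_\text{WD}(\pi_N\rvert\theta_1)$, strictly when $N=1$. Both $\theta_0$ and $\theta_1$ send $\pi_0$ (truthful); by acyclicity $\theta_0$ does not send $\pi_N$, and when $N\ge 2$ neither does $\theta_1$. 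So for $N\ge 2$ both sides equal $\tau_{m_{\pi_0}}-\ell_{m_{\pi_N}}$ — the hypothesis $\ell_{m,\theta}\equiv\ell_m$ is precisely what makes the $\theta_0$- and $\theta_1$-terms coincide — and the inequality holds with equality. For $N=1$, $\theta_1$ sends $\pi_1=\pi_N$ (truthful), so the left side is $\tau_{m_{\pi_0}}-\ell_{m_{\pi_1}}$ and the right side $\tau_{m_{\pi_0}}-\tau_{m_{\pi_1}}$; their difference is $\tau_{m_{\pi_1}}-\ell_{m_{\pi_1}}>0$, the required strict inequality (and it holds for either orientation of the length-one path, hence for every choice of representative actions). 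Therefore $v_\text{WD}\in\SLIDS$.

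The main obstacle is not in the computations, which collapse to one telescoping identity in both parts, but in the first step: one must check that $\pi\mapsto\Theta\setminus\sigma^{-1}(\pi)$ really is a legitimate injective relabelling of the equilibrium messages by distinct elements of $2^\Theta$, and this is where the tree hypothesis and the cheap-talk convention of consolidating messages inducing the same receiver action are used. Everything else is a matter of tracking which disclosures along a path are truthful versus fabricated: acyclicity converts the ``closing'' term of each path into a fabrication cost (giving strictness for (GM) and for the $N=1$ case of (SLID)), while the extra hypothesis $\ell_{m,\theta}\equiv\ell_m$ is exactly what is needed for the two $N\ge 2$ sides in the (SLID) check to cancel.
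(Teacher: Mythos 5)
Your proof is correct and follows essentially the same route as the paper's: relabel each equilibrium message as the complement of its inducing states, then use acyclicity to determine which disclosures along a path are truthful versus fabricated. Your telescoping of the $\tau$-terms and your explicit check that $\pi\mapsto\Theta\setminus\sigma^{-1}(\pi)$ is injective are in fact slightly more careful than the paper's version, which asserts each intermediate difference is exactly zero (literally true only when $\tau_m$ is constant in $m$) and leaves the distinct-supports point to the surrounding discussion.
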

The key fact in this argument is that acyclicity implies equilibrium beliefs have distinct supports, and thus every belief in such an equilibrium may be associated with a distinct truthful message.

\section{Conclusion}
In this paper we discuss how small, state-dependent modifications to the sender's preference can have a discrete effect on the robust equilibria of one-shot communication games. While we focus on finite states, it seems natural that our model should at least approximate interval equilibria in one-dimensional communication games ($\Theta,\mathcal{A}\subseteq \mathbb{R}$) as we increase the number of states/actions. However it is unlikely that this approach will work in multi-dimensional communication games, where the constraint of acyclic communication graphs seems less easily interpretable. Such environments call for different techniques --- I suggest one possibility guess in Appendix \ref{app:CM}.





To outline the scope of our approach, note that in Bayesian Perfect Equilibria of communication games the receiver's decision is determined through the following causal process:
\begin{enumerate}
    \item The sender observes the state and chooses a message,
    \item The receiver observes the message and forms a belief,
    \item The receiver chooses an action that is a best response to this belief.
\end{enumerate}
In this paper we discuss how introducing small state dependence in to the sender's preference over the first and third steps can create robust equilibria when the sender's preferences are otherwise state-independent. Since both correspond to adding a state-dependent component to the sender's utility, they can be studied within the same framework.

The reader might wonder what would be the effect of adding state-dependence to the second step. One way to interpret this is as a \emph{psychological} modification (in the sense of \cite{GPS89}) where the sender's utility is dependent on the receiver's belief. This might be induced through a variation on lying aversion where the sender feels guilt for inducing a misleading belief (see e.g. \cite{KS19}). Since this too is a modification of message-induced subgame utility, it can be treated within the above framework.

A distinct way to introduce state-dependence in the second step is if messages induce beliefs in a state dependent manner. This naturally occurs if the receiver observes an informative signal that is (conditional on the state) independent from the sender's message. Since this signal is informative, it will affect the receiver's beliefs. But this informativeness also means that the likelihood of various signal realizations --- which induce different beliefs --- varies across states, thereby introducing state-dependence in the sender's incentives. As a concrete example, we might consider a consumer who has heard word-of-mouth testimonies about the products a salesperson is promoting. In this way, incentive compatible communication may be possible depending on the inter-dependence of the sender's and receiver's information structures.

\printbibliography
\appendix

\section{Additional Proofs}\label{app:proofs} \subsection*{Topology of Candidate Equilibria}
We state our alternate result for fixed $u_R$ and generic prior beliefs:
\begin{axiom}[Assumption (R)][Receiver Distinguishes between Best Responses.]\label{ax:R}
If $a,a'\in\mathcal{A}$ are receiver-best responses to a belief $\nu\in\Delta \Theta$ then there exists a state $\theta\in\supp{\nu}$ such that $u_R(a\rvert \theta)\neq u_R(a'\rvert \theta)$.
\end{axiom}
This says that when restricted to any subset $\Theta_1\subseteq \Theta$, if two actions are equivalent to the reciever (ie. they yield identical utilities for all states in $\Theta_1$), they must be irrelevant (ie. they are never best responses for beliefs in $\Theta_1$). As a result the set of beliefs in $\Delta \Theta_1$ where the receiver is indifferent between two best responses has codimension 1.

There is also a slightly stronger version of this assumption that we will also use:
\begin{axiom}[Assumption (R$\ast$)][Receiver Distinguishes within 3-Best Responses.]
Let $a,a',a''\in\mathcal{A}$ be (possibly non-distinct) receiver-best responses to a belief $\nu\in\Delta \Theta$. If $p,p'\in\Delta\{a,a',a''\}$ and $u_R(p\rvert \theta)= u_R(p'\rvert \theta)$ for all $\theta\in\supp{\nu}$, then $p=p'$.
\end{axiom}
This can be equivalently stated in terms of the preference over pure actions as follows: if $a$ is equivalent to a convex combination of $a',a''$ (when restricted to any $\Theta_1\subseteq \Theta$) then $a$ is irrelevant (over $\Delta \Theta_1$).

This implies that for any $\Theta_1\subseteq \Theta$, (i) the set of beliefs in $\Delta \Theta_1$ where the receiver is indifferent between two best responses has codimension 1, and (ii) the set of beliefs in $\Delta \Theta_1$ where the receiver is indifferent between three best responses has codimension 2. 

These assumptions are necessary conditions for an equilibrium involving such mixed actions to be Harsanyi stable to perturbations in the \emph{receiver's} utility. They are generically satisfied on the function space $\mathbb{R}^{\mathcal{A}\times \Theta}$. 

We now extend our observation about pure-strategy equilibria to acyclic equilibria:

\begin{manualtheorem}{1'}\label{thm:genmu}
If assumption (R) is satisfied and the sender has transparent preferences, then for generic priors $\mu$
\begin{enumerate}[label=(\textbf{\alph*})]
    \item there are no forest equilibria,
    \item all acyclic equilibria involve precisely one pure action.
\end{enumerate}
When assumption (R$\ast$) is satisfied, we further obtain that for generic priors $\mu$
\begin{enumerate}[resume,label=(\textbf{\alph*})]
    \item all acyclic equilibria have only binary mixed actions (ie. $p$ with $|\supp{p}|=2$),
    \item there are finitely many acyclic equilibria (up to choice of messages), each uniquely identified by its communication graph.
\end{enumerate}
\end{manualtheorem}
\begin{proof}[Proof of Theorem \ref{thm:gen} and \ref{thm:genmu}]
    Both for  is a corollary of the following lemma:
\begin{lem}\label{lem:forest}
    Under Assumptions (S), for generic $\mu$ every acyclic connected component of every communication graph consists of
    \begin{itemize}
        \item a pure action when Assumption (R) is satisfied, furthermore there are only finite posteriors inducing mixed actions represented by this communication graph
        \item furthermore, mixed actions randomizing between (only) two pure actions when Assumption (R$\ast$) is satisfied.
    \end{itemize}
    Alternatively these properties are satisfied for fixed $\mu$ and generic $u_R$.
\end{lem}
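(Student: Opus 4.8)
The plan is to prove the lemma by a parametric-transversality (Sard) argument: fix the combinatorial type of a hypothetical tree component, realize the equilibria carrying it as the solution set of an explicit system of indifference equations, and count dimensions to see that ``too many indifferences'' can only occur for a negligible set of parameters. Since $\Theta$ is finite there are only finitely many trees $T$ on a vertex set $\Theta_0\sqcup M_0$ with $\Theta_0\subseteq\Theta$, and finitely many ways to attach to each $m\in M_0$ a support $S_m\subseteq\mathcal{A}$; so it suffices to fix $(T,(S_m)_m)$ and show that the set of parameters $\lambda$ --- either the prior $\mu\in\Delta\Theta$, with $u_R$ fixed and satisfying (R)/(R$\ast$), or the receiver utility $u_R\in\mathbb{R}^{\mathcal{A}\times\Theta}$ with $\mu$ fixed --- for which some candidate equilibrium $\sigma$ has $G(\sigma)$ containing a component of type $(T,(S_m))$ with $\sum_m(|S_m|-1)>|M_0|-1$ is contained in a closed, Lebesgue-null subset of the parameter space. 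A finite union of such sets is still negligible, which will give the lemma.

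\textbf{Dimension count.} For fixed $(T,(S_m))$: a sender strategy supported exactly on $T$ lies in the relative interior of a product, over $\theta\in\Theta_0$, of simplices of distributions over the messages adjacent to $\theta$, of total dimension $\sum_\theta(\deg_T\theta-1)=|E(T)|-|\Theta_0|=|M_0|-1$ (using that $T$ is a tree). Via Bayes it induces posteriors $\nu(m)\in\Delta\Theta_0$, and receiver incentive compatibility requires that at each $\nu(m)$ the receiver be indifferent among all actions in $S_m$ --- a system of $\sum_m(|S_m|-1)$ scalar equations; Assumptions (R) and (R$\ast$) on $u_R$ are precisely what makes each of these cuts non-degenerate (codimension $|S_m|-1$) for $|S_m|\le2$, resp.\ $|S_m|\le3$, on the face $\Delta\Theta_0$. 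Let $Z$ be the incidence set of pairs (strategy on $T$, parameter $\lambda$) satisfying all these equations. Granting that these equations are generically independent in the joint variables --- i.e.\ that the natural indifference map is a submersion at the relevant points --- one gets $\dim Z\le(|M_0|-1)+\dim(\text{parameter space})-\sum_m(|S_m|-1)$, and projecting $Z$ onto the parameter space, Sard's theorem shows its image has dimension at most $\dim(\text{parameter space})+\min\{0,\;|M_0|-1-\sum_m(|S_m|-1)\}$. Hence when $\sum_m(|S_m|-1)>|M_0|-1$ the image is contained in a closed null set, and when $\sum_m(|S_m|-1)=|M_0|-1$ the fibre over a generic $\lambda$ is $0$-dimensional, i.e.\ finite.

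\textbf{Combinatorial conclusion.} Generically, then, every acyclic component has $\sum_m(|S_m|-1)\le|M_0|-1$. Writing $j$ for the number of its messages inducing a pure action, $\sum_m(|S_m|-1)\ge|M_0|-j$ forces $j\ge1$; and $j\ge2$ with two distinct pure actions would make two distinct pure outcomes carry the same $u^0$-value (the sender's equilibrium payoff is constant across on-path messages), contradicting Assumption (S) once duplicate messages are consolidated. So $j=1$, and then the remaining $|M_0|-1$ messages satisfy $\sum_{|S_m|\ge2}(|S_m|-1)=|M_0|-1=\#\{m:|S_m|\ge2\}$, forcing each such action to be a binary mixture --- the conclusion under (R$\ast$); under (R) alone one keeps only that each component has exactly one pure action. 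The finitely many posteriors inducing mixed actions are the $0$-dimensional fibre above (projected onto the mixed-action coordinates), and since on a tree a strategy confined to finitely many values pins down every weight and every posterior, the acyclic equilibria are finite up to information equivalence. Finally, because every acyclic component now carries a pure action and Assumption (S) forbids two distinct pure actions in one equilibrium, an acyclic equilibrium has at most one component --- there are no forests --- which together with the above yields Theorem \ref{thm:gen}(a)--(d) and Theorem \ref{thm:genmu}.

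\textbf{Main obstacle.} The crux is the independence/submersion claim used in the dimension count: one must check that the indifference conditions attached to distinct messages of $T$ can be moved independently by perturbing the strategy together with $\lambda$. The key structural observation enabling this is that a message carrying a mixed action is never a leaf of $T$ --- a leaf message has a pure posterior, to which (generically) the receiver has a unique pure best response --- so its posterior genuinely varies as the strategy varies; combining this with (R)/(R$\ast$), which controls the local geometry of the indifference loci, should close the estimate. One must also dispatch the boundary strata --- strategies assigning zero weight to an edge of $T$, and parameters at which extra best responses appear --- but these describe strictly smaller trees or smaller supports, already handled by the finite enumeration, so a downward induction on $\bigl(|M_0|,\sum_m|S_m|\bigr)$ completes the proof.
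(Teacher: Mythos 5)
Your proposal follows the same underlying strategy as the paper's proof: enumerate the finitely many tree types, count the dimension of the strategy space supported on a tree ($|M_0|-1$) against the codimension of the receiver-indifference conditions ($\sum_m(|S_m|-1)$, with (R)/(R$\ast$) guaranteeing these codimensions are non-degenerate), and conclude by transversality that the excess-indifference case is parametrically negligible while the balanced case yields finite solution sets. The combinatorial endgame --- at least one pure action per component, at most one overall by Assumption (S), binary mixtures for the rest, hence no forests --- is also the paper's.

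The difference is organizational, and it matters at exactly the point you flag as the ``main obstacle.'' You set up a single global incidence variety and therefore need the full indifference map to be a submersion jointly in all strategy and parameter coordinates. Your sketch of why this holds (mixed-action messages are not leaves, so their posteriors ``genuinely vary'') shows each condition can be moved individually, but it does not establish that the $\sum_m(|S_m|-1)$ conditions can be moved \emph{independently} of one another --- two messages sharing a parent state, for instance, have posteriors that respond to overlapping strategy coordinates. The paper avoids having to verify this joint rank condition by rooting the tree at a message and inducting upward: at each node $\pi_j$ it freezes the strategies of all descendants (a finite set of possibilities by the inductive hypothesis), so that the posterior $\nu_j$ sweeps out a one-dimensional family $\mathcal{P}_j(\mu_{j^\downarrow})$ parametrized only by the weight the parent state places on $\pi_j$; parametric transversality in the conditional prior $\mu_{j^\downarrow}$ (or in $u_R$, for the alternate statement) then gives finitely many intersections with each codimension-one set in $\mathcal{I}_2$ and empty intersection with the codimension-two sets in $\mathcal{I}_{3+}$, one node at a time, and a finite union over the finitely many descendant configurations preserves genericity. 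So your architecture is sound and would prove the lemma once the submersion claim is verified, but as written that claim is asserted rather than proved; the downward induction you mention at the end is the right instinct, and making it the primary structure of the argument, as the paper does, is what closes the gap.
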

Observe that since pure actions are transparently ranked, there is no equilibrium involving multiple pure actions. Thus the first property implies (a)-(c).
\end{proof}
To prove this lemma, we first establish some notation: 

For a rooted tree $G_0$, for a node $j$, we denote the set of $j$'s children by $j^\downarrow$ and its parent by $j^\uparrow$. This will often be pushed to subscripts --- eg. a message $\pi_j$ will have children from the set of states $\theta_{j^\downarrow}$. The set of $j$'s neighbours will be denoted $\mathcal{N}(j):=\{j^\uparrow\}\cup j^\downarrow$.

We also require notation for the sets of beliefs where the receiver is indifferent between multiple actions. For a set of pure actions $A\subseteq\mathcal{A}$, define the $A$ best-response set 
\begin{align*}
    \mathcal{I}(A):=&\left\{\nu\in \Delta \Theta; 
    A\subseteq \argmax_{a\in \mathcal{A}}\E[s\sim\nu]{u_R(a,s)} 
    \right\}
\end{align*}
be the set of beliefs such that the receiver is willing to randomize over $A$ (ie. choosing an action $p\in\inte{\Delta A}$).

We let 
\begin{align*}
    \mathcal{I}_2:=&\{\mathcal{I}(A);|A|=2\}&
    \mathcal{I}_{3+}:=&\{\mathcal{I}(A);|A|>2\}&
\end{align*}
refer to collections of these sets where the receiver is willing to randomize between 2 actions, and 3 or more actions respectively. Assumption (R) implies every set in $\mathcal{I}_2$ has codimension 1, while (R$\ast$) implies every set in $\mathcal{I}_{3+}$ has codimension at least 2.
\begin{proof}[Proof of Lemma \ref{lem:forest}]

We first prove the statement of the theorem for generic priors, before showing how this proof can be translated to show the alternative statement.

WLOG we limit ourselves to priors $\mu\in\inte{\Delta \Theta}$. Suppose there is an acyclic connected component $G_0\subseteq G$ of the communication graph that does not include a pure action. We turn $G_0$ into a rooted tree with arbitrary message root $\pi_0$. We make an inductive argument up the tree to the root:

\noindent\textbf{1. Leaf Case:}
We begin with the vertices furthest from the root. Since there are no pure actions in the graph, all the leaves are states (otherwise they are actions corresponding to a known state --- thus pure under Assumption (R)). The deepest vertex thus is a state connected to a single message, say $\pi_j$, which it recommends w.p. 1 in equilibrium. Since this is a deepest action, all its other children $\theta_{k}\in \Theta_{j^\downarrow}$ also recommend it w.p. 1. 

The only degree of freedom that $G_0$ admits in the belief $\nu_j$ is the probability that it is recommended from its parent state $\theta_{j^\uparrow}$. Thus for a fixed prior, the set of possible $\nu_j$ corresponding to this graph has dimension 1. 

Fixing only the mass $\mu_{j^\uparrow}\in]0,1[$ of the parent state, the set of possible posteriors $\nu_j$ under $G$ is given by
$$\mathcal{P}_{j}:=\Big\{\lambda \delta_{\theta_{j^\uparrow}}+(1-\lambda)\mu_{j^\downarrow};\,\lambda\in \big]0,1\big[,\mu_{j^\downarrow}\in \inte{\Delta \Theta_{j^\downarrow}}\Big\}\subseteq \Delta \Theta_{\mathcal{N}(j)}.$$ 
Note that this set is an open subset of $\Delta \Theta_{\mathcal{N}(j)}$, hence.has codimension 0. Thus it is trivially transversal to any best response set. 

Applying the transversality theorem, for any fixed $\mu_{j^\downarrow}$, the set
$$
\mathcal{P}_{j}(\mu_{j^\downarrow}):=\Big\{\lambda \delta_{\theta_{j^\uparrow}}+(1-\lambda)\mu_{j^\downarrow};\,\lambda\in \big]0,1\big[\Big\}
$$
is transversal to any $\mathcal{I}(A)$ for a.a. $\mu_{j^\downarrow}\in \Delta \Theta_{j^\downarrow}$. Thus, within this generic set, $\mathcal{P}_{j}(\mu_{j^\downarrow})$ intersects $I\in\mathcal{I}_2$ at finitely many points, and, under assumption (R$\ast$), never intersects the codimension 2 set $I\in\mathcal{I}_{3+}$.

Fixing such a $\mu_{j^\downarrow}$, there are finite probabilities that the sender can send the message $\pi_j$ from state $\theta_{j^\uparrow}$ and induce a mixed action. 

\noindent\textbf{2. Induction step:}
We now let $\pi_j$ be a message further up the tree $G_0$, and $\theta_{j^\uparrow}$ be its parent state. 
We will take as fixed the probability $\mathscr{M}_{j^\downarrow}$ that their child states $\theta_{k}\in \Theta_{j^\downarrow}$ recommend their own child actions $\pi_{k^\downarrow}$ (and thus the posteriors associated with these actions). Our inductive hypothesis will be that, generically there is a finite (possibly empty) set $\hat{m}_{j^\downarrow}$ of $\mathscr{M}_{j^\downarrow}$ that allow the receiver to randomize between actions.

We then claim that for generic prior weights $\mu_{j^\downarrow}\in \Delta \Theta_{j^\downarrow}$ on $\pi_j$'s children, there are finite probabilities $\lambda_j$ that its parent state can send the message $\pi_j$.

Indeed, fixing the posteriors generated in descendent actions, we find that the range of posteriors that can be generated by the message $\pi_j$ is given by
$$
\mathcal{P}_{j}:=\Big\{\lambda \delta_{\theta_{j^\uparrow}}+(1-\lambda)(1-\mathscr{M}_{j^\downarrow})\mu_{j^\downarrow};\, \lambda\in \big]0,1\big[,\mu_{j^\downarrow}\in \inte{\Delta \Theta_{j^\downarrow}}\Big\}\subseteq \Delta \Theta_{\mathcal{N}(j)}.
$$
Note that here, we are writing the prior as a convex combination of $\delta_{\theta_{j^\uparrow}}$ and \emph{all} its descendent actions, which allows us to keep the sender strategy in descendent states, in particular $\mathscr{M}_{j^\downarrow}$, fixed.  This set has codimension 0. 

Applying the transversality theorem once more, for any fixed $\mu_{j^\uparrow}$, the set of feasible posteriors
$$
\mathcal{P}_{j}(\mu_{j^\downarrow}):=\Big\{\lambda \delta_{\theta_{j^\uparrow}}+(1-\lambda)(1-\mathscr{M}_{j^\downarrow})\mu_{j^\downarrow};\, \lambda\in \big]0,1\big[\Big\}\subseteq \Delta \Theta_{\mathcal{N}(j)}
$$
is thus transversal to any $\mathcal{I}(A)$ for a.a. $\mu_{j^\downarrow}\in \Delta \Theta_{j^\downarrow}$. Again, as a result, $\mathcal{P}_{j}(\mu_{j^\downarrow})$ intersects $I\in\mathcal{I}_2$ at finitely many points, and (under assumption (R$\ast$)) never intersects any $I\in\mathcal{I}_{3+}$ for such $\mu_{j^\downarrow}$.

Letting $\mathscr{M}_{j^\downarrow}$ vary over the finite set $\hat{m}_{j^\downarrow}$, we get the finite intersection property for \emph{any} sender strategy corresponding to $G_0$ whenever $\mu_{j^\downarrow}$ lies within a finite intersection of generic sets.


\noindent\textbf{3. Conclusion:} This shows that generically (non-root) actions have support of at most two. But if we consider the rooted message, the above reasoning shows that there are finite posteriors that can be generated when all its descendants are constrained to generate mixed actions. Applying the transversality theorem (as above, but with no parent distribution $\mu_{j^\uparrow}$) we find that for a.a. $\mu_{j^\downarrow}$ the posteriors do not intersect any set $I\in\mathcal{I}_2$ of posteriors inducing a mixed action. Thus one action must be pure. 

Since there are finite acyclic communication structures, the set of priors that admit a forest candidate equilibrium is the finite union of measure-0 sets, and hence is measure-0.

\textbf{Translation to proof of alternate statement: }The alternate statement is that the above results also hold for fixed $\mu$ and generic $u_R$. We make Assumption (R$\ast$) (recalling that it holds for generic receiver preferences), and note that the above theorem can be reproduced by holding $\mu_{j^\downarrow}$ fixed at each step, and varying $I(A)\in\mathcal{I}_2$ through the receivers utility.

    Observe that the set $\{I(A;u_R)\}$ can be perturbed in any direction by adjusting $u_R$ within this set of utilities, hence it is trivially transversal to $\mathcal{P}_j(\mu_+)$. Thus, for generic $u_R$, $I(A; u_R)$ is transversal to $\mathcal{P}_j(\mu_+)$. Since the latter is one dimensional (parametrized solely by the weight its parent state puts on it), and the former has codimension one (by Assumption (R$\ast$)), this means that the intersection consists of a finite set of weights the parent state can send the message with that will induce randomization over two actions (and no weights that will induce higher support randomizations).
\end{proof}

\subsection*{Fragility Proofs}
\begin{proof}[Proof of Proposition \ref{prop:Frag}]
\textbf{Proof of (1): } Suppose there is a persuasive equilibrium $\sigma$. Let $\mathscr{P}(M)\subseteq \Delta \mathcal{A}$ be the set of mixed actions that are chosen after some message. Since $\sigma$ is persuasive, there is at least one pure action $a^\ast$ that is not best response to the prior, yet occurs with positive probability --- ie. if $M^\ast:=\mathscr{P}^{-1}\{p;a^\ast\in \supp{p}\}$ is the set of messages leading to $a^\ast$ being played with positive probability, then the sender sends messages in $M^\ast$ with positive probability, ie. $\P{\mathscr{M}(\theta,\omega)\in M^\ast}>0$.

Let $\Pi^\ast:=\mathscr{P}(M^\ast)\times M^\ast$ be the set of equilibrium paths leading to action $a^\ast$, and $\Pi^c$ be the remaining equilibrium paths. We obtain the following bound:
\begin{equation*}
    \P{\max_{\pi\in \Pi^\ast} u_S(\pi)> \max_{\pi'\in \Pi^c} u_S(\pi')}\le \Pcond{\mathscr{M}(\theta,\omega)\in M^\ast}{\theta}\le \P{\max_{\pi\in \Pi^\ast} u_S(\pi)\ge \max_{\pi'\in \Pi^c} u_S(\pi')}.
\end{equation*}
However
\begin{equation*}
    \P{\max_{\pi\in \Pi^\ast} u_S(\pi)=\max_{\pi'\in \Pi^c} u_S(\pi')}=\E{\Pcond{\max_{\pi\in \Pi^\ast} u_S(\pi)=\max_{\pi'\in \Pi^c} u_S(\pi')}{\omega_1,(u_S(a))_{a\neq a^\ast}}}.
\end{equation*}
Since $u_S(\pi')$ is determined by $(\omega_1,(u_S(a))_{a\neq a^\ast})$, the conditional probability amounts to the probability that $\max_{\pi\in \Pi^\ast} u_S(\pi)$ is equal to a constant. But $\max_{\pi\in \Pi^\ast} u_S(\pi)$ is strictly increasing in $U(a^\ast)$, so there is a unique value of $U(a^\ast)$ that will yield this equality. Since the utility admits a density, the conditional probability of $U(a^\ast)$ being precisely this value is $0$, allowing us to conclude 
\begin{equation*}
    \Pcond{\mathscr{M}(\theta,\omega)\in M^\ast}{\theta}=\P{\max_{\pi\in \Pi^\ast} u_S(\pi)> \max_{\pi'\in \Pi^c} u_S(\pi')}
\end{equation*}
is state-independent. As $\P{\mathscr{M}(\theta,\omega)\in M^\ast}>0$, this means the average posterior that induces a randomization including $a^\ast$ is equal to the prior $\mu$. Thus the prior is a convex combination of beliefs to which $a^\ast$ is a best response, indicating $a^\ast$ is a best response to the prior, contradicting our premise.

\textbf{Proof of (2): } Meagreness is a corollary of (1), under the following two observations:
\begin{enumerate}
    \item The set of preferences described by eq. \ref{eq:frag} is dense.
    \item The set of preferences that permit persuasive equilibria is the countable union of closed sets.
\end{enumerate}
The first property is trivial. To observe the second, consider the set of persuasive equilibria where an action $a^\ast$ that is not a best response to the prior is chosen w.p. at least $\frac{1}{n}$. This is a closed set of equilibria, by upper-hemicontinuity the corresponding set of preferences is also closed. We know that the complement is then open, and by the first property includes a dense set, thus these equilibria only occur over a nowhere dense set of preferences.

By taking the union of these preferences over $n$, we find that persuasion is only possible within a meagre set (ie. a countable union of nowhere dense sets).
\end{proof}

Note that the topological properties invoked in the last proof are desirable for any topology on random preferences. Thus persuasion is generically impossible for any `reasonable' topology, not just the weak-$\ast$ topology.
\subsection*{$\mathscr{O}$-stability Proofs}
\begin{proof}[Proof of Lemma \ref{lem:NIC}]
Fix $B_0:=\{\pi_0\}$. Given actions $\pi\in B:=\bigtimes B_j$, the sender will send the message $\pi_j$ with probability $\hat{m}_j(\pi)$ satisfying
\begin{equation*}
    \P[\omega]{u_S(\pi_j)=\max_{j'}u_S(\pi_{j'})}\ge \hat{m}_j(\pi)\ge \P[\omega]{u_S(\pi_j)>\max_{j\neq j'}u_S(\pi_j)}.
\end{equation*}
When idiosyncrasy admits a density, $\hat{m}$ is a singleton. More generally $\hat{m}(\pi)$ is the intersection of the `box' in $[0,1]^\mathcal{N}$ whose sides described by the above interval, with the simplex $\Delta\mathcal{N}$. Let 
\begin{align*}
    \underline{\pi}_j:=&\argmin\{\P[\omega]{u_S(\pi')>u_S(\pi_0)}\}&
    \overline{\pi}_j:=&\argmax\{\P[\omega]{u_S(\pi')>u_S(\pi_0)}\}
\end{align*}
By eq. \ref{eq:NIC.bound}, the action $\underline{\pi}_j$ makes it a best response to never send its message: $0\in\hat{m}_j(\underline{\pi}_j,\pi_{-j})$ for any $\pi_{-j}\in B_{-j}:=\bigtimes_{j'\neq j}B_{j'}$; while a receiver having the strategy corresponding to $\overline{\pi}_j$ ensures that $\pi_0$ will never be sent in equilibrium, ie. $\{0\}=\hat{m}_0(\overline{\pi}_j,\pi_{-j})$. Note that ordering $B_j$ so that $\overline{\pi}_j\succ \underline{\pi}_j$ means the senders' preference will be $\succ$-monotone (as a consequence of vNM), implying $\hat{m}_j(\cdot,\pi_{-j})$ is $\succ$-increasing.

Define the correspondence $g^\ast_j:B_{-j}\Rightarrow B_j$
\begin{equation*}
    g^\ast_j(\pi'_{-j}):=\begin{cases}
    \{\overline{\pi}_j\}& \hat{m}_j (\overline{\pi}_j;\pi'_{-j})<m_j^\ast\\
    \{\pi'_j\in B_j;\hat{m}_j (\pi'_j;\pi'_{-j})=m_j^\ast\}&\text{otherwise}
    \end{cases}
\end{equation*}
Note this is never empty-valued --- since $\hat{m}_j(\cdot,\pi'_{-j})$ is increasing and upper hemicontinuous it satisfies an intermediate value theorem and eq. \ref{eq:NIC.bound} implies $\hat{m}_j (\underline{\pi}_j;\pi'_{-j})< m_j^\ast$ for any $\pi'_{-j}$. This further shows $g^\ast_j$ is upper-hemicontinuous and interval-valued.


Define the self-map $g:B\Rightarrow B$
\begin{equation*}
    g(\pi')=\bigtimes_{j}g^\ast_j(\pi'_{-j})
\end{equation*}
By Kakutani fixed point theorem, there exists a fixed point $\pi^\ast$ of this map. If this fixed point does not solve $\hat{m}_j (\pi^\ast)=m_j^\ast$, then $\pi_{j}^\ast=\overline{\pi}_j$ for at least one $j$. But then $\hat{m}_0(\pi^\ast)<m_0^\ast$, as earlier observed. As a result there must be a message $j'$ (not necessarily $j$) sent with probability $\hat{m}_{j'} (g^\ast_{j'}(\pi^\ast_{-j'});\pi^\ast_{-j'})>m_{j'}^\ast $, contradicting the definition of $g_{j'}^\ast$.
\end{proof}
\begin{proof}[Proof of Theorem \ref{thm:mainB}]
We use the same rooted tree as in the proof of Theorem \ref{thm:main}, with the same notation.

\textbf{1. The Neighbour Problem:} To ensure each state on the tree randomizes over its neighbouring messages to the desired degree, first constrain each state to neighbouring messages. Suppose $B_j\ni \pi_j$ are interval neighbourhoods of $\pi_j$ for all non-pure $\pi_j$ (let $B_0:=\{\pi_0\}$ for the pure action). We assume $\epsilon$ is sufficiently small that the sender's ordinal preference over pure actions in independent of idiosyncrasy. Thus we can write each $B_j=:[\underline{\pi}_j,\overline{\pi}_j]$ where actions on the right side of the interval are unanimously prefered to actions to their left.

We seek $\epsilon>0$ and interval neighbourhoods $B_j$ such that
\begin{align}
    u_S(\underline{\pi}_{k^\downarrow}\rvert \theta_k)<u_S(\pi_{k^\uparrow}\rvert \theta_k)<u_S(\overline\pi_{k^\downarrow}\rvert \theta_k)\qquad\text{w.p. 1, for all $\pi_{k^\uparrow}\in B_{k^\uparrow}$, $k$}.
\end{align}
This allows us to apply Lemma \ref{lem:NIC} to find mixed actions $\hat{\pi}\in B:=\bigtimes B_j$ such that each state sends its neighbouring messages with the desired probability.

We take a fixed constraint that our equilibrium actions must lie in neighbourhood $N_j$ of $ \pi_j$, and specify the constraints this imposes on $\epsilon$ and grandparent actions $\pi_{j'}$. We begin at the bottom of the tree and work our way up to the root.

For mixed actions $\pi_j$ without grandchildren, define $B_j:=N_j$. For mixed actions with grandchildren, define
$$
    B_{k^\uparrow}:=\bigcap_{j'\in k^\downarrow}\{\pi'\in N_{k^\uparrow}; u_S(\underline\pi_{j'}\rvert \theta_k) \le u_S(\pi'\rvert \theta_k)\le u_S(\overline\pi_{j'}\rvert \theta_k)\text{ w.p. 1}\}.
$$
This may be empty for some $\epsilon$, but observe that $B_{k^\uparrow}\rightarrow O_{k^\uparrow}$ where
$$
 O_{k^\uparrow}:=\left\{\pi'\in N_{k^\uparrow};u^0(\pi')\in\Big]\max_{j'\in k^\downarrow}\min u^0(B_{j'}),
    \min_{j'\in k^\downarrow}\max u^0(B_{j'})\Big[\right\}\ni \pi_{k^\uparrow}.
$$
whenever $B_{j'}\ni \pi_{j'}$. Thus when $\epsilon$ is sufficiently small, $B_{k^\uparrow}$ is a neighbourhood of $\pi_{k^\uparrow}$. We can then work up the tree until be simultaneously have neighbourhoods around every mixed action, and can apply Lemma \ref{lem:NIC} to get neighbouring incentive compatibility for every state.

\textbf{2.1 Harsanyi stability} To get the limit (1) in the second part of the theorem, bound $\epsilon<\overline{\epsilon}$ so that the proof of Theorem \ref{thm:main} can be applied to ensure states have strict preferences for their own limit equilbirium actions at $v$. Then repeat the above step with $V\drightarrow v$ instead of $\epsilon\rightarrow 0$, and $\pi$ representing the equilibrium actions of $u^0+\epsilon v$ rather than $u^0$.



\medskip

\noindent\textbf{3. Non-neighbouring deviations} Assume that the neighbourhoods $N_j$ collectively solve eq. \ref{eq:GCM}. Once we have actions $\hat{\pi}$ that induce the desired degree of randomization when constrained to on-path messages, we need to check that agents are not tempted by off-path messages. To observe a sender with realized state-idiosyncrasy $(\theta_1,\omega)$ will have strict preferences for neighbouring messages, consider a sequence $(\theta_1,\dots,\pi_N)$. Applying intermediate value theorem, we find a modification $v^\ast\in \text{co}(\supps{u_S})\subseteq \GCM$ that is indifferent over each message is this sequence. Simply apply the equivalent part of the proof of Theorem \ref{thm:main} to the modification
$$
\hat{v}(\pi\rvert \theta):=\begin{cases}
    v^\ast(\pi\rvert \theta)&\theta\neq \theta_1\\
    V(\pi\rvert \theta_1,\omega)& \theta=\theta_1.
\end{cases}
$$
to show that the sender with state/idiosyncrasy $(\theta_1,\omega)$ will not want to deviate to non-neighbouring actions.
\end{proof}

\subsection*{Ordinal Transparence Proofs}
\begin{proof}[Proof of Proposition \ref{prop:translucent}]
    Suppose $\sigma$ is an equilibrium of a model with ordinally transparent preference $u_S$, and let $u^0$ be a transparent representation of $u_S$.

    Note that $\sigma$ is informationally equivalent to a candidate equilibrium $\sigma'$ of the transparent preference $u^0$ if the messages it sends in equilibrium $\Pi\subseteq \Delta\mathcal{A}\times M$ satisfy
\begin{align*}
        \bigcap_{\pi\in\Pi}&[\min u^0(\supp{\pi}),\max u^0(\supp{\pi})]\neq \emptyset,\\
        \min_{\pi\in\Pi}&\max u^0(\supp{\pi})\ge \max_{m\in M}\min_{\nu\in\Delta \Theta}\{u^\ast(\nu,m)\}.
\end{align*}
where the second inequality comes from increasing all the equilibrium actions (while maintaining indifference) until one of them is pure, and checking this satisfies eq. \ref{eq:offpath}.

If the first inequality fails then $\sigma$ must involve actions $\pi,\pi'$ such that $\max u^0(\supp{\pi})< \min u^0(\supp{\pi'})$. But then ordinal transparency implies that $u_S(\pi)<u_S(\pi')$ w.p. 1 and hence $\pi\not\in\Pi$.

The second inequality holds since whichever senders $(s,\omega)$ send the message that minimizes the left side must prefer it (and therefore the preferred pure action in its support) to the pure action on the righthand side.
\end{proof}
\begin{proof}[Proof of Proposition \ref{prop:sandwich}]
    We proceed by induction on $N$. When $N=1$, graph-cyclic monotonicity is equivalent to strong graph-monotonocity. 
    
    Now suppose SGM and GM hold on paths of length $2N$, we want to show that
    \begin{equation*}
            \sum_{i=0}^{N} v_i(\pi_{i})-v_i(\pi_{i-1})>0
    \end{equation*}
    for paths $(\theta_0,\pi_0,\dots,\theta_N,\pi_N=:\pi_{-1})$ of length $2N+2$. Our induction hypothesis applies to show that GM holds for the path $(\theta_1,\pi_1,\dots,\theta_N,\pi_N)$, ie.
    \begin{equation*}
            v_1(\pi_1)-v_1(\pi_{N})+\sum_{i=2}^{N} v_i(\pi_{i})-v_i(\pi_{i-1})>0.
    \end{equation*}
    Adding this to the SGM inequality
    \begin{equation*}
        v_0(\pi_0)-v_0(\pi_N)-(v_1(\pi_0)-v_1(\pi_N))\ge 0
    \end{equation*}
    immediately yields our conclusion.
\end{proof}
\subsection*{Proofs for Applications}
\begin{proof}[Proof of Proposition \ref{prop:emp}]
\noindent\textbf{Antipathy case: }Consider a persuasive candidate equilibrium $\sigma$. If $\Econd{u_R(p_j\rvert \theta)}{m_i}=\max_{a\in\mathcal{A}}\Econd{u_R(a\rvert \theta)}{m_i}$ for all messages $m_i$ and equilibrium actions $p_j$ then the equilibrium is not persuasive, as any equilibrium action is optimal for any message, hence under the prior.

Thus let $p_0$ be the equilibrium action after some message $m_0$, and $p_1$ the action after a distinct message $m_1$ such that
\begin{align*}
    \Econd{u_R(p_0\rvert \theta)-u_R(p_1\rvert \theta)}{m_0}>& 0\ge \Econd{u_R(p_0\rvert \theta)-u_R(p_1\rvert \theta)}{m_1}.
\end{align*}
As a result, for at least one state $\theta_0\in \mathscr{M}^{-1}(m_0)$ and $\theta_1\in \mathscr{M}^{-1}(m_1)$, we have
\begin{align*}
    u_R(p_0\rvert \theta_0)-u_R(p_1\rvert \theta_0)>&0&
    u_R(p_1\rvert \theta_1)-u_R(p_0\rvert \theta_1)\ge&0.
\end{align*}
This means that the sender preference differentials
\begin{align*}
    u_S(p_0\rvert \theta_1)-u_S(p_0\rvert \theta_0)=&\epsilon(u_R(p_0\rvert \theta_0)-u_R(p_0\rvert \theta_1))\\
    u_S(p_1\rvert \theta_0)-u_S(p_1\rvert \theta_1)=&\epsilon(u_R(p_1\rvert \theta_1)-u_R(p_1\rvert \theta_0))
\end{align*}
sum to a strictly positive number. But sender incentive compatibility requires that both of these differentials be weakly negative.

\noindent\textbf{Empathy case: } We will show that for any path $(\theta_0,\pi_0,\dots,\theta_N,\pi_N)$ where $(\theta_0,\dots,\theta_N)$ is $\succ$-monotone (WLOG increasing), we have
\begin{align*}
     u_R(\alpha_0\rvert \theta_0)-u_R(\alpha_N\rvert \theta_0)> u_R(\alpha_0\rvert \theta_1)-u_R(\alpha_N\rvert \theta_1)
\end{align*}
for all $\alpha_i\in\supp{\pi_i}$. Single-crossing implies that for a fixed $\alpha_0,\alpha_N$ pair either
\begin{align*}
    u_R(\alpha_0\rvert \theta)-u_R(\alpha_N\rvert \theta)> u_R(\alpha_0\rvert \theta')-u_R(\alpha_N\rvert \theta') &\quad\text{for all }\theta'\succ \theta,\text{ or}\\
    u_R(\alpha_0\rvert \theta)-u_R(\alpha_N\rvert \theta)< u_R(\alpha_0\rvert \theta')-u_R(\alpha_N\rvert \theta') &\quad\text{for all }\theta'\succ \theta.
\end{align*}
Note that our equilibrium requires that $\alpha_N$ is a best response to a belief supported on states $s'\succeq \theta_N$, and that $\alpha_0$ is a best response to a belief supported on states $s\preceq \theta_N$. This is incompatible with the second possibility (unless both beliefs are the same degenerate distribution $\delta_{\theta_N}$), so the first must hold, validating support monotonicity.
\end{proof}

\begin{proof}[Proof of Proposition \ref{prop:antipath}]
In Appendix \ref{app:weakstab}, we observe that in three-action models graph-cyclic monotonicity is necessary to preserving a candidate equilibrium (consequently no cyclic equilibrium exists, and by Theorem \ref{thm:gen} for generic prior it is sufficient to consider connected candidate equilibria).

    Acyclic candidate equilibria that attain a non-zero payoff will involve two messages: one recommending mixed action $p_B$, purchasing good $B$ and not buying anything with equal probability; and one recommending pure action $A$\footnote{A measure 0 of priors will also allow an equilibrium where mixed action $p_A$ is recommended. These equilibria are fragile to perturbations to the \emph{receiver}'s utility. However, they are still acyclic and can be analyzed from the perspective of the sender modifications.}.

We can see that a receiver only chooses action $A$ if $a$ is in the support of their posterior, likewise for $B$ and $b$. Thus these two action-state pairs must be connected on the communication graph. 

With an eye towards graph-cyclic monotonicity, we can see that 
\begin{align*}
    v_E(p_B\rvert b)-v_E(A\rvert b)=\tfrac{7}{8}&>
    v_E(p_B\rvert a)-v_E(A\rvert a)=-\tfrac{3}{8}>v_E(p_B\rvert n)-v_E(A\rvert n)=-\tfrac{5}{8}.
\end{align*}
Thus if state $n$ sends both messages, state $a$ can only send the message recommending action $p_B$, yielding a contradiction as the receiver would never choose $A$ then. If state $b$ recommends both actions, then graph-cyclic monotonicity implies that all other states only recommend $A$. But then the message recommending $p_B$ induces a pure belief $\delta_b$ to which $p_B$ is not a best response.

Thus in any equilibrium $\sigma$ where $u_R\in \GCM$, $a$ recommends both actions, while $b$ only recommends $p_B$ and $n$ only recommends $A$. This gives us the unique connected communication structure $G$ such that empathy satisfies graph-cyclic monotonicity. However, under cheap talk, there are no candidate equilibrium with this communication structure. 

The reader may verify there are acyclic candidate equilibria that are plausible for the prior beliefs in the cross-hatched region, and thus may be preserved by appropriate modifications.
\end{proof}

The following lemma is essential to the proof of Proposition \ref{prop:LA}:
\begin{lem}[N-Shaped Subgraphs]\label{lem:N}
    Consider a lying averse model, and a communication graph containing an \emph{N}-shaped subgraph, in that two states $\theta_1,\theta_2$ both send a message $m_1$ and one of them ($\theta_2$) sends another message $m_2$ in the candidate equilibrium $\sigma$. If $v_\text{LA}\in\GCM$, then either $\psi(m_1)=\theta_1$ or $\psi(m_2)=\theta_2$.
\end{lem}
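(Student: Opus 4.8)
\textbf{Proof strategy for Lemma \ref{lem:N} (N-Shaped Subgraphs).}

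The plan is to extract from the N-shaped configuration two short cycles-with-orientation on $G(\sigma)$ and play the graph-monotonicity inequality \eqref{eq:GCM} against the explicit form of the lying-aversion modification \eqref{eq:LA}. First I would name the data: states $\theta_1,\theta_2$ both connected to $\pi_1$ (the message-action induced by $m_1$), and $\theta_2$ additionally connected to $\pi_2$ (induced by $m_2$). Because $G(\sigma)$ is bipartite with parts $\Theta$ and $\sigma(\Theta)$, the triple $(\theta_1,\pi_1,\theta_2,\pi_2)$ is a path of length $3$, so \eqref{eq:GCM} with $N=2$, $\pi_0:=\pi_2$, applied along this path and along its reversal, gives the two inequalities
\begin{align*}
 \big(v_\text{LA}(\pi_1\rvert\theta_1)-v_\text{LA}(\pi_2\rvert\theta_1)\big)+\big(v_\text{LA}(\pi_2\rvert\theta_2)-v_\text{LA}(\pi_1\rvert\theta_2)\big)&>0,\\
 \big(v_\text{LA}(\pi_2\rvert\theta_1)-v_\text{LA}(\pi_1\rvert\theta_1)\big)+\big(v_\text{LA}(\pi_1\rvert\theta_2)-v_\text{LA}(\pi_2\rvert\theta_2)\big)&>0.
\end{align*}
These sum to zero, contradicting that both are strictly positive — so in fact there is no need for two cycles; the single observation is that the reversed path is \emph{not} itself a path on $G(\sigma)$ unless $\theta_1$ also sends $m_2$, which we are not assuming. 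So the argument must be more careful: the only admissible path here is $(\theta_1,\pi_1,\theta_2,\pi_2)$ (and its literal reverse $(\pi_2,\theta_2,\pi_1,\theta_1)$, which is the same path read backwards, not a new cycle). Hence I should instead combine this length-$3$ path constraint with the length-$1$ constraints $v_\text{LA}(\pi\rvert\theta)-v_\text{LA}(\pi'\rvert\theta)+v_\text{LA}(\pi'\rvert\theta)-v_\text{LA}(\pi\rvert\theta)>0$ read at each edge — i.e. \eqref{eq:GCM} with $N=1$ applied to the edges $\{\theta_2,\pi_1\}$ and $\{\theta_2,\pi_2\}$, which yield $v_\text{LA}(\pi_1\rvert\theta_2)\neq v_\text{LA}(\pi_2\rvert\theta_2)$ is forced in a definite direction only after summing. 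The cleanest route: apply \eqref{eq:GCM} to the path $(\theta_1,\pi_1,\theta_2,\pi_2,\theta_2,\pi_1)$? — no, repeated vertices are not allowed. I will therefore use the genuine two constraints available: the $N=2$ path $(\theta_1,\pi_1,\theta_2,\pi_2)$ and the $N=1$ edge-cycles at $\theta_2$, namely $v_\text{LA}(\pi_1\rvert\theta_2)-v_\text{LA}(\pi_2\rvert\theta_2)+v_\text{LA}(\pi_2\rvert\theta_2)-v_\text{LA}(\pi_1\rvert\theta_2)>0$, which is vacuous; so the only non-trivial inequality is the $N=2$ one, and the lemma reduces to: \eqref{eq:GCM} along $(\theta_1,\pi_1,\theta_2,\pi_2)$ is
\begin{equation*}
 v_\text{LA}(\pi_1\rvert\theta_1)-v_\text{LA}(\pi_2\rvert\theta_1)+v_\text{LA}(\pi_2\rvert\theta_2)-v_\text{LA}(\pi_1\rvert\theta_2)>0.
\end{equation*}

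Now substitute \eqref{eq:LA}, recalling that in cheap talk $v_\text{LA}$ depends only on the message and that $\psi$ identifies messages with states, so $v_\text{LA}(\pi_j\rvert\theta_i)=0$ if $\psi(m_j)=\theta_i$ and $=\ell_{\theta_i}<0$ otherwise. The second key step is a case analysis on whether $\psi(m_1)\in\{\theta_1,\theta_2\}$ or not, and likewise for $\psi(m_2)$. Suppose, for contradiction, that $\psi(m_1)\neq\theta_1$ and $\psi(m_2)\neq\theta_2$. Then $v_\text{LA}(\pi_1\rvert\theta_1)=\ell_{\theta_1}$ and $v_\text{LA}(\pi_2\rvert\theta_2)=\ell_{\theta_2}$, while $v_\text{LA}(\pi_2\rvert\theta_1)\le 0$ and $v_\text{LA}(\pi_1\rvert\theta_2)\le 0$; but more is true — I should use the $N=1$ constraint of \eqref{eq:GCM} on the edge $\{\theta_1,\pi_1\}$, which is $v_\text{LA}(\pi_1\rvert\theta_1)-v_\text{LA}(\pi_1\rvert\theta_1)>0$, vacuous again, confirming the length-$1$ constraints are empty here, so I must instead bound the two $\le 0$ terms by $0$ to get $0>\ell_{\theta_1}-v_\text{LA}(\pi_2\rvert\theta_1)+\ell_{\theta_2}-v_\text{LA}(\pi_1\rvert\theta_2)$; this is consistent with \eqref{eq:GCM} only if one of $v_\text{LA}(\pi_2\rvert\theta_1), v_\text{LA}(\pi_1\rvert\theta_2)$ is sufficiently negative, which cannot help since they are $\le 0$ — so the inequality \eqref{eq:GCM} $>0$ forces $\ell_{\theta_1}+\ell_{\theta_2}>v_\text{LA}(\pi_2\rvert\theta_1)+v_\text{LA}(\pi_1\rvert\theta_2)$, and since the RHS is $\ge \ell_{\theta_1}+\ell_{\theta_2}$ (each term being $0$ or the respective $\ell$, and $0>\ell$), we get $\ell_{\theta_1}+\ell_{\theta_2}>\text{(something}\ge\ell_{\theta_1}+\ell_{\theta_2})$, a contradiction. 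Hence $\psi(m_1)=\theta_1$ or $\psi(m_2)=\theta_2$, as claimed. I would write this out carefully using the fact that for a message sent from a state, the value is either $0$ (truthful) or $\ell<0$ (a lie), and truthfulness at $m_1$ from $\theta_2$ would force $\psi(m_1)=\theta_2\neq\theta_1$, which is compatible only if we're in the second branch — so the bookkeeping must track which of $m_1,m_2$ can be truthful from which state given $\psi$ is a bijection.

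The main obstacle I anticipate is the bookkeeping of the several sub-cases arising from the bijectivity of $\psi$: since $\psi$ assigns each message to exactly one state, at most one of "$\psi(m_1)=\theta_1$", "$\psi(m_1)=\theta_2$" can hold, and similarly for $m_2$, and these interact (e.g. $\psi(m_1)=\theta_2$ and $\psi(m_2)=\theta_2$ is impossible). I would organize this by first observing $v_\text{LA}(\pi_j\rvert\theta_i)\in\{0,\ell_{\theta_i}\}$ with $0$ attained iff $\psi(m_j)=\theta_i$, rewriting \eqref{eq:GCM} along $(\theta_1,\pi_1,\theta_2,\pi_2)$ as $[\ell_{\theta_1}\mathbf{1}(\psi(m_1)\neq\theta_1)-\ell_{\theta_1}\mathbf{1}(\psi(m_2)\neq\theta_1)]+[\ell_{\theta_2}\mathbf{1}(\psi(m_2)\neq\theta_2)-\ell_{\theta_2}\mathbf{1}(\psi(m_1)\neq\theta_2)]>0$, and noting $\ell_{\theta_i}<0$ so each bracket is $\le 0$ with equality characterized by the indicator pattern; the strict positivity of the sum is then impossible unless at least one indicator that would make a bracket negative is forced off, which is exactly the two claimed alternatives. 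A clean way to finish: both brackets are $\le 0$, so the sum is $>0$ only if both brackets are... impossible — hence some indicator assumption fails, giving $\psi(m_1)=\theta_1$ or $\psi(m_2)=\theta_2$. I expect the write-up to be short once the substitution is set up; the care needed is purely in not mislabelling which $\pi_i$ sits at which end of the length-$3$ path.
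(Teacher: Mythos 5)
Your proposal is correct and, once the false starts are stripped away, is essentially the paper's own proof: apply \eqref{eq:GCM} to the single length-three path $(\theta_1,\pi_1,\theta_2,\pi_2)$, substitute the explicit values of $v_\text{LA}$ from \eqref{eq:LA}, and observe that under the negation $\psi(m_1)\neq\theta_1$ and $\psi(m_2)\neq\theta_2$ both bracketed summands are nonpositive, contradicting strict positivity --- the paper packages the identical computation as two explicit cases ($\psi(m_1)=\theta_2$ versus ``only lying'') that each yield the reversed weak inequality $v_\text{LA}(m_1\rvert\theta_2)-v_\text{LA}(m_2\rvert\theta_2)\ge 0\ge v_\text{LA}(m_1\rvert\theta_1)-v_\text{LA}(m_2\rvert\theta_1)$. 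In a final write-up, cut the digressions about reversed paths and vacuous $N=1$ constraints (the definition of graph monotonicity only quantifies over paths with $N\ge 2$), and state the key observation conditionally: a bracket is strictly positive precisely when the corresponding truthfulness condition holds, which is what delivers the disjunction.
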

The N-shaped subgraph is illustrated in Figure \ref{fig:N}. Notationally, we will say such an `N' is \emph{spanned by} its endpoints --- in this case $\theta_1$-$m_2$. The content of this lemma is intuitive: for the modification to be consistent with this `N', the sender in state $\theta_1$ must be biased towards the message they more than the sender in state $\theta_2$. The only way this can happen is if they are telling the truth ($\psi(m_1)=\theta_1$) or the other message is the truth for the sender in state $\theta_2$ ($\psi(m_2)=\theta_2$).  
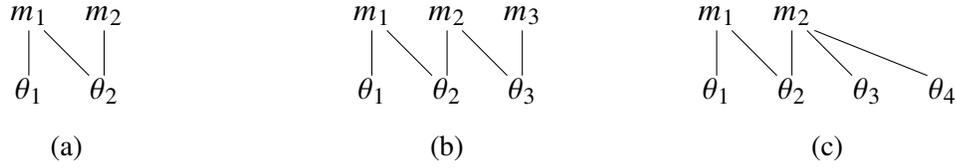
\begin{figure}
    \centering
    \begin{subfigure}{0.3\textwidth}\centering
    \begin{tikzpicture}
    \node at (1,0) {$\theta_2$};
    \node at (0,0) {$\theta_1$};
    \node at (0,1) {$m_1$};
    \node at (1,1) {$m_2$};
    \draw (0,0.2) -- (0,0.8);
    \draw (1,0.2) -- (1,0.8);
    \draw (0.2,0.8) -- (0.8,0.2);
    \end{tikzpicture}
    \caption{}\label{fig:N}
    \end{subfigure}
    \begin{subfigure}{0.3\textwidth}\centering
    \begin{tikzpicture}
    \node at (2,0) {$\theta_3$};
    \node at (1,0) {$\theta_2$};
    \node at (0,0) {$\theta_1$};
    \node at (0,1) {$m_1$};
    \node at (1,1) {$m_2$};
    \node at (2,1) {$m_3$};
    \draw (0,0.2) -- (0,0.8);
    \draw (1,0.2) -- (1,0.8);
    \draw (0.2,0.8) -- (0.8,0.2);
    \draw (2,0.2) -- (2,0.8);
    \draw (1.2,0.8) -- (1.8,0.2);
    \end{tikzpicture}
    \caption{}\label{fig:imp1}
    \end{subfigure}
    \begin{subfigure}{0.3\textwidth}\centering
    \begin{tikzpicture}
    \node at (3,0) {$\theta_4$};
    \node at (2,0) {$\theta_3$};
    \node at (1,0) {$\theta_2$};
    \node at (0,0) {$\theta_1$};
    \node at (0,1) {$m_1$};
    \node at (1,1) {$m_2$};
    \draw (0,0.2) -- (0,0.8);
    \draw (1,0.2) -- (1,0.8);
    \draw (0.2,0.8) -- (0.8,0.2);
    \draw (2.8,0.2) -- (1.3,0.8);
    \draw (1.2,0.8) -- (1.8,0.2);
    \end{tikzpicture}
    \caption{}\label{fig:imp2}
    \end{subfigure}
    \caption{In (a), the communication sub-graph referred to in Lemma \ref{lem:N}. In (b), (c) are the two sub-graphs the lemma rules out.}
    \label{fig:my_label}
\end{figure}
\begin{proof}
There are two other possibilities:
\begin{enumerate}
    \item The equilibrium only involves lying (ie. $\psi(m_1)$ is neither $\theta_1$ nor $\theta_2$ and $\psi(m_2)\neq \theta_2$).
    \item $\psi(m_1)=\theta_2$.
\end{enumerate}
Both are easily verified to fail consistency:
$$
v_\text{LA}(m_1\rvert \theta_2)-v_\text{LA}(m_2\rvert \theta_2)\ge 0\ge v_\text{LA}(m_1\rvert \theta_1)-v_\text{LA}(m_2\rvert \theta_1).\qedhere
$$
\end{proof}
\begin{proof}[Proof of Proposition \ref{prop:LA}]
As observed in Section \ref{sec:bigeps},  $\SLIDS\subseteq \GCM$ in general, thus (a)$\Rightarrow$(b). Moreover, (c)$\Rightarrow$(a) is easily verified by computation. It remains to show that (b)$\Rightarrow$(c), ie. if $v_\text{LA}\in\GCM$ then $G(\sigma)$ is one of the graphs in Figure \ref{fig:LAgraph}.

We show the previous lemma rules out the subgraphs illustrated in Figures \ref{fig:imp1}, \ref{fig:imp2}. The impossibility of Figure \ref{fig:imp1} implies that for connected communication graphs $G$ with at most one revelation message (as implied by our restriction that the receiver has a unique best response to pure beliefs) there must exist a state that sends every message --- ie. the half-square $G^2[M]$ must be a complete graph. Since there is a state that sends every message, this state must neighbour every other state in $G^2[\Theta]$.

The impossibility of Figure \ref{fig:imp2} shows that if three states send a message, then every state sends that message --- knowing that there is at least one state that sends every message, this implies that $G^2[\Theta]$ is either a star or a complete graph.

The impossibility of these figures are obtained by repeated applications of Lemma \ref{lem:N}:

\noindent\textbf{Impossibility of Figure \ref{fig:imp1}:} Applying the lemma to the `N' spanned by $\theta_3$-$m_1$ implies that $\psi(m_2)\neq \theta_2$. 

\noindent With this knowledge, examining the `N' spanned by $\theta_2$-$m_3$, we find $\psi(m_3)=\theta_3$.

\noindent Revisiting $\theta_3$-$m_1$, knowing that $\psi(m_2)\neq \theta_3$, we deduce $\psi(m_1)= \theta_2$.

\noindent But this makes consistency on the `N' spanned by $\theta_1$-$m_2$ impossible.

\noindent\textbf{Impossibility of Figure \ref{fig:imp2}:} for the two `N' shapes spanned by $\theta_3$-$m_1$ and $\theta_4$-$m_1$ to hold, we must have $\psi(m_1)=\theta_2$ (since $\psi(m_2)$ cannot simultaneously be both $\theta_3$ and $\theta_4$). 

\noindent But as before this contradicts the `N' shape spanned by $\theta_1$-$m_2$. 
\end{proof}
\begin{proof}[Proof of Proposition \ref{prop:portfolio}]
\textbf{Single-Disclosure Case:}
    If $u^\ast(\overline{\theta})< u^\ast(\mu_{\{\overline{\theta}\}^c})$ then garbling the posterior $\mu_{\{\overline{\theta}\}^c}$ will decrease the utility until eventually we reach equality, creating a cheap talk candidate equilibrium with single-disclosure communication graph. Reversing this process shows this condition is necessary for a single-disclosure graph.
    
    In the second case, we combine garbling the posterior $\mu_{\{\overline{\theta}\}^c}$, and burning money for whichever posterior secures higher utility between $\overline{\theta}$ and $\mu_{\{\overline{\theta}\}^c}$ to obtain equilibrium. Again undoing this process reveals this condition is necessary.

\textbf{Partial-Revelation Case:}
    It is clear that a partial-revelation communication graph with $\theta^\ast=\theta_0$ cannot attain a utility above the next minimal state $\theta_1$. Indeed every such communication graph induces a posterior in $[\theta_0,\theta_1]$, and by quasiconvexity this cannot secure a utility greater than $u^\ast(\theta_1)$.

    Now consider a partial-revelation graph with $\theta^\ast=\overline{\theta}\neq \theta_0$. Note that such a graph must induce a posterior $\nu_0\in[\theta_0,\theta^\ast]$. Moreover, by varying the sender strategy, we can see that the posterior is in fact in the interval $[\theta_0,\mu_{\{\theta_0,\theta^\ast\}}]$. By the assumption of the proposition and quasiconvexity, this will result in a utility of at most $ u^\ast(\theta)$.
\end{proof}

\begin{proof}[Proof of Proposition \ref{prop:WD}]
Let the equilibrium messages be the complement of the support of the induced belief. Then for a path $(\theta_0,\pi_0,\dots,\theta_N,\pi_N)$, we have
\begin{align*}
    \begin{cases}
        v_\text{WD}(\pi_{i}\rvert \theta_{i})-v_\text{WD}(\pi_{i-1}\rvert \theta_{i})=0& i>0\\
        v_\text{WD}(\pi_{i}\rvert \theta_{i})-v_\text{WD}(\pi_{i-1}\rvert \theta_{i})>0& i=0.
    \end{cases}
\end{align*}
Thus graph-cyclic monotonicity is automatic.

If $\ell_{m,\theta}\equiv \ell_m$, then
\begin{align*}
    v_\text{WD}(\pi_0\rvert \theta_0)= v_\text{WD}(\pi_0\rvert \theta_1)\\
    \begin{cases}
        v_\text{WD}(\pi_N\rvert \theta_0)<v_\text{WD}(\pi_N\rvert \theta_1)& N=1\\
        v_\text{WD}(\pi_N\rvert \theta_0)=v_\text{WD}(\pi_N\rvert \theta_1)& N\ge 2,
    \end{cases}
\end{align*}
implying support monotonicity (since the modification is action independent).

\end{proof}
\section{Supplemental Material}\label{sec:ext}

\subsection{Money Burning Example}\label{app:burn}

We illustrate this technique in constructing equilibria of the money burning model:
\begin{ex}
Consider the message space $M=M_1\times M_2$ where $M_1$ denotes cheap talk messages, and $M_2=\mathbb{N}_0$ denotes a discrete quantity of money burnt that accompanies the message. Suppose the sender's utility takes the form
\begin{equation*}
    u(a,(m_1,m_2))=\hat{u}(a)-\frac{m_2}{N}
\end{equation*}
for some $N$ (representing currency increments). Note that assuming the currency space is discrete maintains Assumption (S).

In practice money burning might involve extravagant spending, wining and dining, or effort in building relationships with potential customers.

When $N$ is large, this model allows us to attain the maximum set of candidate equilibrium:
\begin{prop}\label{prop:burn}
    Suppose $|M_1|\ge |S|$, then for sufficiently large $N$, any Bayes-plausible set of posterior beliefs that induces at most one pure action corresponds to a burning-money equilibrium when the sender has transparent preferences.

    However these equilibria cannot improve the sender's utility beyond the best cheap talk equilibria.
\end{prop}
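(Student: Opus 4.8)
\textbf{Proof plan for Proposition \ref{prop:burn}.}

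The plan is to prove the two assertions separately: (i) every Bayes-plausible family of posteriors inducing at most one pure action is attainable via money-burning for $N$ large, and (ii) the sender's payoff in any money-burning equilibrium is bounded by the best cheap-talk payoff.

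For (i), fix a Bayes-plausible set of posteriors $\mathcal{B}=\{\nu_1,\dots,\nu_k\}$ with $\mu\in\text{co}(\mathcal{B})$, each $\nu_i$ inducing a receiver best response $p_i$ of which at most one, say $p_1$, is pure. The sender's indirect material payoff at posterior $\nu_i$ is $\hat u^\ast(\nu_i)=\hat u(p_i)$ (single-valued once we've fixed the receiver action $p_i$ from the allowed best-response set), and by Assumption (S) all the $\hat u(p_i)$ are distinct for the distinct \emph{pure} actions; more to the point, for the mixed $p_i$ we have freedom to slide the receiver's mixing weight, hence $\hat u^\ast(\nu_i)$ ranges over an interval whose closure contains $\hat u(p_1)$ only if $p_1\in\text{supp-convex-hull}$ — instead I will directly pick the target utility $\overline u := \hat u(p_1)$ (the pure action's payoff, which by Assumption (S) and persuasiveness structure is the relevant extreme) and argue each mixed $p_i$ can be chosen so that $\hat u(p_i)\ge \overline u$. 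Then define the message used at belief $\nu_i$ to be $(m_1^i, m_2^i)$ where the $m_1^i$ are distinct cheap-talk labels (possible since $|M_1|\ge|\Theta|\ge k$) and $m_2^i := \lceil N(\hat u(p_i)-\overline u)\rceil$ is the amount of money burnt, so that $u(p_i,(m_1^i,m_2^i)) = \hat u(p_i) - m_2^i/N \in [\overline u - 1/N,\ \overline u]$. For $N$ large this equalizes sender payoffs across on-path messages up to $O(1/N)$; a standard perturbation argument (adjusting the receiver's mixing weights in the mixed $p_i$, which by single-crossing-type flexibility of the indirect utility correspondence move $\hat u(p_i)$ continuously) then lets us hit exact equality, giving eq. \ref{eq:IC}. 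The off-path constraint eq. \ref{eq:offpath} is satisfied because deviating senders can always burn $m_2=0$ but then the receiver's worst-case belief response is bounded, and since burning money only lowers payoffs, no deviation to an unused $(m_1,m_2)$ with $m_2>0$ helps; one checks $\overline u \ge \max_{m}\min_\nu u^\ast(\nu,m)$ exactly as in the cheap-talk reduction, using that $\overline u=\hat u(p_1)$ is the payoff of a pure action that is a best response to a degenerate belief. This establishes that $(\mathcal{B},\overline u)$ together with these message assignments is a PBE.

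For (ii), let $\sigma$ be any money-burning equilibrium with on-path sender payoff $\overline u_\sigma$, achieved at a family of posteriors $\mathcal{B}_\sigma$ with receiver actions $p_i$ and burnt amounts $m_2^i\ge 0$, so $\overline u_\sigma = \hat u(p_i) - m_2^i/N$ for every on-path $i$. Consider the candidate cheap-talk profile that induces the \emph{same} posteriors $\mathcal{B}_\sigma$ with the same receiver actions $p_i$ but burns nothing; the sender's payoff at posterior $\nu_i$ is then $\hat u(p_i) = \overline u_\sigma + m_2^i/N \ge \overline u_\sigma$. These payoffs are generally not equalized, but one can equalize them \emph{upward} is not possible — instead, observe $\min_i \hat u(p_i) = \overline u_\sigma + \min_i m_2^i/N$, and the sender can garble/mix among the $\nu_i$ to reach any common value in $[\max_i\min\hat u^\ast(\nu_i),\ \min_i\max\hat u^\ast(\nu_i)]$; since every $\hat u(p_i)\ge\overline u_\sigma$, there is a cheap-talk equilibrium (possibly after shrinking the receiver's mixing weights to restore indifference, exactly as in the LR20 construction recalled in Section \ref{sec:candid}) attaining some $\overline u_{\text{ct}}\ge \overline u_\sigma$. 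Hence no money-burning equilibrium beats the best cheap-talk equilibrium.

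The main obstacle is the equalization step in both parts: money burning is a \emph{discrete} instrument (increments of $1/N$), so one cannot exactly equate sender payoffs across messages by burning alone, and one must lean on the continuous degree of freedom in the receiver's mixed best response $p_i$ — available precisely because at most one action is pure, so every other $\nu_i$ sits on a receiver-indifference face where the mixing weight can be tuned. Making this simultaneous tuning precise (choosing burnt amounts first to get within $O(1/N)$, then perturbing mixing weights to close the gap while preserving receiver incentive compatibility and Bayes plausibility) is the technical heart; the off-path bound eq. \ref{eq:offpath} and the converse direction (ii) are comparatively routine once the belief-based reformulation of Section \ref{sec:candid} is invoked.
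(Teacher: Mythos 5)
Your overall architecture is right---burn in increments of $1/N$ to get within $O(1/N)$ of a common payoff, then use the receiver's mixing freedom at the non-pure posteriors to close the gap---but the anchor you choose in part (i) breaks the construction. You set the target $\overline{u}:=\hat{u}(p_1)$, the pure action's \emph{unburnt} payoff, and require every mixed posterior to attain at least $\overline{u}$ so that burning (which only lowers payoffs) can bring it down to $\overline{u}$. There is no reason this is feasible: the pure action may be the sender's favourite. In Example \ref{ex:bi}, take $\mathcal{B}=\{\delta_{\theta_b},\,\nu_0=1/4\}$; the pure posterior induces $B$ with $\hat{u}(B)=1$, while $\hat{u}^\ast(1/4)\subseteq[0,1/2]$, so no choice of mixing weight at $\nu_0$ reaches $1$. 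The paper instead targets $\underline{u}:=\min_{\nu\in\mathcal{B}}\max\hat{u}^\ast(\nu)$, chooses $N$ so that $2/N\le\min_{\nu\in\mathcal{B}_0}\left(\max\hat{u}^\ast(\nu)-\min\hat{u}^\ast(\nu)\right)$, which guarantees the length-$1/N$ interval $[\underline{u}-1/N,\underline{u}]$ sits inside every shifted range $\hat{u}^\ast(\nu)-m_\nu/N$, and---crucially---burns money at the pure-action message as well, placing $\hat{u}(a)-m/N$ in that interval; the mixed messages then match that value exactly. Your version, which burns only at mixed-action messages and anchors at the unburnt pure payoff, needs this repair.

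Part (ii) has a parallel gap. Holding the posteriors $\mathcal{B}_\sigma$ fixed and re-equalizing without burning requires a common value in $\bigcap_i\hat{u}^\ast(\nu_i)$, and that intersection is empty precisely in the cases where money burning buys the sender something---so the step you call ``comparatively routine'' fails exactly where it is needed. The paper's argument is different: take the sender-optimal money-burning equilibrium with value $\overline{u}$, note that some on-path message burns nothing (so $\overline{u}$ is attained there without burning) while every burning message has material payoff strictly above $\overline{u}$; then garble each burning message's posterior \emph{toward the prior}. Because the equilibrium improves on babbling, the indirect utility at the prior is below $\overline{u}$, so the intermediate value theorem yields a posterior along that path at which $\overline{u}$ is attainable with no burning, producing a cheap-talk equilibrium with the same value. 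You mention garbling in passing, but the IVT-toward-the-prior step, and the premise that the equilibrium improves on babbling which makes it work, are the missing content.
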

Note that this applies whenever persuasion is possible (ie. for some belief the receiver would prefer an action that is not a best response to the prior --- for generic $\mu$ this is equivalent to the receiver not possessing a dominant action across all states). In particular, this permits communication even when the sender has monotonic preferences over the receiver's beliefs.
\begin{proof}
    Let $\mathcal{B}$ be a plausible set of posterior beliefs, with $\mathcal{B}_0\subseteq \mathcal{B}$ the set of posterior beliefs inducing a mixed action. We show how to design scheme of money-burning that satisfies sender-incentive compatibility. Let $N$ be such that
    \begin{align*}
        \frac{2}{N}\le& \min_{\nu\in \mathcal{B}_0}\{\max \hat{u}^\ast(\nu)-\min \hat{u}^\ast(\nu)\}
    \end{align*}
    where $\hat{u}^\ast(\nu)$ is the range of utilities attained by a belief $\nu$ assuming no money is burnt. For convenience define $\underline{u}:=\min_{\nu\in\mathcal{B}}\{\max \hat{u}^\ast(\nu)\}$.
    
    If $\nu\in\mathcal{B}_0$ is such that $\min\hat{u}(\nu)\le \underline{u}-\frac{1}{N}$, we allow the belief to be induced without burning money. Otherwise, we require the sender to burn an amount $\frac{m_\nu}{N}$ so that
$$
\underline{u}-\tfrac{2}{N}< \min \hat{u}^\ast(\nu)-\tfrac{m_\nu}{N}\le\underline{u}-\tfrac{1}{N}
$$
Then $[\underline{u}-\frac{1}{N},\underline{u}]\subseteq \hat{u}^\ast(\nu)-\frac{m_\nu}{N}$ for all beliefs $\nu$ inducing a mixed action. For the message that induces the pure action $a$, require burning an amount of money $\frac{m}{N}$ so that $\hat{u}(a)-\frac{m}{N}\in [\underline{u}-\frac{1}{N},\underline{u}]$.

To see that this cannot improve the sender's utility beyond the best cheap talk case, suppose a money-burning equilibrium improves on the utility generated by the prior. Utility is maximized when there is at least one message where money isn't burnt, attaining the candidate equilibrium $\overline{u}$. But then every message involving the burning of money must lead to a strictly higher utility. By garbling these messages to move their posteriors towards the prior, the intermediate value theorem ensures that at some point the posterior have a best response leading to sender utility $\overline{u}$, meaning no money need be burnt to send this message.
\end{proof}   
\end{ex}
This construction subsumes the cheap talk equilibria as the special case where $\max_{\nu\in\mathcal{B}}\min \hat{u}^\ast(\nu)\le \min_{\nu\in\mathcal{B}}\max \hat{u}^\ast(\nu)$ and no burning of money is required.

An advantage of money-burning over cheap talk is that it allows persuasion when the sender's payoff is monotonic in the receiver's belief space. For example: a salesperson exists in two possible states: $s_a$ their product is good, $s_n$ their product is bad; and the receiver has two actions: $A$ buy the product, or $\emptyset$ buy nothing. Persuasive cheap talk is impossible in such a model, but money-burning can be quite persuasive in convincing the receiver to purchase the product. 

While many persuasion techniques observed in practice may be described as money burning, it is certainly not as ubiquitous as this reasoning would suggest. The fragility explored in Sections \ref{sec:frag} explains why money-burning may not always be credible, and Section \ref{sec:stab} offers a theoretical explanation as to when money-burning is a credible communication technology.

\subsection{Weak Stabilization}\label{app:weakstab}
We constrast the notion of $\mathscr{O}$-stability given in Definition \ref{def:stab} (which for clarity we will refer to as \emph{strong stability}) with the following weaker notion of stability:
\begin{defn}
    We say an equilibrium $\sigma$ is \emph{weakly-stabilized} by a modification $v$ if there exists Harsanyi stable equilibria $\sigma_\epsilon$ to the game with preference $u+\epsilon v$ such that $\sigma_\epsilon\rightarrow \sigma$.
\end{defn}
If we decompose our sender utility
\begin{equation*}
    u_S(p\rvert \theta,\omega)=u(p)+\epsilon v(p\rvert \theta)+ U_\epsilon (p\rvert \theta,\omega)
\end{equation*}
then strong stability requires that there exists an $N$ such that an equilibrium approximates $\sigma$ whenever $\|U_\epsilon\|_\infty<N\epsilon$ asymptotically. This inequality indicates the degree by which dependency must dominate idiosyncrasy for the desired communication to be an equilibrium.

Weak stability requires merely that there is some strictly positive function $f$ such that an equilibrium approximates $\sigma$ whenever $\|U_\epsilon\|_\infty<f(\epsilon)$ asymptotically. Thus weak-stability may demand that state-dependency dominate idiosyncrasy by an arbitrary degree of magnitude.


We use the following example to illustrate the fundamental difference between candidate equilibria that can only be weakly stabilized and those that may be strongly stabilized:

\begin{ex}[Weak Stabilization]\label{ex:weak}
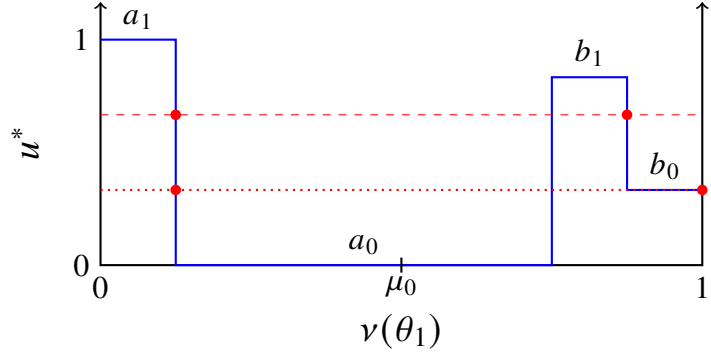
\begin{figure}
    \centering
    \begin{tikzpicture}
    \draw[thick,<->] (0,3.5) -- (0,0) -- (8,0) -- (8,3.5);
    \draw[thick, blue] (0,3) -- (1,3) -- (1,0) -- (6,0) -- (6,2.5) -- (7,2.5) -- (7,1)-- (8,1);
    \draw[red, dashed] (0,2) -- (8,2);
    \draw[red, dotted, thick] (0,1) -- (8,1);
    \draw[thick] (4,0.1) -- (4,-0.1);
    \node[anchor = east] at (0,0) {$0$};
    \node[anchor = east,rotate=90] at (-1,2) {\large$u^\ast$};
    \node[anchor = east] at (0,3) {$1$};
    \node[anchor = north] at (4,0) {$\mu_0$};
    \node[anchor = north] at (4,-0.5) {{\large$\nu(\theta_1)$}};
    \node[anchor = north] at (0,0) {$0$};
    \node[anchor = north] at (8,0) {$1$};
    \fill[red] (8,1) circle (2pt);
    \fill[red] (1,1) circle (2pt);
    \fill[red] (1,2) circle (2pt);
    \fill[red] (7,2) circle (2pt);
    \node[anchor = south] at (3.5,0) {$a_0$};
    \node[anchor = south] at (7.5,1) {$b_0$};
    \node[anchor = south] at (0.5,3) {$a_1$};
    \node[anchor = south] at (6.5,2.5) {$b_1$};
    \end{tikzpicture}
    \caption{The indirect utility considered in Example \ref{ex:weak}.}
    \label{fig:weaku}
\end{figure}
Consider the indirect utility illustrated in blue in Figure \ref{fig:weaku}, similar to Example 2 of \cite{SGGK23}. Utility will be normalized so that modifications only adjust the utility of $b_1,b_0$. There are three separate types of candidate equilibria we will analyze:

\begin{enumerate}
    \item The dotted red line illustrates the unique candidate equilibrium that can be strongly stabilized, which involves a message $m_b$ that induces action $b_0$, and a message $m_a$ that induces the appropriate degree of randomization $p_a$ between $[a_0,a_1]$. 
    \item There is also a range of candidate equilibria that can only be weakly stabilized ---- an example is given by the dashed red line --- that involve a message $m_b$ that induces a randomization $p_b$ strictly between $]b_0,b_1[$, and another message $m_a$ that induces the corresponding degree of randomization $p_a$ between $[a_0,a_1]$.
    \item There are additional candidate equilibria cannot be even weakly stabilized --- these are the candidate equilibria that involve the receiver randomizing between $[a_0,b_1]$ after one message, and $[a_0,a_1]$ or $[b_0,b_1]$ after another message.
\end{enumerate}
In what follows, we label message-action pairs by their action for simplicity.

\textbf{Candidate Equilibrium 1:} The first situation is analyzed in the general case of Section \ref{sec:stab} (as is the inability of the remaining equilibria to be strongly stabilized).

\textbf{Candidate Equilibria 2:} The second situation can be weakly stabilized in the following manner, illustrated in Figure \ref{fig:weakmod}: by shifting the utility of either $b_0,b_1$ --- in this case $b_1$ --- above the normalized line $[a_0,a_1]$ and moving the other below this line, we create an intersection point. This intersection represents a degree of randomization $p_a,p_b$ between the two pairs of actions that is necessary to make the sender in each state indifferent between the messages $m_a,m_b$. Observe that this modification can be interpreted as making one message riskier in one state compared with another (in this case message $m_b$ has a better best outcome and worse worst outcome in state $\theta_1$ than in state $\theta_0$).

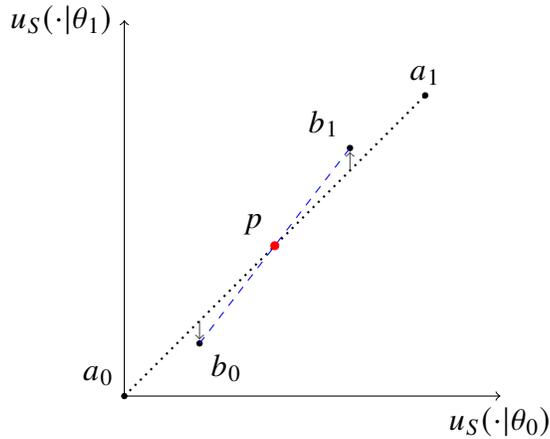
\begin{figure}
    \centering
    \begin{tikzpicture}
        \draw[->,black!70!white] (1,1) -- (1,0.75);
        \draw[->,black!70!white] (3,3) -- (3,3.25);
        \filldraw (4,4) circle (1 pt) node[anchor = south] {\color{black}$a_1$};
        \filldraw (0,0) circle (1 pt) node[anchor = south east] {\color{black}$a_0$};
        \filldraw (3,3.3) circle (1 pt) node[anchor = south east] {\color{black}$b_1$};
        \filldraw (1,0.7) circle (1 pt) node[anchor = north west] {\color{black}$b_0$};
        \draw[dotted, thick] (0,0) -- (4,4);
        \draw[dashed, blue] (1,0.7) -- (3,3.3);
        \node[anchor = east] at (0,5) {$u_S(\cdot \rvert \theta_1)$};
        \node[anchor = north] at (5,0) {$u_S(\cdot \rvert \theta_0)$};
        \filldraw[red] (2,2) circle (1.5 pt) node[anchor = south east] {\color{black}$p$};
        \draw[<->] (0,5) -- (0,0) -- (5,0);
    \end{tikzpicture}
    \caption{An illustration of a modification that can weakly stabilize the candidate equilibrium in Example \ref{ex:weak}. The axes are the utilities in the two states. The dotted black line indicates the utility of mixed actions between $[a_0,a_1]$ (normalized to be state independent), while the dashed blue line indicates the state-dependent preference of mixed actions $[b_0,b_1]$ (the modification is illustrated by the small grey arrows). 
    This creates a point of dual indifference at the intersection of the two lines, determining the degree of receiver randomization after each message (indicated in red).}
    \label{fig:weakmod}
\end{figure}

Given a small degree of sender idiosyncrasy, we can then locally manipulate the two variables $p_a,p_b$, until we get the appropriate degree of mixing from the senders in each state (a two-dimensional problem), to produce the desired posteriors.

The reason this equilibrium is only weakly stabilized is that this degree of randomization $p_a,p_b$ is highly sensitive to the modification. If we adjust the utility of $u_S(b_1\rvert \theta_0)$ slightly (holding everything else constant), the intersection, and thus degree of randomization, will shift a proportionate amount. In the strong stability case, we observe a similar effect when $u_S(b_0\rvert \theta_1)$ is adjusted, however this effect is the same (always a constant of proportionality of $1$), no matter how close we get to state-independence; while in the weak stability case the effect is exacerbated when we are close to state-independence, where we are looking for the intersection of nearly incident lines.

It is also clear why this process does not work to weakly stabilize either of the similar candidate equilibria involving pure actions ($b_1$ or $b_0$): to create a point of dual indifference for this equilibrium requires that the point $b_1$ is incident with the line $[a_0,a_1]$. But then the entire line $[b_0,b_1]$ lies weakly below the line, and subject to a small nudge to the utility of $b_1$ the lines do not intersect at all, and no equilibria with the required posteriors can be maintained.

\textbf{Candidate Equilibria 3:} The third family of candidate equilibria cannot even be weakly stabilized. This is because, in their situation, weak stabilization requires creating dual indifference between two mixed actions whose support only differ in one action. This is illustrated in Figure \ref{fig:noweakmod}. To create dual indifference it is necessary that $b_1$ lies precisely on the normalized (state-independent) line $[a_0,a_1]$. That is, =such communication requires state-independence, which we already know is fragile to idiosyncrasy.
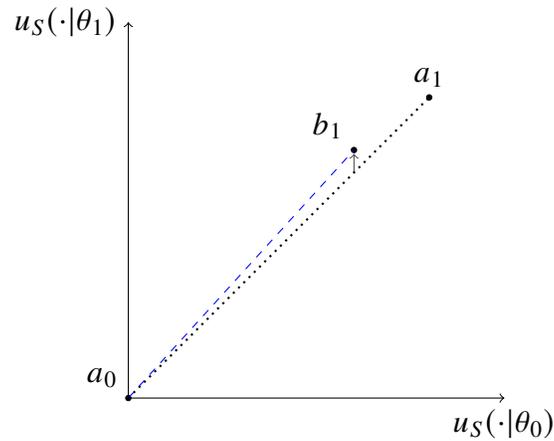
\begin{figure}
    \centering
    \begin{tikzpicture}
        \draw[->,black!70!white] (3,3) -- (3,3.25);
        \filldraw (4,4) circle (1 pt) node[anchor = south] {\color{black}$a_1$};
        \filldraw (0,0) circle (1 pt) node[anchor = south east] {\color{black}$a_0$};
        \filldraw (3,3.3) circle (1 pt) node[anchor = south east] {\color{black}$b_1$};
        \draw[dotted, thick] (0,0) -- (4,4);
        \draw[dashed, blue] (0,0) -- (3,3.3);
        \node[anchor = east] at (0,5) {$u_S(\cdot \rvert \theta_1)$};
        \node[anchor = north] at (5,0) {$u_S(\cdot \rvert \theta_0)$};
        \draw[<->] (0,5) -- (0,0) -- (5,0);
    \end{tikzpicture}
    \caption{An illustration of how the third type of equilibria in Example \ref{ex:weak} require state independence to create a point of dual indifference, and hence are inherently unstable. The black dashed line is the state-independent line (determined by normalization), which $u_S(b_1\rvert s)$ must lie on to create the point of dual indifference. Otherwise (as in the dashed blue line) they only intersect at the same action ($a_0$), and the action is message independent (ie. no persuasion occurs).}
    \label{fig:noweakmod}
\end{figure}
\end{ex}
\paragraph{Four types of complexity}
This example illustrates many of the complexities required and demanded by weak (but not strong) stabilization: (1) a higher degree of complexity in the equilibrium message structure, (2) a complex receiver environment, (3) complex modifications, and (4) (even small) uncertainty plays a role in which equilibria can be stabilized:

\textbf{(1) Message Structure:} 
The presence of cyclicity in the message structure demands more complex indifferences from the sender's perspective (which could be interpreted as more strategic complexity) as well as being in a less informative family of equilibria from the receiver's perspective (acyclic equilibria can be seen as a more informative class). In the above example the strongly stabilized equilibrium is strictly more Blackwell-informative than the weakly stabilized equilibrium.

\textbf{(2) Receiver environment:} 
As the third family shows, not all candidate equilibria can be weakly stabilized: weak stabilization requires the presence of additional actions, and will not differ from strong stability for equilibria where a single pure action is in the support of all actions.

\textbf{(3) Modification:} 
Furthermore, the reason these additional actions are necessary is that the required modifications are slightly more complex: they require adjusting the `riskiness' of a message in a state dependent way. This can be constrasted with strong stabilization where only the state-dependent \emph{attractiveness} of equilibrium messages needs to modified. To observe the complexity of the required modifications, note that action-independent modifications can never produce this effect on riskiness.\footnote{Empathy is also incapable of producing this effect in two-state environments, but may with more states.}

\textbf{(4) Uncertainty:} 
Lastly, this necessity of adjusting the riskiness means that pure actions are typically not present in weakly stabilized equilibria (as shown with the second family).
\vspace{2 pt}

These points do not show that weak stabilization is impossible, but rather that it is distinct from strong stabilization and, in some informal sense, demands more from both the environment and the parties involved.

\subsection{Relations with Mechanism Design}\label{app:CM}
\paragraph{Cyclic Monotonicity $\Leftrightarrow$ Graph Monotonicity}
In mechanism design there exists a concept of cyclic monotonicity that is closely related to our notion of graph monotonicity. The specific variation of cyclic monotonicity most relevant to our problem is the following:
\begin{defn}[$\sigma$-Cyclic Monotonicity]
For a candidate equilibrium $\sigma$,
we say that a modification $v$ is $\sigma$\emph{-cyclically monotone} (or $v\in \CM$) if, for any sequence $(\theta_1,\dots,\theta_{N-1},\theta_N=:\theta_{0})$ with $N\ge 1$, and any $\pi_i\in \sigma(\theta_i)$ we have
\begin{align}\label{eq:CM}\tag{CM}
\sum_{i=1}^{N} v(\pi_{i}\rvert \theta_{i})-v(\pi_{i-1}\rvert \theta_{i})>0.
\end{align}
\end{defn}
This differs from the graphical notion in that the sequence does not have to follow chains of indifferences. As a result this is a stronger condition. However in the situations that we consider in this paper, the notions coincide:
\begin{prop}\label{prop:coinc}
    In general $\CM\subseteq\GCM$, with equality iff one of the following holds:
    \begin{enumerate}
        \item $G(\sigma)$ is cyclic, in which case the sets are empty.
        \item $G(\sigma)$ is a tree, in which case the sets are non-empty.
    \end{enumerate}
\end{prop}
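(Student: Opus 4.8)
The plan is to prove the inclusion $\CM\subseteq\GCM$ outright, and then to obtain the ``iff'' by running through the three structural possibilities for $G(\sigma)$: it contains a cycle, it is a tree, or it is a forest. For the inclusion, note that a path $(\theta_1,\pi_1,\dots,\theta_N,\pi_N)$ on $G(\sigma)$ is, in particular, one of the configurations tested by \eqref{eq:CM}: on such a path every $\pi_i$ is adjacent to $\theta_i$, so $\pi_i\in\sigma(\theta_i)$, and reading $(\theta_1,\dots,\theta_N)$ as the state sequence with $\theta_0:=\theta_N$ and $\pi_0:=\pi_N$, the left side of \eqref{eq:GCM} is exactly the left side of \eqref{eq:CM} for this choice. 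Hence $v\in\CM$ forces every $G(\sigma)$-monotonicity inequality, i.e.\ $\CM\subseteq\GCM$. (Because $N\ge 2$, the actions on the path are distinct and $\pi_N\notin\sigma(\theta_1)$, so this instance of \eqref{eq:CM} is a genuine cycle, not one whose left side vanishes identically.) If $G(\sigma)$ contains a cycle, Proposition~\ref{prop:acyc} gives $\GCM=\emptyset$, whence $\CM\subseteq\GCM=\emptyset$ and the two sets coincide (and are empty); this is case~(1).

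The heart of the argument is the tree case. Here $\GCM\ne\emptyset$ by Proposition~\ref{prop:acyc}, so in view of the inclusion it suffices to show $\GCM\subseteq\CM$ (which also yields $\CM\ne\emptyset$). Fix $v\in\GCM$ and build a potential $\phi$ on the vertices of $G(\sigma)$: root the tree, put $\phi=0$ at the root, and extend along edges by the rule $\phi(\pi)-\phi(\theta)=v(\pi\rvert\theta)$ for every edge $\{\theta,\pi\}$ (state $\theta$, action $\pi$); acyclicity makes this well defined and independent of $v\in\GCM$. The key point is that in a tree there is a unique path between any two vertices, so the paths appearing in \eqref{eq:GCM} are precisely the tree-paths from a state $\theta$ to an action $\pi$ with $\pi\notin\sigma(\theta)$; telescoping the sum in \eqref{eq:GCM} along such a path — each term except the first being a difference of $v$ along edges, hence equal to $\phi(\pi_i)-\phi(\pi_{i-1})$ — collapses it to $\phi(\pi)-\phi(\theta)-v(\pi\rvert\theta)$. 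Therefore $v\in\GCM$ is equivalent to: $v(\pi\rvert\theta)=\phi(\pi)-\phi(\theta)$ whenever $\{\theta,\pi\}$ is an edge, and $v(\pi\rvert\theta)<\phi(\pi)-\phi(\theta)$ whenever it is not.

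From this representation the conclusion follows. For any cyclic state sequence $\theta_1,\dots,\theta_N$ with $\theta_0:=\theta_N$ and any $\pi_i\in\sigma(\theta_i)$ with $\pi_0:=\pi_N$, each summand of \eqref{eq:CM} satisfies
\[
v(\pi_i\rvert\theta_i)-v(\pi_{i-1}\rvert\theta_i)=\bigl(\phi(\pi_i)-\phi(\theta_i)\bigr)-v(\pi_{i-1}\rvert\theta_i)\ \ge\ \phi(\pi_i)-\phi(\pi_{i-1}),
\]
with equality exactly when $\pi_{i-1}\in\sigma(\theta_i)$. Summing telescopes to $0$, so the left side of \eqref{eq:CM} is $\ge 0$, and it is strictly positive unless $\pi_{i-1}\in\sigma(\theta_i)$ for \emph{every} $i$ — in which case $\theta_1-\pi_1-\cdots-\theta_N-\pi_N-\theta_1$ is a closed walk in the tree and the sum is forced to vanish identically, so that no modification could make it positive and such instances lie outside the (strict) scope of \eqref{eq:CM}. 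Hence $v\in\CM$, which gives $\GCM\subseteq\CM$ and therefore $\CM=\GCM\ne\emptyset$; this is case~(2).

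Finally, to see the ``iff'' is sharp I would rule out the remaining possibility, that $G(\sigma)$ is acyclic but disconnected (a forest). Then $\GCM\ne\emptyset$, but every $G(\sigma)$-path lies inside a single connected component, so the value $v(\pi\rvert\theta)$ appears in no $\GCM$-inequality when $\pi$ and $\theta$ lie in different components. Picking states $\theta_1,\theta_2$ in distinct components with $\pi_j\in\sigma(\theta_j)$ (so $\pi_1\ne\pi_2$) and any $v\in\GCM$, one can take $v(\pi_2\rvert\theta_1)$ large enough — still inside $\GCM$ — that the two-cycle instance $\bigl(v(\pi_1\rvert\theta_1)-v(\pi_2\rvert\theta_1)\bigr)+\bigl(v(\pi_2\rvert\theta_2)-v(\pi_1\rvert\theta_2)\bigr)$ of \eqref{eq:CM} is nonpositive, so this $v$ lies in $\GCM\setminus\CM$; thus $\CM\subsetneq\GCM$. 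I expect the tree case to be the main obstacle: pinning down that the $\GCM$-paths on a tree are exactly the state-to-nonadjacent-action paths, so that $\GCM$-monotonicity becomes the potential condition, and then being careful with the degenerate cycles — on which \eqref{eq:CM} cannot hold strictly for any $v$ and so must be read as excluded — when passing from the potential representation to $\CM$.
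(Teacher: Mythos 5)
Your proof is correct and follows essentially the same route as the paper: the cyclic case is dispatched by Proposition~\ref{prop:acyc}, and the tree case rests on the observation that the \eqref{eq:GCM} inequalities generate the \eqref{eq:CM} inequalities on a connected acyclic graph, which your potential-function/telescoping argument makes precise. Your write-up is in fact more complete than the paper's two-sentence sketch, since you also supply the forest counterexample needed for the ``only if'' direction and explicitly flag the degenerate closed walks (on which the sum vanishes identically in a tree) that the strict inequality in \eqref{eq:CM} must be read as excluding.
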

The first property immediately follows from $\GCM $ being empty in such situations, as shown in Proposition \ref{prop:acyc}.

The second fact results from the inequalities of eq. \ref{eq:GCM} forming a basis for the inequalities of eq. \ref{eq:CM} when $G(\sigma)$ is connected.

A slight modification to our proof of Theorem \ref{thm:main} will show that cyclic monotonicity is a necessary condition for sender $\mathscr{O}$-stability, as such it is a `tighter' condition than graph monotonicity. However, in light of Theorem \ref{thm:gen} showing that the situations where these sets differ are fragile and rare, we opt for the simpler concept, which is more relevant to our proof strategies. Furthermore, strong graph-montonicity and support strong monotonicity have little relation to cyclic monotonicity in forest graphs --- it is not clear how one should adjust the usual concept of strong monotonicity from mechanism design to allow for indifferences.

\paragraph{Weak Monotonicity and Strong Graph Monotonicity}
In mechanism design weak monotonicity provides an easy-to-verify necessary condition for incentive compatibility:
\begin{defn}[Weak monotonicity]
    For a candidate equilibrium $\sigma$, we say that a modification $v$ is $\sigma$\emph{-weak monotone} (or $v\in \text{WM}(\sigma)$) if, for any $\theta,\theta'\in\Theta$, and any $\pi\in \sigma(\theta),\pi'\in\sigma(\theta')$ we have
\begin{align}\label{eq:WM}\tag{WM}
v(\pi\rvert \theta)-v(\pi'\rvert \theta)>v(\pi\rvert \theta')-v(\pi'\rvert \theta').
\end{align}
\end{defn}
\begin{prop}
In general $\SLID\subseteq \GCM\subseteq\text{WM}(\sigma)$. However if $\sigma$ is a quasi-interval candidate equilibrium then $\SLID=\GCM=\text{WM}(\sigma)$.
\end{prop}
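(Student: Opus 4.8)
The statement has two parts: the chain $\SLID\subseteq\GCM\subseteq\text{WM}(\sigma)$, which is essentially bookkeeping on top of Propositions \ref{prop:sandwich}, \ref{prop:acyc} and \ref{prop:coinc}; and the collapse to equality for quasi-interval $\sigma$, whose content is the single reverse inclusion $\text{WM}(\sigma)\subseteq\SLID$. I would organize the proof around that split.

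\textbf{The two chained inclusions.} The inclusion $\SLID\subseteq\GCM$ is exactly Proposition \ref{prop:sandwich}: it already shows (SGM) along every path implies (GM) along every path. For $\GCM\subseteq\text{WM}(\sigma)$ I would first note that (WM) is precisely the family of length-two instances of cyclic monotonicity: applying (CM) to the two-cycle $(\theta,\theta')$ with on-path choices $\pi\in\sigma(\theta)$, $\pi'\in\sigma(\theta')$ and adding the two resulting terms gives exactly $v(\pi\rvert\theta)-v(\pi'\rvert\theta)>v(\pi\rvert\theta')-v(\pi'\rvert\theta')$. By Proposition \ref{prop:acyc} a nonempty $\GCM$ forces $G(\sigma)$ acyclic, hence each component is a tree, and on a tree Proposition \ref{prop:coinc} gives $\GCM=\CM$; therefore $\GCM\subseteq\CM\subseteq\text{WM}(\sigma)$ whenever the two states in question lie in a common component — which is automatic for the tree candidate equilibria that are the relevant case (for a genuinely disconnected forest the cross-component two-cycle is the one place where this inclusion has to be read within components). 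Equivalently one can argue directly: telescope the (GM) inequality along the unique tree-path joining the edge $\theta$–$\pi$ to the edge $\theta'$–$\pi'$; the interior terms are differences across states incident to both of a consecutive pair of actions and they cancel against the matching (GM) terms, leaving the (WM) inequality.

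\textbf{The equality for quasi-interval $\sigma$.} Since $\SLID\subseteq\GCM\subseteq\text{WM}(\sigma)$ always holds, it suffices to prove $\text{WM}(\sigma)\subseteq\SLID$ under the quasi-interval hypothesis. The plan has three steps. (i) Extract from quasi-intervality the structural fact that $G(\sigma)$ is a ``double star'': some state $s^{\ast}$ sends every equilibrium message and every other state is a leaf (sends a single message) — equivalently, at most one state-vertex has degree $\ge 2$. (ii) Deduce that every path $(\theta_0,\pi_0,\dots,\theta_N,\pi_N)$ appearing in (SGM) has $N\le 1$, because a path of length $\ge 2$ would require two distinct states each incident to two messages; hence $\SLID$ is cut out purely by the $N=1$ instances of (SGM). (iii) Observe that an $N=1$ instance of (SGM) — for a path $(\theta_0,\pi_0,\theta_1,\pi_1)$, so $\pi_0\in\sigma(\theta_0)\cap\sigma(\theta_1)$ and $\pi_1\in\sigma(\theta_1)$ — is verbatim the (WM) inequality $v(\pi_0\rvert\theta_0)-v(\pi_1\rvert\theta_0)>v(\pi_0\rvert\theta_1)-v(\pi_1\rvert\theta_1)$; therefore any $v$ satisfying all (WM) constraints satisfies all $N=1$ (SGM) constraints, i.e.\ $v\in\SLID$. (As a sanity check in the other direction, a general (WM) inequality between two leaf-states $\theta,\theta'$ of distinct messages — not graph-adjacent — is recovered by summing the two $N=1$ (SGM) inequalities for $(\theta,\pi,s^{\ast},\pi')$ and $(\theta',\pi',s^{\ast},\pi)$, whose $v(\cdot\rvert s^{\ast})$ terms cancel, so nothing is lost in the reduction.)

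\textbf{Main obstacle.} The substantive step is (i): unpacking exactly what ``quasi-interval'' supplies and showing it yields the double-star structure (equivalently, that the relevant (SGM) paths are short). For a bona fide interval equilibrium with a linear threshold order on messages this is transparent — the unique threshold state randomizes over the two blocks and every other state is pinned by its block — but one must be careful that interval structure with three or more overlapping message-blocks does \emph{not} suffice: three consecutive blocks produce an $N=2$ path whose (SGM) inequality genuinely escapes the (WM) system (it is not a nonnegative combination of (WM) inequalities, as one can confirm by an explicit $v$), so the quasi-interval hypothesis must be calibrated to exclude precisely that, and the write-up should make the dependence explicit. The remaining steps are routine manipulation of linear inequalities.
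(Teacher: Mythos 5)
The paper states this proposition without any accompanying proof --- it sits at the end of Appendix \ref{app:CM} with nothing after it --- so there is no argument of the author's to compare yours against; I can only evaluate your proposal directly. The two chained inclusions are handled correctly. $\SLID\subseteq\GCM$ is Proposition \ref{prop:sandwich} verbatim, and your route to $\GCM\subseteq\text{WM}(\sigma)$ --- observe that $\text{WM}(\sigma)$ is exactly the family of two-cycle instances of $\CM$, then invoke $\GCM=\emptyset$ for cyclic graphs (Proposition \ref{prop:acyc}) and $\GCM=\CM$ for trees (Proposition \ref{prop:coinc}) --- is sound. Your parenthetical about forests is a real observation rather than hedging: for a disconnected acyclic $G(\sigma)$ the cross-component instances of (WM) are simply not implied by (GM), so the middle inclusion as literally stated fails there; the paper itself concedes these notions ``have little relation to cyclic monotonicity in forest graphs,'' and it is Theorem \ref{thm:gen}(a) that makes the restriction to trees innocuous.

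For the equality, the one genuine gap is not of your making: ``quasi-interval'' is never defined anywhere in the paper, so your step (i) cannot be carried out except by reverse-engineering what the hypothesis must be, which is what you do. Your reverse-engineering is correct as far as it goes. If at most one state-vertex of $G(\sigma)$ has degree two or more (your double star, i.e.\ the partial-revelation graphs of Figure \ref{fig:2star}), then every (SGM) path has $N=1$ and every (GM) path has $N=2$, both families coincide with the leaf-to-centre instances of (WM), and the leaf-to-leaf instances of (WM) follow by summing two of these, giving $\text{WM}(\sigma)=\GCM=\SLID$. Your negative check is also right and is the reason the hypothesis cannot simply be ``$\succ$-interval'': on a three-message chain $(\theta_1,\pi_1,\theta_2,\pi_2,\theta_3,\pi_3)$ the $N=3$ cycle inequality requires $v(\pi_1\rvert\theta_1)-v(\pi_3\rvert\theta_1)$ to exceed $[v(\pi_1\rvert\theta_2)-v(\pi_2\rvert\theta_2)]+[v(\pi_2\rvert\theta_3)-v(\pi_3\rvert\theta_3)]$, whereas (WM) only delivers the bound $v(\pi_1\rvert\theta_3)-v(\pi_3\rvert\theta_3)$, which (WM) itself forces to be strictly smaller; hence one can satisfy every (WM) inequality while violating that (GM) inequality, and $\GCM\subsetneq\text{WM}(\sigma)$ there. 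Two amendments: the double star is sufficient but not the unique sufficient structure (the single-disclosure graphs of Figure \ref{fig:2complete}, and more generally any tree containing no alternating path on three distinct states and three distinct messages, yield the same collapse because all long paths are again excluded); and since the term is undefined you should state the structural property you actually use as an explicit hypothesis rather than derive it from ``quasi-intervality.'' With that substitution your argument is complete.
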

\end{document}